\title{A unified worst case for classical simplex\\ and policy iteration pivot rules\thanks{A preliminary version~\cite{disser2023unified} of this paper appeared in \emph{Proceedings of the 34th International Symposium on Algorithms and Computation (ISAAC 2023)}.}}
\titlerunning{A unified worst case for classical simplex and policy iteration pivot rules}
\author{Yann Disser}{TU Darmstadt, Germany}{disser@mathematik.tu-darmstadt.de}{https://orcid.org/0000-0002-2085-0454}{}
\author{Nils Mosis}{TU Darmstadt, Germany}{mosis@mathematik.tu-darmstadt.de}{https://orcid.org/0000-0002-0692-0647}{}
\authorrunning{Y. Disser and N. Mosis} 
\keywords{Bland's pivot rule, Dantzig's pivot rule, Largest Increase pivot rule, Markov decision process, policy iteration, simplex algorithm}
\theoremstyle{definition} 
\newtheorem{definition2}[theorem]{Definition}
\newcommand{\N}{\mathds{N}}
\newcommand{\R}{\mathds{R}}
\newcommand{\mdz}{\mathcal{D}_n}
\newcommand{\mbl}{\mathcal{B}_n}
\DeclareMathOperator{\m}{m_1}
\DeclareMathOperator{\lone}{\ell_1}
\DeclareMathOperator{\leave}{leave}
\DeclareMathOperator{\stay}{stay}
\DeclareMathOperator{\enter}{enter}
\DeclareMathOperator{\skipp}{skip}
\DeclareMathOperator{\board}{board}
\DeclareMathOperator{\travel}{travel}
\DeclareMathOperator{\Valsolo}{Val}
\DeclareMathOperator{\Val}{Val_\pi}
\newcommand{\order}{\mathcal{N}_{\mbl}}
\newcommand{\dorder}{\mathcal{N}_{\mdz}}
\newcommand{\Lx}[2]{L(#1,#2)}
\DeclareMathOperator{\B}{B}
\newcommand{\nwstr}[1]{\pi^{#1}}
\newcommand{\vgap}{\vspace{.2em}}
\begin{document}

\maketitle

\begin{abstract}
We construct a family of Markov decision processes for which the policy iteration algorithm needs an exponential number of improving switches with Dantzig's rule, with Bland's rule, and with the Largest Increase pivot rule.
This immediately translates to a family of linear programs for which the simplex algorithm needs an exponential number of pivot steps with the same three pivot rules.
Our results yield a unified construction that simultaneously reproduces well-known lower bounds for these classical pivot rules, and we are able to infer that any (deterministic or randomized) combination of them cannot avoid an exponential worst-case behavior.
Regarding the policy iteration algorithm, pivot rules typically switch multiple edges simultaneously and our lower bound for Dantzig's rule and the Largest Increase rule, which perform only single switches, seem novel.
Regarding the simplex algorithm, the individual lower bounds were previously obtained separately via deformed hypercube constructions.
In contrast to previous bounds for the simplex algorithm via Markov decision processes, our rigorous analysis is reasonably concise.

\end{abstract}

\section{Introduction}

Since the simplex algorithm for linear programming was proposed by Dantzig in 1951~\cite{dantzig1951maximization}, it has been a central question in discrete optimization whether it admits a polynomial time pivot rule.
A positive answer to this question would yield an efficient combinatorial algorithm for solving linear programs, and thus resolve an open problem on Smale's list of mathematical problems for the 21st century~\cite{smale2000mathematical}.
It would also resolve the polynomial Hirsch conjecture~\cite{dantzig1963linear}, which states that every two vertices of every polyhedron with~$n$ facets are connected via a path of~$\mathcal{O}(\mathrm{poly}(n))$ edges.
At this point, the best known pivot rules are randomized and achieve subexponential running times in expectation \cite{friedmann2011subexponential, hansen2015improved, kalai1992subexponential, matouvsek1992subexponential}.

For the most natural, memoryless and deterministic, pivot rules, exponential worst-case examples based on distorted hypercubes were constructed early on~\cite{avis1978notes, goldfarb1979worst, jeroslow1973simplex, klee1972good, murty1980computational}.
Amenta and Ziegler~\cite{amenta1999deformed} introduced the notion of deformed products to unify several of these constructions.
However, while this unification defines a class of polytopes that generalizes distorted hypercubes, it does not yield a unified exponential worst-case construction to exclude all pivot rules based on these deformed products, and neither does it yield new lower bounds for additional pivot rules.

Randomized and history-based pivot rules resisted similar approaches, and it was a major breakthrough in 2011 when Friedmann et al.~were able to prove the first subexponential lower bound for several randomized pivot rules~\cite{friedmann2011exponential, friedmann2011subexponential, hansen2012worst}.
They introduced a new technique based on a connection~\cite{puterman1994markov} between Howard's policy iteration algorithm~\cite{howard1960dynamic} for Markov decision processes (MDPs) and the simplex algorithm for linear programs (LPs).
The same technique was later used to prove exponential lower bounds for history-based pivot rules that had been candidates for polynomial time rules for a long time~\cite{avis2017exponential, disser2023exponential}.
While the approach via MDPs has proven powerful, the resulting analyses are often very technical (the full version of \cite{disser2023exponential} with all details of the proof has 197 pages).

In this paper, we apply the MDP-based technique to classical (memoryless and deterministic) pivot rules and obtain a unified construction that excludes several pivot rules at the same time, and any combination of them, while being relatively simple.

\subparagraph*{Our results.}

We give a unified worst-case construction for the policy iteration algorithm for MDPs that simultaneously applies to three of the most classical pivot rules.
The rigorous analysis of the resulting MDPs is reasonably concise.
We note that the exponential lower bounds for Dantzig's rule and the Largest Increase rule seem novel for the considered version of the policy iteration algorithm, while the result for Bland's rule is known~\cite{melekopoglou1994complexity}.

\begin{restatable}{theorem}{mainthm}\label{thm:main}
    There is a family~$(\mdz)_{n\in\N}$ of Markov decision processes~$\mdz$ of size~$\mathcal O(n)$ such that policy iteration performs~$\Omega(2^n)$ improving switches with Dantzig's rule, Bland's rule, and the Largest Increase pivot rule.
\end{restatable}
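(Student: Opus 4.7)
The plan is to construct $\mathcal{D}_n$ as an MDP whose policies naturally biject with the binary strings in $\{0,1\}^n$, and to set it up so that, starting from the all-zero policy, any of the three pivot rules forces policy iteration to traverse these policies in the order of a binary counter, yielding $2^n$ iterations before reaching the optimal all-one policy. Guided by the action names declared in the preamble ($\board,\travel,\skipp,\leave,\stay,\enter$), I would design $\mathcal{D}_n$ as a linear chain of $n$ \emph{level gadgets}, one per bit. Each level $i\in\{1,\dots,n\}$ contains a decision vertex whose two outgoing actions encode the bit value; an "off" action keeps the walk traveling down the chain towards higher-indexed levels, while an "on" action boards a fixed transportation structure that collects reward before eventually re-entering the chain at a designated re-entry point. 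A high-reward absorbing state at the end of the chain makes the optimal policy the all-one policy, and the reward magnitudes at the level gadgets are chosen to be geometrically increasing in $i$ (e.g.\ proportional to $2^i$ or a similar separation), so that the contribution of a higher-indexed bit always dominates the cumulative contribution of all lower-indexed bits.

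Once $\mathcal{D}_n$ is fixed, the core combinatorial lemma to prove is: if the current policy $\nwstr{k}$ encodes the integer $k\in\{0,1,\dots,2^n-1\}$, then the set of improving switches at $\nwstr{k}$ is exactly the set of "carry chain" switches that implement the increment $k\mapsto k+1$, together possibly with some cosmetic switches at higher levels that become improving only later. Among these, the \emph{lowest-level} improving switch is precisely the one that flips the lowest zero-bit of $k$ (which is what ordinary binary counting does). The argument that this switch carries policy iteration from $\nwstr{k}$ to $\nwstr{k+1}$ under each of the three rules then reduces to a single alignment check: the indices on variables in the linear-program formulation, the local one-step reduced costs, and the global improvement in $\Val$ should all be monotone in the same direction as the level index $i$. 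I would control this by a careful labelling of the LP variables (so Bland's rule picks the lowest-level improving action), and by using the geometric reward separation to guarantee that flipping a higher-level bit always produces a larger reduced cost and a larger total-value gain than flipping a lower one (so Dantzig's rule and the Largest Increase rule both also pick a single switch at the current highest admissible level; because exactly one level is admissible per counter step, this coincides with the carry switch).

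The inductive argument on $k$ then closes the proof: assuming the sequence $\nwstr{0},\nwstr{1},\ldots,\nwstr{k}$ has been produced, I would compute $\Val[\nwstr{k}]$ explicitly using the chain structure, evaluate the one-step improvements $\Valtilde - \Val[\nwstr{k}]$ for each candidate switch $\nwstr{k}\to\tilde{\pi}$, and check that the unique optimizer under each of the three rules is the increment switch producing $\nwstr{k+1}$. Because the policies visited are all distinct and their number is $2^n$, the lower bound $\Omega(2^n)$ follows.

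\textbf{Main obstacle.} The hardest part will be the simultaneous agreement of the three rules: Bland's rule is driven purely by indexing, Dantzig's rule by a local quantity, and Largest Increase by a global quantity, and in general these three pick different improving switches. Overcoming this requires choosing the reward magnitudes, the re-entry topology of the "boarding" structure, and the indexing of LP variables so that at every intermediate policy $\nwstr{k}$ there is effectively only one "natural" improving switch of interest at the current carry level, while all other improving switches are dominated on all three criteria at once. Verifying this domination uniformly over all $k$ is the main technical task; the geometric separation of level rewards is the key device that makes it possible, and tightly estimating the residual value terms that arise from still-zero higher bits will be the calculation that requires care.
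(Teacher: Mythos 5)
Your general skeleton --- a chain of $n$ level gadgets with geometrically scaled rewards, policies in bijection with $n$-bit strings, and an execution that walks through a binary counter --- matches the paper's first-stage construction $\mbl$ quite closely, so the overall intuition is sound. But the mechanism you propose for making all three rules agree has a gap that you yourself call ``the hardest part'' and then do not actually resolve. You want Bland's rule (by indexing) to pick the lowest-level improving switch and simultaneously want Dantzig's and Largest Increase's rules (by geometric reward separation) to pick the highest-level improving switch, and you reconcile the two by asserting that ``exactly one level is admissible per counter step.'' That assertion cannot hold in any binary-counter MDP: an increment that carries across several bits (e.g.\ from $0111$ to $1000$) necessarily produces intermediate policies at which improving switches coexist at several levels, and indeed the paper's analysis of $\mbl$ (Definition~\ref{canonical phases odd} and Lemma~\ref{Bland applies odd can phases}) shows that a single increment requires up to $\Theta(\ell)$ individually chosen switches, several of which are simultaneously improving. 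Once multiple levels are admissible, ``lowest index'' and ``largest reduced cost'' pull in opposite directions, and no choice of rewards alone forces them to agree, because the reduced cost of a switch aggregates rewards across many levels through the value function rather than isolating the local level reward.

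The device the paper uses to close exactly this gap, and that is absent from your plan, is the second-stage construction $\mdz$: every edge $(v,w)$ of $\mbl$ is replaced by a small gadget containing one randomization vertex with a tunable exit probability $p_v\in(0,1]$ depending only on the tail $v$. The key identity (Lemma~\ref{reduced costs Bl Dz}) is $z_{\pi',\mdz}(x_{v,w},y_{v,w}) = p_v\, z_{\pi,\mbl}(v,w)$, i.e.\ the reduced cost in $\mdz$ is the reduced cost in $\mbl$ scaled by a per-vertex factor. Taking $p_v = 2^{-\mathcal{N}_V(v)(n+5)}$ to decay geometrically along the \emph{same} vertex ordering Bland uses (Theorem~\ref{Dantzig exp on D}) forces Dantzig's rule to prefer switches at low-numbered vertices no matter how large the unscaled reduced costs at higher levels are, and a short extra argument (Theorem~\ref{LI exp on D}) carries this over to Largest Increase. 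That per-vertex scaling is precisely what converts the ``tie'' between lowest-index and largest-local-gain into agreement, and it requires introducing randomization vertices; it is not achievable by choosing reward magnitudes in a deterministic chain. To make your plan work you would need to rediscover some analogue of this scaling gadget, at which point you would essentially be reconstructing $\mdz$.
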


In fact, all three pivot rules apply the same set of improving switches with only slight differences in the order in which they get applied.
Because of this, the result still holds if we allow to change pivot rules during the course of the algorithm.

\begin{restatable}{corollary}{corcomb}\label{cor:comb}
	For any (deterministic or randomized) combination of Dantzig's, Bland's, or the Largest Increase rule, the policy iteration algorithm  has an exponential running time.
\end{restatable}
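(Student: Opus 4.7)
The plan is to upgrade the analysis of Theorem~\ref{thm:main} into a potential-function argument that is robust to arbitrary interleavings of the three pivot rules. The paper observes that all three rules perform essentially the same set of improving switches, only in slightly different orders; I want to turn this observation into a statement that any combination of the three rules still needs $\Omega(2^n)$ switches to reach the optimum.

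Concretely, I would aim to identify a function $\Phi$ on policies of $\mdz$ with three properties: $\Phi$ is zero at the initial policy used by policy iteration; $\Phi$ attains value $\Omega(2^n)$ at the unique optimal policy; and for every policy $\pi$ that can arise during a combined run, any improving switch that some rule among Dantzig's, Bland's, and the Largest Increase rule would select at $\pi$ raises $\Phi$ by at most one. The natural candidate for $\Phi$ is the integer encoded by the bit pattern that the $\mdz$ gadget represents as a binary counter; the first two properties should be immediate from the construction used for Theorem~\ref{thm:main}, while the third is the operational form of the claim that the three rules perform essentially the same switches. Once these three properties are established, the corollary follows: any deterministic or randomized combination of the rules needs $\Omega(2^n)$ improving switches to raise $\Phi$ from $0$ to $\Omega(2^n)$, and this bound holds pointwise for every realization of any internal randomness.

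The main technical difficulty is establishing the third property \emph{uniformly}. A combined run can reach policies that no individual rule would visit on its own, and at such a policy the rules may select switches outside the canonical sequence analyzed for a single rule in isolation. I would therefore aim to identify a closed set of policies that contains the initial policy and is invariant under an improving switch of any of the three rules, and verify the potential bound on this set by a case analysis that reuses the reduced-cost and value-difference calculations already performed for Theorem~\ref{thm:main}. I expect the case analysis to remain concise, since the selected switches and their effects on the binary-counter encoding have already been characterized per rule; the new content is just the uniform check that no switch from any of the three selection criteria can make more than unit progress on $\Phi$.
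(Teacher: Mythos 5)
Your plan has the right spirit—you correctly diagnose that the work lies in showing that a \emph{combined} run cannot escape the policies analyzed for the individual rules—but both the specific potential you propose and the overall framework have problems, and the paper resolves the issue in a different, more direct way.

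First, the concrete potential you suggest does not satisfy your own third property. If $\Phi(\pi)$ is the integer encoded by the active $\enter$-gadgets, then a single improving switch in a policy $\pi_x$ with $\ell = \ell_0(x)$ (the third canonical phase) activates $\enter(\ell)$, raising $\Phi$ by $2^{\ell-1}$, which is far more than one when $\ell > 1$. Worse, $\Phi$ is not monotone on the canonical phases at all: it overshoots to $x + 2^{\ell-1}$ and then descends back down to $x+1$ as the lower $\enter$-gadgets are deactivated in phases $4$ and $5$. So even on the single-rule trajectory, the binary-counter potential is the wrong object. A potential that does work—something like ``index in the Bland trajectory of $\mbl$ after collapsing the gadget sub-steps''—already presupposes that the combined run tracks the Bland trajectory, which is precisely the thing that needs to be proved.

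Second, and more importantly, the piece you flag as the main technical difficulty (proving that an arbitrary interleaving cannot leave a controlled set of policies) is exactly what the paper addresses, but not via a potential. The paper isolates two structural conditions on pivot rules, (a) and (b) in Lemma~\ref{pr conditions}, and proves (Theorems~\ref{Bland exp on D}, \ref{Dantzig exp on D}, \ref{LI exp on D}) that Bland, Dantzig, and Largest Increase each satisfy them \emph{at every step of any run that satisfies them so far}. Because (a) and (b) constrain the choice made in each single iteration, a run that interleaves the three rules also satisfies (a) and (b): whichever rule is consulted at the current policy, its choice obeys the conditions. Lemma~\ref{pr conditions} then shows that any pivot rule satisfying (a) and (b) visits, up to a benign reordering of three gadget sub-switches, the twin of every policy in the Bland trajectory on $\mbl$, which gives the $\Omega(2^n)$ bound directly. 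This is why the paper's proof of the corollary is just one sentence at the end of Section~4; the corollary falls out of the per-step form of the conditions, no invariant set or potential function is needed. To repair your proposal you would essentially have to rediscover (a) and (b) in order to identify the ``closed set of policies'' you mention, at which point the potential becomes superfluous.
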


A well-known connection between policy iteration and the simplex method, allows to immediately translate our result to the simplex algorithm with the same pivot rules.
In particular, we obtain an exponential lower bound construction that holds even if, in every step, the entering variable is selected independently according to Dantzig's rule, Bland's rule, or the Largest Increase pivot rule, i.e., even if we change pivot rules during the course of the algorithm.
In other words, we obtain a lower bound for a family of pivot rules that results from combining these three rules.

\begin{restatable}{corollary}{cormain}\label{cor:main}
	There is a family~$(\mathcal{L}_n)_{n\in\N}$ of linear programs~$\mathcal{L}_n$ of size~$\mathcal{O}(n)$ such that the simplex algorithm performs~$\Omega(2^n)$ pivot operations for any (deterministic or randomized) combination of Dantzig's, Bland's, or the Largest Increase pivot rule.
\end{restatable}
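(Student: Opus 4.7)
The plan is to invoke the standard primal--dual correspondence (due to d'Epenoux and Puterman) between MDPs and linear programs, under which the policy iteration algorithm on an MDP translates step-by-step into the simplex method on an associated LP. For each $\mdz$ from Theorem~\ref{thm:main}, I take $\mathcal{L}_n$ to be the LP whose variables are the discounted occupation measures of the state--action pairs, whose objective collects expected rewards, and whose constraints are the flow-conservation equations. Bases of $\mathcal{L}_n$ are in bijection with policies of $\mdz$: a basis contains exactly one action-variable per state, namely the action played by the corresponding policy. Adjacent bases differ in a single basic variable, which corresponds to changing the action at a single state, i.e., to a single improving switch in $\mdz$.

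The first key step is to verify that the reduced cost of an entering variable at the basis corresponding to a policy $\pi$ equals, up to a fixed positive factor, the appeal $Q^\pi(s,a)-V^\pi(s)$ of the associated switch in $\mdz$. This is well known and immediately implies that the set of entering variables with positive reduced cost at this basis is precisely the set of improving switches of $\pi$. Moreover, the change in LP objective induced by such a pivot equals the gain in the expected total reward of the policy, so the Largest Increase rule on $\mathcal{L}_n$ coincides with the Largest Increase rule on $\mdz$. Dantzig's rule matches analogously, and Bland's rule matches once the LP variables are ordered consistently with the ordering used for switches in the proof of Theorem~\ref{thm:main}.

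With the bijection in place, any (deterministic or randomized) combination of the three rules applied to $\mathcal{L}_n$ induces the same combination of rules on $\mdz$, and the trajectory of bases visited by the simplex algorithm is precisely the trajectory of policies visited by policy iteration. Hence the $\Omega(2^n)$ lower bound from Corollary~\ref{cor:comb} transfers verbatim. Since $\mdz$ has size $\mathcal{O}(n)$, so does $\mathcal{L}_n$. The main obstacle is the bookkeeping: making sure the orientation (primal vs.\ dual, maximization vs.\ minimization), the sign conventions, and in particular the variable ordering for Bland's rule line up so that each of the three pivot rules on the simplex side literally matches the corresponding switching rule on the MDP side. Once this alignment is pinned down, the corollary follows with essentially no further computation.
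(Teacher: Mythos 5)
Your approach is the same one the paper uses: route the lower bound through the standard policy-iteration/simplex correspondence, which the paper packages as Theorem~\ref{thm:the connection} (citing~\cite{friedmann2011exponential}) rather than re-deriving; the paper then simply combines it with the bound from Corollary~\ref{cor:comb}. There is, however, one genuine omission in your sketch. Theorem~\ref{thm:the connection} is not unconditional: it requires both that the initial policy be weak unichain and that there exist a weak unichain optimal policy for the MDP. You never verify either hypothesis for $\mdz$ with the starting policy $\pi_0'$, and this is not automatic — it is exactly what the paper supplies via Observation~\ref{pi0 is wu for B} (canonical policy $\pi_0$ is weak unichain for $\mbl$), Lemma~\ref{opt for B} (an explicit weak unichain optimal policy $\pi_*$ for $\mbl$), and Observation~\ref{opt for D} (twin policies preserve weak unichain and optimality when passing from $\mbl$ to $\mdz$). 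Without these the basis-to-policy bijection that you describe can break down, since the correspondence presupposes that all policies visited are weak unichain (Theorem~\ref{thm: PI visits only wu}). A further small imprecision: you say ``discounted occupation measures,'' but the paper works with the expected total reward criterion with a sink, so the relevant LP uses (undiscounted) expected occupation measures; the phrasing should be adjusted so the objective and constraint matrix match the total-reward setting actually in play. The core of your argument — that reduced costs, objective increments, and variable orderings all align so that each pivot rule on the LP side mimics the MDP side, and hence the $\Omega(2^n)$ bound transfers — is the intended one; it just needs the weak unichain verification to be a complete proof.
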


\subparagraph*{Related work.}

Policy iteration for MDPs has been studied extensively for a variety of pivot rules.
In its original version~\cite{howard1960dynamic}, the algorithm applies improving switches to the current policy in all states simultaneously in every step.
Fearnley~\cite{Fearnley10} showed an exponential lower bound for a greedy pivot rule that selects the best improvement in every switchable state.
In this paper, we focus on pivot rules that only apply a single switch in each iteration.
Most of the MDP constructions for randomized or history-based pivot rules \cite{avis2017exponential,disser2023exponential,friedmann2011subexponential,hansen2012worst} consider this case, and Melekopoglou and Condon \cite{melekopoglou1994complexity} gave exponential lower bounds for several such deterministic pivot rules.
We emphasize that their constructions already include an exponential lower bound for Bland's rule~\cite{bland1977new}.
Since policy iteration is traditionally considered with simultaneous switches, to the best of our knowledge, no exponential lower bounds are known for Dantzig's rule~\cite{dantzig1951maximization} and the Largest Increase rule~\cite{dantzig1963linear} in the setting of single switches.

There is a strong connection between policy iteration and the simplex algorithm, which, under certain conditions (see below), yields that worst-case results for policy iteration carry over to the simplex method~\cite{puterman1994markov}.
This connection was used to establish subexponential lower bounds for randomized pivot rules, namely Randomized Bland~\cite{hansen2012worst} and Random-Edge, RaisingTheBar and Random-Facet~\cite{friedmann2011subexponential}.
It also lead to exponential lower bounds for history-based rules, namely Cunningham's rule~\cite{avis2017exponential} and Zadeh's rule~\cite{disser2023exponential}.
Conversely, lower bounds for the simplex algorithm with classical pivot rules were obtained via deformed hypercubes~\cite{amenta1999deformed} and do not transfer to MDPs.
Such results include lower bounds for Dantzig's rule~\cite{klee1972good}, the Largest Increase rule~\cite{jeroslow1973simplex}, Bland's rule~\cite{avis1978notes}, the Steepest Edge rule~\cite{goldfarb1979worst}, and the Shadow Vertex rule~\cite{murty1980computational}.
We provide an alternate lower bound construction for the first three of these rules via a family of MDPs.
As far as we can tell, as a side product, this yields the first exponential lower bound for policy iteration with Dantzig's rule and the Largest Increase rule.

While it remains open whether LPs can be solved in strongly polynomial time, there are several, both deterministic \cite{karmarkar1984new, khachiyan1980polynomial} and randomized \cite{bertsimas2004solving, dunagan2004simple, kelner2006randomized}, algorithms that solve LPs in weakly polynomial time.
A (strongly) polynomial time pivot rule for the simplex algorithm would immediately yield a strongly polynomial algorithm.

There have been different attempts to deal with the worst-case behavior of the simplex method from a theoretical perspective.
For example, the excessive running time was justified by showing that the simplex algorithm with Dantzig's original pivot rule is \emph{NP-mighty} \cite{disser2018simplex}, which means that it can be used to solve NP-hard problems.
This result was subsequently strengthened by establishing that deciding which solution is computed and whether a given basis will occur is PSPACE-complete~\cite{adler2014simplex, fearnley2015complexity}.
Recently, Disser and Mosis~\cite{disser2025unconditional} proved an exponential lower bound on the running time of the active-set method, a natural generalization of the simplex method to non-linear objectives, which holds \emph{for all pivot rules}.
The approach was refined to yield an unconditional lower bound for the active-set method in convex quadratic maximization~\cite{bach2025unconditionallowerboundactiveset}.
On the positive side, there are different results explaining the efficiency of the simplex method in practice, such as average-case analyses~\cite{adler1987simplex, borgwardt1982average, todd1986polynomial}.
Spielman and Teng \cite{spielman2004smoothed} introduced smoothed analysis as a way of bridging the gap between average-case and worst-case analysis.
They showed that the simplex algorithm with the shadow vertex pivot rule \cite{gass1955computational} has a polynomial smoothed complexity, and their results were further improved later \cite{dadush2018friendly, deshpande2005improved, huiberts2023upper, vershynin2009beyond}.

Another approach to derive stronger lower bounds on pivot rules is to consider combinatorical abstractions of LPs, such as Unique Sink Orientations (USOs)~\cite{gartner2006linear}. 
There is still a large gap between the best known deterministic algorithm for finding the unique sink, which is exponential~\cite{szabo2001unique}, and the almost quadratic lower bound~\cite{schurr2004finding}.
Considering randomized rules, the Random-Facet pivot rule, which is the best known simplex rule~\cite{hansen2015improved}, is also the best known pivot rule for acyclic USOs~\cite{gartner2002random}, achieving a subexponential running time in both settings.

\section{Preliminaries}
\subsection*{Markov Decision Processes}\label{sectionMDPs}

A \emph{Markov decision process} is an infinite duration one-player game on a finite directed graph~$G=(V_A,V_R,E_A,E_R,r,p)$.
The vertex set~$V=V_A\cup V_R$ of the graph is divided into \emph{agent vertices}~$V_A$ and \emph{randomization vertices}~$V_R$. 
Every agent edge~$e\in E_A\subseteq V_A \times V$ is assigned a \emph{reward}~$r(e)\in\R$, while every randomization edge~$\hat{e}\in E_R\subseteq V_R\times V$ is assigned a \emph{transition probability}~$p(\hat{e})\in[0,1]$. 
Outgoing transition probabilities add to one in every randomization vertex.

A process starts in an arbitrary starting vertex. If this is an agent vertex, the agent moves along one of the outgoing edges of this vertex (we assume that all vertices have at least one outgoing edge) and collects the corresponding reward. Otherwise, it gets randomly moved along one of the outgoing edges according to the transition probabilities. The process continues in this manner ad infinitum.

An agent vertex~$s\in V_A$ whose only outgoing edge is a self-loop with reward zero is called \emph{sink} of~$G$ if it is reachable from all vertices. 
A \emph{policy} for~$G$ is a function~$\pi \colon V_A\to V$ with~$(v,\pi(v))\in E_A$ for all~$v\in V_A$, determining the behavior of the process in agent vertices.
A policy~$\pi$ for~$G$ is called \emph{weak unichain} if~$G$ has a sink~$s$ such that~$\pi$ reaches~$s$ with a probability of one from every starting vertex.

The \emph{value} of a vertex~$v$ w.r.t.\ a policy~$\pi$ for a Markov decision process~$G$ is given by the expected total reward that the agent collects with policy~$\pi$ when the process starts in~$v$. 
More formally, the value function~$\Valsolo_{\pi,G}\colon V\to \R$ is defined by the following system of Bellman \cite{bellman1957dynamic} equations
    \begin{align*}
        \Valsolo_{\pi,G} (u) & = \begin{cases} \ r((u,\pi(u)))+\Valsolo_{\pi,G} (\pi(u)), &\text{if } u\in V_A, \\ \ \sum\limits_{v\in \Gamma^+(u)}p((u,v))\Valsolo_{\pi,G} (v), &\text{if } u\in V_R,\end{cases}
    \end{align*}
together with~$\Valsolo_{\pi,G}(s)=0$ if~$G$ has a sink~$s$.
The policy~$\pi$ is optimal (w.r.t.\ the \emph{expected total reward criterion}) if~$\Valsolo_{\pi,G}(v)\geq\Valsolo_{\tilde\pi,G}(v)$ for all~$v\in V_A$ and all policies~$\Tilde{\pi}$ for~$G$.
Whenever the underlying process~$G$ is clear from the context, we write~$\Val$ instead of~$\Valsolo_{\pi,G}$.

We say that the agent edge~$(u,v)\in E_A$ is an \emph{improving switch} for the policy~$\pi$ for process~$G$ if it satisfies~$z_{\pi,G}(u,v)\coloneqq r((u,v))+\Valsolo_{\pi,G}(v)-\Valsolo_{\pi,G}(u)>0$, where~$z_{\pi,G}(u,v)$ are the \emph{reduced costs} of~$(u,v)$ with respect to~$\pi$.
Again, we usually write~$z_\pi$ instead of~$z_{\pi,G}$.

If we \emph{apply} an improving switch~$s=(u,v)\in E_A$ to a policy~$\pi$, we obtain a new policy~$\pi^s$ which is given by~$\pi^s(u)= v$ and~$\pi^s(w)=\pi(w)$ for all~$w\in V_A\setminus \{u\}$. The improving switch~$s$ increases the value of~$u$ without decreasing the value of any other vertex.

\subsection*{Policy Iteration for Markov Decision Processes}

Howard's \cite{howard1960dynamic} policy iteration algorithm receives as input a finite Markov decision process~$G$ and a weak unichain policy~$\pi$ for~$G$. It then iteratively applies a set of improving switches to the current policy until there are none left. In the remainder of this paper, we consider a version of this algorithm that applies a single switch in every iteration, cf. Algorithm~\ref{alg:PI}. Due to monotonicity of the vertex values, this procedure visits every policy at most once. As there are only finitely many policies, the algorithm thus terminates after a finite number of iterations for every initial policy. 

~

\begin{algorithm}[H]
    \label{alg:PI}\caption{$\textsc{PolicyIteration}(G,\pi)$}
    \vgap
    \textbf{input: }a weak unichain policy~$\pi$ for a Markov decision process~$G$
    \vgap
    \hrule  
    \vgap
    \While{$\pi$ admits an improving switch}
    {
       ~$\bar s\gets \text{improving switch for } \pi$\;
        \vgap
       ~$\pi\gets \nwstr{\bar s}$
    }
    \textbf{return}~$\pi$
\end{algorithm}

~

We know that the policy iteration algorithm returns an optimal policy if there is an optimal policy which is weak unichain.

\begin{theorem}[\cite{friedmann2011exponential}]\label{thm: PI visits only wu}
    Let~$\pi$ be a weak unichain policy for a Markov decision process~$G$. If~$G$ admits a weak unichain, optimal policy, then~$\textsc{PolicyIteration}(G,\pi)$ only visits weak unichain policies and returns an optimal policy w.r.t.\ the expected total reward criterion.
\end{theorem}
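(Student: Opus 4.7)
The plan is to show by induction on the iterations of Algorithm~\ref{alg:PI} that every visited policy is weak unichain and that the terminal policy is optimal. The inductive step reduces to the following preservation lemma: if~$\pi$ is weak unichain and $s=(u,v)$ is an improving switch, then~$\pi^s$ is weak unichain. Once this is established, termination after finitely many iterations is immediate from strict monotonicity of values combined with the finiteness of the policy set, and optimality of the terminal policy follows from the absence of improving switches by a standard Bellman comparison argument that uses the hypothesis that some weak unichain optimal policy exists.

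To prove the preservation lemma I would argue by contradiction. Suppose~$\pi^s$ is not weak unichain, so the Markov chain induced by~$\pi^s$ has a recurrent class~$C$ avoiding the sink. Since~$\pi^s$ and~$\pi$ coincide at every agent vertex except~$u$, and since~$\pi$ reaches the sink from every vertex almost surely, the class~$C$ must contain~$u$, and hence also $v=\pi^s(u)$. Trajectories starting at~$v$ follow identical transitions under~$\pi$ and under~$\pi^s$ until the first visit to~$u$. Because $v\in C$ and $C$ avoids the sink, trajectories under~$\pi^s$ started at~$v$ never reach the sink; therefore under~$\pi$ the process almost surely hits~$u$ before the sink when started at~$v$.

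Let~$R_0$ denote the expected reward accumulated from~$v$ until the first visit of~$u$ under~$\pi$. By the strong Markov property, the Bellman equations then give $\Valsolo_{\pi,G}(v)=R_0+\Valsolo_{\pi,G}(u)$, so the improving switch condition $r((u,v))+\Valsolo_{\pi,G}(v)>\Valsolo_{\pi,G}(u)$ rearranges to $r((u,v))+R_0>0$. But $r((u,v))+R_0$ is precisely the expected reward of one excursion $u\to v\to\cdots\to u$ under~$\pi^s$, and such excursions occur infinitely often almost surely when starting at~$u$ under~$\pi^s$. Hence $\Valsolo_{\pi^s,G}(u)=+\infty$, contradicting the hypothesis that some weak unichain (and therefore finite-valued) optimal policy dominates every other policy.

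For the optimality claim, termination yields a weak unichain policy~$\pi^*$ with $z_{\pi^*,G}(u,v)\leq 0$ for every agent edge. A standard telescoping argument along the trajectories of any weak unichain policy~$\Tilde{\pi}$ then gives $\Valsolo_{\pi^*,G}\geq\Valsolo_{\Tilde{\pi},G}$ componentwise, so~$\pi^*$ matches the hypothesized optimum and is itself optimal. The main obstacle is the first-passage decomposition in the preservation lemma: since $\Valsolo_{\pi^s,G}$ is not a priori known to be finite, one must argue via excursion-based expected rewards rather than work directly with the value function under~$\pi^s$.
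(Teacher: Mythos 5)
The paper does not prove this theorem: it is imported verbatim as a cited result from Friedmann, Hansen, and Zwick~\cite{friedmann2011exponential}, so there is no internal proof in this manuscript to compare against. Evaluating your proposal on its own merits, it is a correct sketch of the standard argument. The reduction to the preservation lemma is right, and the contradiction argument is sound: a recurrent class $C$ of $\pi^s$ avoiding the sink must contain $u$ (else $C$ is closed under $\pi$ too, contradicting that $\pi$ is weak unichain), hence $v$; the trajectories under $\pi$ and $\pi^s$ from $v$ agree up to the first visit to $u$, and since the $\pi^s$-trajectory stays in $C$ forever, the weak unichain property of $\pi$ forces the $\pi$-trajectory to hit $u$ almost surely before the sink, and with finite expected hitting time and reward $R_0$ because $\pi$ induces an absorbing chain. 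The identity $\Valsolo_{\pi,G}(v)=R_0+\Valsolo_{\pi,G}(u)$ then turns the improving-switch inequality into $r((u,v))+R_0>0$, and the positive expected per-excursion reward drives the expected total reward from $u$ under $\pi^s$ to $+\infty$, contradicting the existence of a weak unichain (hence finite-valued) optimal policy. You correctly flag the one genuine subtlety: since $\pi^s$ is hypothetically non-weak-unichain, $\Valsolo_{\pi^s,G}$ is not defined via the paper's Bellman system, so the final contradiction must be phrased in terms of the diverging expected reward over $n$ excursions (which exceeds $\Valsolo_{\pi^*,G}(u)$ for $n$ large) rather than literally as ``$\Valsolo_{\pi^s,G}(u)=+\infty$.'' With that phrasing adjustment, the argument is complete, and the closing optimality claim via telescoping against the hypothesized weak unichain optimum is the standard policy-improvement argument.
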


In this paper, we consider the following three \emph{pivot rules}, i.e., rules that determine the choice of~$\textsc{PolicyIteration}(G,\pi)$ in each iteration:
\begin{itemize}
	\item \emph{Bland's pivot rule} assigns a unique number to every agent edge of~$G$. Then, in every iteration, it chooses the improving switch with the smallest number. 

	\item \emph{Dantzig's pivot rule} chooses an improving switch~$\bar s$ maximizing the reduced costs~$z_{\pi}(\bar s)$. 

	\item The \emph{Largest Increase rule} chooses an improving switch~$\bar s$ maximizing~$\sum_{v\in V_A}\Valsolo_{\nwstr{\bar s}}(v)$. 
\end{itemize}

\subsection*{A Connection between Policy Iteration and the Simplex Method}

Given a Markov decision process, we can formulate a linear program such that the application of the simplex method is in some sense equivalent to the application of policy iteration.
We refer to~\cite{friedmann2011exponential} for more details and the derivation of the following result.

\begin{theorem}[\cite{friedmann2011exponential}]\label{thm:the connection}
    Let~$\pi$ be a weak unichain policy for a Markov decision process~$G$. Assume that there is an optimal, weak unichain policy for~$G$ and that~$\textsc{PolicyIteration}(G,\pi)$ with a given pivot rule takes~$N$ iterations. Then, there is an LP of linear size such that the simplex algorithm with the same pivot rule takes~$N$ iterations.
\end{theorem}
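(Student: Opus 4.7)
The plan is to use the standard Manne--d'Epenoux linear programming formulation of MDPs and to exhibit a correspondence between policies and dual bases that preserves improving switches as well as the three pivot-rule choices.

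First, I would set up the primal LP with one variable $V(u)$ per vertex, objective $\min \sum_{u \in V_A} V(u)$, agent-edge inequalities $V(u) \geq r((u,v)) + V(v)$ for each $(u,v) \in E_A$, randomization equalities $V(u) = \sum_{v \in \Gamma^+(u)} p((u,v)) V(v)$ for each $u \in V_R$, and the sink condition $V(s) = 0$. This LP has $|V|$ variables and $|E_A| + |V_R| + 1$ constraints, hence has size linear in that of $G$. Passing to the dual, a basic feasible solution corresponds to selecting exactly one outgoing agent edge per agent vertex, i.e., to a policy $\pi$; under the weak unichain assumption, the $V$-values recovered via complementary slackness coincide with $\Val$.

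Second, I would establish the key algebraic correspondence: the reduced cost of a non-basic dual variable indexed by an agent edge $(u,v) \notin \pi$ equals exactly $z_\pi(u,v) = r((u,v)) + \Val(v) - \Val(u)$. Hence a dual pivot on $(u,v)$ is eligible iff $(u,v)$ is an improving switch for $\pi$, and such a pivot replaces the currently basic outgoing edge of $u$ with $(u,v)$, producing the policy $\nwstr{(u,v)}$. Theorem~\ref{thm: PI visits only wu} ensures that every intermediate policy visited by $\textsc{PolicyIteration}(G,\pi)$ is weak unichain, so this correspondence persists for all $N$ iterations and the simplex algorithm executes the same $N$ pivot steps on the dual.

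Third, I would match the three pivot rules across the two settings: Bland's rule agrees by transporting the edge numbering to the dual-variable numbering; Dantzig's rule agrees because both variants maximize $z_\pi(\bar s)$; and the Largest Increase rule agrees because the change in dual objective upon pivoting on $(u,v)$ equals $\sum_{w \in V_A}\bigl(\Valsolo_{\nwstr{(u,v)}}(w) - \Val(w)\bigr)$, so maximizing it coincides with maximizing $\sum_{w \in V_A}\Valsolo_{\nwstr{\bar s}}(w)$. The main obstacle I anticipate is degeneracy: ties in reduced costs or objective gains, together with the basis adjustments forced by the $V(s)=0$ constraint, could in principle decouple a simplex pivot from a single policy switch. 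The weak unichain property preserved along the trajectory by Theorem~\ref{thm: PI visits only wu} is precisely what keeps the dual basis clean enough for the two trajectories to stay in lockstep throughout.
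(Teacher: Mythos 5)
The paper does not prove this theorem itself; it is cited from Friedmann~\cite{friedmann2011exponential}, with the text only recording the salient facts (one LP variable per agent edge, matching reduced costs, objective equal to the sum of vertex values, and non-degeneracy). Your sketch does follow the standard Manne/d'Epenoux route that Friedmann's derivation uses, so the overall approach is appropriate, but as written there are two genuine gaps.

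First, the claim that a basic feasible solution of the dual (flow) LP corresponds to choosing exactly one outgoing agent edge per agent vertex, and hence to a policy, is the crux of the whole result and you merely assert it. One must actually show that the chosen columns are linearly independent and that the induced flows are non-negative precisely when the policy is weak unichain -- a policy whose chosen edges form a cycle avoiding the sink does \emph{not} yield a feasible basis. Without this, the stated bijection between policies and bases, and hence the lockstep between the two algorithms, is unproved.

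Second, your treatment of degeneracy is the wrong way around. You present degeneracy as a looming obstacle that the weak unichain property somehow ``keeps at bay.'' The correct statement, which the paper explicitly records right after the theorem, is that the LP is in fact non-degenerate: for a weak unichain basis, every basic flow variable is strictly positive (every edge chosen by the policy is traversed with positive expected frequency from the uniform start distribution). This positivity is what needs to be proved; it is not a side effect of ``keeping the basis clean'' and it is not something one can anticipate and wave away. Non-degeneracy is also what guarantees that each simplex pivot strictly improves the objective and therefore corresponds to exactly one improving switch, so omitting it leaves the $N$-iteration count unjustified.
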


In terms of the simplex method, Bland's pivot rule chooses the entering variable of smallest index~\cite{bland1977new}, Dantzig's rule chooses an entering variable maximizing the reduced costs~\cite{dantzig1951maximization}, and the Largest Increase rule greedily maximizes the objective function value.

The linear program in the previous theorem has one variable for every agent edge of the Markov decision process such that the reduced costs of a given edge equal the reduced costs of the corresponding variable, and the objective function equals the sum over all vertex values as given in the Largest Increase rule for policy iteration~\cite{avis2017exponential,disser2023exponential,hansen2012worst}.
Therefore, the choices of each pivot rule in the two settings are consistent.

Additionally, we want to mention that the linear program from Theorem~\ref{thm:the connection} is always non-degenerate.
Therefore, we cannot reduce the number of required iterations on these programs by combining a given pivot rule with the Lexicographic pivot rule \cite{dantzig1955generalized}.

\subsection*{Notation}

Let~$n\in\N$ be fixed.
We write~$[n]=\{1,2,\dots,n\}$ and~$[n]_0=\{0,1,\dots,n\}$.
Then, the set of all numbers that can be represented with~$n$ bits is~$[2^n-1]_0$.

For every~$x\in[2^n-1]_0$ and $i\in[n]$, let~$x_i$ denote the~$i$-th bit of~$x$, i.e.,~$x=\sum_{i\in[n]}x_i2^{i-1}$, and let~$\Lx ix=\max\{j\in[i-2]\mid x_j=1\text{ or } j=1\}$ for~$i\geq 3$.
Finally, for~$x\in[2^n-1]$, we denote the \emph{least significant set bit}  of~$x$ by~$\ell_1(x)=\min\{i\in[n]:x_i= 1\}$, and the \emph{most significant set bit} of~$x$ by~$\m(x)=\max\{i\in[n]:x_i= 1\}$.

Let~$G=(V_A,V_R,E_A,E_R,r,p)$ be a Markov decision process. 
For~$v\in V_A\cup V_R$, we write~$\Gamma^+_G(v)=\{w\in V_A\cup V_R\colon(v,w)\in E_A\cup E_R\}$.
If the underlying process is clear from the context, we just write~$\Gamma^+(v)$.

\section{An Exponential Lower Bound for Bland's pivot rule}
\label{sec:Bland}

In this section, we consider a family~$(\mbl=(V_{\mbl},E_{\mbl},r_{\mbl}))_{n\in\N}$ of Markov decision processes, which do not involve any randomization.
Consider Figure~\ref{fig:subfig1} for a drawing of~$\mathcal{B}_4$.
Every process~$\mbl$ consists of~$n$ separate levels, together with a global \emph{transportation vertex}~$t$, a sink~$s$, and a \emph{dummy vertex}~$d$.
Each level~$\ell\in[n]$ comprises two vertices, called~$a_\ell$ and~$b_\ell$.
For convenience, we sometimes denote the sink by~$a_{n+1}$ and the dummy vertex by~$b_{n+1}$.

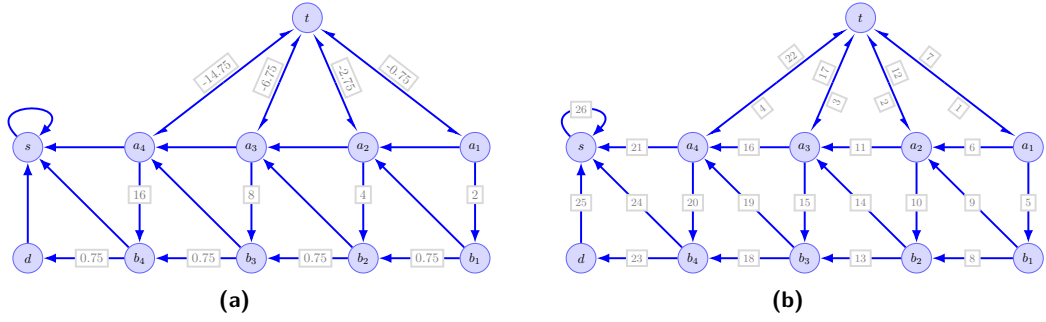
\begin{figure}
  \centering
  \begin{subfigure}{0.48\textwidth}
    \centering
    \captionsetup{justification=centering}
    \resizebox{\linewidth}{!}{%
    \begin{tikzpicture}[node distance = 3cm,
      p0/.style={circle, draw=blue!60, fill=blue!15, thick, minimum size=8mm},
      bl/.style={fill=white,draw=gray!30,text=gray},
      every path/.style={->,draw=blue,line width=1.5pt,{>=Latex}}]
    
      \node[p0] (A) {$a_1$};
      \node[p0,below of=A] (B) {$b_1$};
      \draw (A) -- node [bl,pos=0.4] {2} (B);
      
      \node[p0,left of=A] (C) {$a_2$};
      \node[p0,left of=B] (D) {$b_2$};
      \draw (A) -- (C);
      \draw (B) -- node [bl,pos=0.4] {0.75} (D);
      \draw (B) -- (C);
      \draw (C) -- node [bl,pos=0.4] {4} (D);
      
      \node[p0,left of=C] (E) {$a_3$};
      \node[p0,left of=D] (F) {$b_3$};
      \draw (C) -- (E);
      \draw (D) -- (E);
      \draw (D) -- node [bl,pos=0.4] {0.75} (F);
      \draw (E) -- node [bl,pos=0.4] {8} (F);

      \node[p0,left of=E] (G) {$a_4$};
      \node[p0,left of=F] (H) {$b_4$};
      \draw (E) -- (G);      
      \draw (F) -- (G);
      \draw (F) -- node [bl,pos=0.4] {0.75} (H);
      \draw (G) -- node [bl,pos=0.4] {16} (H);

      \node[p0,left of=G] (I) {$s$};
      \node[p0, left of =H] (J) {$d$};
      \draw (G) -- (I);  
      \draw (H) -- (I);
      \draw (H) -- node [bl,pos=0.4] {0.75} (J);      
      \draw (J) -- (I);
      \draw[out=135,in=45, loop, looseness=6] (I) edge  (I) {};

      \node[p0] at (-4.5,3.5) (K) {$t$};
      \draw[arrows = {->[harpoon,swap]}] ($(A)!0.5!(K)$) -- node [pos=0, bl,sloped,above=.2em] {-0.75} (K);
      \draw[arrows = {->[harpoon,swap]}] ($(K)!0.5!(A)$) -- node [] {} (A);

      \draw[arrows = {->[harpoon,swap]}] ($(C)!0.5!(K)$) -- node [pos=0, bl,sloped,above=.2em] {-2.75} (K);
      \draw[arrows = {->[harpoon,swap]}] ($(K)!0.5!(C)$) -- node [] {} (C);

      \draw[arrows = {->[harpoon]}] ($(E)!0.5!(K)$) -- node [pos=0, bl,sloped,above=.2em] {-6.75} (K);
      \draw[arrows = {->[harpoon]}] ($(K)!0.5!(E)$) -- node [] {} (E);

      \draw[arrows = {->[harpoon]}] ($(G)!0.5!(K)$) -- node [pos=0, bl,sloped,above=.2em] {-14.75} (K);
      \draw[arrows = {->[harpoon]}] ($(K)!0.5!(G)$) -- node [] {} (G);

    \end{tikzpicture}%
    }
    \caption{}
    \label{fig:subfig1}
  \end{subfigure}
  \hfill
  \begin{subfigure}{0.48\textwidth}
    \centering
    \captionsetup{justification=centering}
    \resizebox{\linewidth}{!}{%
    \begin{tikzpicture}[node distance = 3cm,
      p0/.style={circle, draw=blue!60, fill=blue!15, thick, minimum size=8mm},
      bl/.style={fill=white,draw=gray!30,text=gray,font=\footnotesize},
      every path/.style={->,draw=blue,line width=1.5pt,{>=Latex}}]
    
      \node[p0] (A) {$a_1$};
      \node[p0,below of=A] (B) {$b_1$};
      \draw (A) -- node [bl] {5} (B);
      
      \node[p0,left of=A] (C) {$a_2$};
      \node[p0,left of=B] (D) {$b_2$};
      \draw (A) -- node [bl] {6} (C);
      \draw (B) -- node [bl] {8} (D);
      \draw (B) -- node [bl] {9} (C);
      \draw (C) -- node [bl] {10} (D);
      
      \node[p0,left of=C] (E) {$a_3$};
      \node[p0,left of=D] (F) {$b_3$};
      \draw (C) -- node [bl] {11} (E);
      \draw (D) -- node [bl] {14} (E);
      \draw (D) -- node [bl] {13} (F);
      \draw (E) -- node [bl] {15} (F);

      \node[p0,left of=E] (G) {$a_4$};
      \node[p0,left of=F] (H) {$b_4$};
      \draw (E) -- node [bl] {16} (G);      
      \draw (F) -- node [bl] {19} (G);
      \draw (F) -- node [bl] {18} (H);
      \draw (G) -- node [bl] {20} (H);

      \node[p0,left of=G] (I) {$s$};
      \node[p0, left of =H] (J) {$d$};
      \draw (G) --  node [bl] {21} (I);  
      \draw (H) -- node [bl] {24} (I);
      \draw (H) -- node [bl] {23} (J);      
      \draw (J) -- node [bl] {25} (I);
      \draw[out=135,in=45, loop, looseness=6] (I) edge node [bl] {26} (I) {};

      \node[p0] at (-4.5,3.5) (K) {$t$};
      \draw[arrows = {->[harpoon,swap]}] ($(A)!0.5!(K)$) -- node [pos=0.3, bl,sloped,above=.2em] {7} (K);
      \draw[arrows = {->[harpoon,swap]}] ($(K)!0.5!(A)$) -- node [pos=0.3, bl,sloped,below=.2em] {1} (A);

      \draw[arrows = {->[harpoon,swap]}] ($(C)!0.5!(K)$) -- node [pos=0.1, bl,sloped,above=.2em] {12} (K);
      \draw[arrows = {->[harpoon,swap]}] ($(K)!0.5!(C)$) -- node [pos=0.3, bl,sloped,below=.2em] {2} (C);

      \draw[arrows = {->[harpoon]}] ($(E)!0.5!(K)$) -- node [pos=0.1, bl,sloped,above=.2em] {17} (K);
      \draw[arrows = {->[harpoon]}] ($(K)!0.5!(E)$) -- node [pos=0.3, bl,sloped,below=.2em] {3} (E);

      \draw[arrows = {->[harpoon]}] ($(G)!0.5!(K)$) -- node [pos=0.3, bl,sloped,above=.2em] {22} (K);
      \draw[arrows = {->[harpoon]}] ($(K)!0.5!(G)$) -- node [pos=0.3, bl,sloped,below=.2em] {4} (G);

    \end{tikzpicture}%
    }
    \caption{}
    \label{fig:subfig2}
  \end{subfigure}
  \caption{Two drawings of the Markov decision process~$\mathcal{B}_4$. In (a), edge labels denote rewards and unlabeled edges have a reward of zero. In (b), edge labels define the Bland numbering~$\mathcal{N}_{\mathcal{B}_4}$.}
  \label{fig:B4}
\end{figure}

In vertex~$a_\ell$, the agent can either \emph{enter} level~$\ell$ by going to vertex~$b_\ell$, \emph{skip} this level by going to vertex~$a_{\ell+1}$, or \emph{board} the transportation vertex by going to~$t$.
From the transportation vertex, the agent \emph{travels} to one of the vertices~$a_i$ with~$i\in[n]$.
In~$b_\ell$, the agent can decide between \emph{leaving} the set~$\bigcup_{i\in[n+1]}\{b_i\}$ by going to~$a_{\ell+1}$ and \emph{staying} in this set by going to~$b_{\ell+1}$.
We will simply say that the agent leaves level~$\ell$ or stays in level~$\ell$, respectively. 

Finally, when the agent reaches the dummy vertex~$d$, it must go to the sink, and the only outgoing edge of the sink~$s$ is the self-loop~$(s,s)$. 

The function~$r_{\mbl}$ grants the agent a reward of~$2^\ell$ for entering level~$\ell$, a reward of~$0.75$ for staying in level~$\ell$, and a (negative) reward of~$(-2^{\ell}+1.25)$ for boarding~$t$ from~$a_\ell$; all other rewards are zero. 

The \emph{Bland numbering}~$\order\colon E_{\mbl}\to |E_{\mbl}|$ of the edges of~$\mbl$ is defined in Table~\ref{table:Bland}, together with~$\order((d,s))=6n+1$ and~$\order((s,s))=6n+2(=|E_{\mbl}|)$. 
This table also contains alternative names for the edges, which match the description above and which we will use to simplify the exposition.
Consider Figure~\ref{fig:subfig2} for the Bland numbering of~$\mathcal{B}_4$.

\begin{table}[h]
    \centering
    \begin{tabular}{ |c|c|c| }
     \hline
     \multicolumn{2}{|c|}{$e\in E_{\mbl}$} & $\order(e)$ \\
     \hline\hline
     ($t$,$a_i$) & $\travel(i)$ & $i$ \\ 
     \hline
     $(a_i,b_i)$ & $\enter(i)$ & $n+1+5(i-1)$\\
     \hline
     $(a_i,a_{i+1})$ & $\skipp(i)$ & $n+2+5(i-1)$\\
     \hline
     $(a_i,t)$ & $\board(i)$ & $n+3+5(i-1)$\\
     \hline
     $(b_i,b_{i+1})$ & $\stay(i)$ & $n+4+5(i-1)$\\
     \hline
     $(b_i,a_{i+1})$ & $\leave(i)$ & $n+5+5(i-1)$\\
     \hline
\end{tabular}\vspace{.1cm}
    \caption{Edge names and the definition of the Bland numbering~$\order$, where~$i\in[n]$.}
    \label{table:Bland}
\end{table} 

In the following, consider~$\mbl$ for some arbitrary but fixed~$n\in\N$.
The aim of this section is to show that \textsc{PolicyIteration} with Bland's pivot rule, cf. Algorithm~\ref{alg:Bland}, applies~$\Omega(2^n)$ improving switches when given~$\mbl$, a suitable initial policy, and~$\order$ as input.

\begin{algorithm}[h]
    \caption{$\textsc{Bland}(G,\pi,\mathcal{N})$}\label{alg:Bland}
    \vgap
    \textbf{input: }Markov decision process~$G$, weak unichain policy~$\pi$, edge numbering~$\mathcal N$
    \vgap
    \hrule  
    \vgap
    \While{$\pi$ admits an improving switch}
    {
       ~$\bar s\gets \text{the improving switch~$s$ for~$\pi$ that minimizes~$\mathcal{N}(s)$}$\;
        \vgap
       ~$\pi\gets \nwstr{\bar s}$
    }
    \textbf{return}~$\pi$
\end{algorithm}

More precisely, we will see that the algorithm visits all of the following policies.

\begin{definition2}\label{canonical policy}
    The policy~$\pi_0$ for~$\mbl$ such that~$\travel(1)$ is active, and~$\skipp(i)$ and~$\leave(i)$ are active for all~$i\in[n]$ is the \emph{canonical policy} for~$0$.
    \mbox{For~$x\in[2^n-1]$}, the policy~$\pi_x$ for~$\mbl$ is the \emph{canonical policy} for~$x$ if it satisfies the following conditions:
    \begin{enumerate}[<1>]
        \item\label{1} The policy travels from~$t$ to the least significant set bit, i.e.,~$\travel(\lone(x))$ is active.
        \item It collects no reward above the most significant set bit, i.e.,~$\leave(\m(x))$,~$\skipp(i)$, and~$\leave(i)$ are active for all~$\m(x)<i\leq n$.
        \item Every set bit~$x_i=1$ determines the behavior of the policy down to the next, less significant set bit or, if~$i=\ell_1(x)$, down to the first bit:
        \begin{enumerate}[<a>]
            \item $\enter(i)$ is active.
            \item if~$i=2$, then~$\leave(1)$ is active. If additionally~$x_1=0$, then~$\skipp(1)$ is active.
            \item if~$i\geq 3$ and~$x_{i-1}=1$, then~$\leave(i-1)$ is active.
            \item if~$i\geq 3$ and~$x_{i-1}=0$: 
            \begin{enumerate}[<$\text{d}_1$>]
                \item $\stay(i-1)$,~$\skipp(i-1)$, and~$\leave(i-2)$ are active.
                \item if~$\Lx ix< i-2$, then for all~$j\in\{\Lx ix+1, \dots, i-2\}$, the edges~$\board(j)$ and~$\stay(j-1)$ are active; if~$\Lx ix=1$ and~$x_1=0$, then~$\board(1)$ is active.\lipicsEnd
            \end{enumerate}
        \end{enumerate}        
    \end{enumerate}
\end{definition2}

Consider Figure~\ref{fig:c7} and Figure~\ref{fig:c8} for examples of canonical policies.
Note that canonical policies exist and are unique as the definition contains precisely one condition on every agent vertex with more than one outgoing edge.
Further, the~$2^n$ canonical policies are pairwise different as~$\enter(i)$ is active in~$\pi_x$ if and only if~$x_i=1$. 

We will now analyze the behavior of~$\textsc{Bland}(\mbl,\pi_0,\order)$, i.e., we choose the canonical policy for zero as our initial policy.
Since this policy visits every vertex except the sink only once, it is weak unichain.

\begin{observation}\label{pi0 is wu for B}
The canonical policy~$\pi_0$ is a weak unichain policy for~$\mbl$.
\end{observation}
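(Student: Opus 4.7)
The plan is to unpack both definitions in the statement and then trace the walk induced by $\pi_0$ from every starting vertex. Since $\mbl$ contains no randomization vertices (the construction is purely deterministic), the weak unichain condition collapses to a purely combinatorial one: we just need to exhibit a sink of $\mbl$ and verify that, from every vertex, repeatedly following $\pi_0$ reaches this sink after finitely many steps. The candidate sink is $s$, whose unique outgoing edge is the self-loop $(s,s)$ with zero reward, and which we identify with $a_{n+1}$ for notational convenience.

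Next I would go through the vertices of $\mbl$ and follow the edges picked by $\pi_0$. By Definition~\ref{canonical policy}, the only active edges of $\pi_0$ leaving non-sink agent vertices are $\travel(1)$ at $t$, $\skipp(i)$ at $a_i$ for $i\in[n]$, $\leave(i)$ at $b_i$ for $i\in[n]$, and the forced edge $(d,s)$ at $d$. From any $a_i$ with $i\le n$, the trajectory therefore reads $a_i \to a_{i+1} \to \cdots \to a_n \to a_{n+1}=s$, reaching $s$ in at most $n$ steps. From any $b_i$, the edge $\leave(i)$ first goes to $a_{i+1}$, and we continue as in the previous case, again reaching $s$ in at most $n$ steps. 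From $t$ we take $\travel(1)$ to $a_1$ and then use the $\skipp$-chain, while from $d$ we reach $s$ in a single step. Since $s$ itself loops at $s$, every vertex of $\mbl$ reaches $s$ along $\pi_0$, which establishes that $\pi_0$ is weak unichain.

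I do not expect any obstacle here: the observation is essentially the remark that $\pi_0$ traces the simple ``main diagonal'' of the construction, so the verification amounts to reading off edges from Definition~\ref{canonical policy} and checking that no vertex is revisited before $s$.
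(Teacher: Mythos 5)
Your argument is correct and follows the same line as the paper, which simply remarks that $\pi_0$ visits every vertex except the sink only once (equivalently, traces the main $\skipp$-diagonal to $s$); you have just spelled out the trajectory from each starting vertex explicitly.
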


Thus, according to Theorem~\ref{thm:the connection}, the following result will allow us to transfer our results for the policy iteration algorithm to the simplex method.

\begin{restatable}{lemma}{optforB}\label{opt for B}
    Let the policy~$\pi_*$ for~$\mbl$ be determined as follows: 
   ~$\stay(n)$ and~$\travel(1)$ are active,~$\enter(i)$ is active for all~$i\in[n]$, and~$\leave(j)$ is active for all~$j\in[n-1]$.    
    Then,~$\pi_*$ is weak unichain and optimal for~$\mbl$.
\end{restatable}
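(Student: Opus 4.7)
My plan is to verify weak unichain by inspection and then establish optimality by computing $\Valsolo_{\pi_*}$ in closed form and checking that no inactive agent edge is an improving switch. Weak unichain is immediate: $\pi_*$ deterministically routes every vertex into the path $a_1 \to b_1 \to a_2 \to \cdots \to a_n \to b_n \to d \to s$, so the sink is reached from every starting vertex. For optimality I invoke the Bellman optimality principle---if $z_{\pi_*}(e) \leq 0$ holds for every agent edge $e$, then $\pi_*$ is optimal---or, equivalently, observe that $\textsc{PolicyIteration}(\mbl,\pi_*)$ terminates immediately and apply Theorem~\ref{thm: PI visits only wu} (which, by weak unichain of $\pi_*$, indeed returns an optimal policy).

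\textbf{Computing values.} Working backwards from the sink along the unique $\pi_*$-trajectory gives $\Valsolo_{\pi_*}(s) = \Valsolo_{\pi_*}(d) = 0$ and $\Valsolo_{\pi_*}(b_n) = 0.75$ (from staying at the top level). A short induction on $j$ from $n-1$ down to $1$ then yields
\[
\Valsolo_{\pi_*}(a_j) = 2^{n+1} - 2^j + 0.75 \qquad \text{and} \qquad \Valsolo_{\pi_*}(b_j) = 2^{n+1} - 2^{j+1} + 0.75,
\]
so in particular $\Valsolo_{\pi_*}(t) = \Valsolo_{\pi_*}(a_1) = 2^{n+1} - 1.25$.

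\textbf{Checking reduced costs.} It remains to bound $z_{\pi_*}(e)$ from above for every inactive edge $e$, of which there are five families: $\skipp(i)$ and $\board(i)$ for $i\in[n]$, $\travel(i)$ for $i\neq 1$, $\stay(j)$ for $j<n$, and the single edge $\leave(n)$. The $\skipp$ and $\travel$ edges reduce to $-2^i$ and $2-2^i$, respectively, both non-positive. The more telling cases are those whose rewards are calibrated to the values of $\pi_*$: for $\board(i)$ the reward $-2^i+1.25$ is chosen so that the reduced cost evaluates to exactly $-0.75$ independent of $i$; for $\stay(j)$ the small reward $0.75$ is outweighed by the loss $-2^{j+1}$ (generically) or $-2^n+0.75$ (at the boundary $j=n-1$) incurred by forgoing the entering reward at level $j+1$; and $\leave(n)$ contributes $-0.75$. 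The computations are routine, and the only delicate point is the boundary $j=n-1$ in $\stay$, where one must remember that $b_n$ only carries the staying reward rather than a further entering reward. Since every inactive edge has strictly negative reduced cost, $\pi_*$ admits no improving switch and is therefore optimal.
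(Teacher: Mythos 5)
Your proof is correct and sound, but it takes a genuinely different route from the paper's. You compute the value function $\Valsolo_{\pi_*}$ in closed form and then verify the Bellman optimality condition $z_{\pi_*}(e)\leq 0$ edge by edge; the paper instead argues structurally in two steps: (i)~$\pi_*$ dominates every policy that never boards, because from any $a_\ell$ it enters every level $\ell,\ell+1,\dots,n$ and also collects the $\stay(n)$ reward, which is the maximum achievable without boarding; and (ii)~boarding from $a_\ell$ never helps because its reward $-2^\ell+1.25=-\bigl(\sum_{i=1}^{\ell-1}2^i+0.75\bigr)$ exactly cancels the best reward obtainable in levels $1,\dots,\ell-1$. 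The paper's argument is shorter and avoids bookkeeping, but your computation is not wasted: it explicitly produces $\Valsolo_{\pi_*}(t)=2^{n+1}-1.25$, a number the paper quotes without derivation in the proof of Lemma~\ref{red costs bounded in D}. Three small things to tidy up. First, your closed form $\Valsolo_{\pi_*}(a_j)=2^{n+1}-2^j+0.75$ does not extend to $j=n+1$ (where $\Valsolo_{\pi_*}(s)=0$, not $0.75$), so $z_{\pi_*}(\skipp(n))=-2^n-0.75$ rather than $-2^n$; still negative, so harmless. Second, for $\stay$ the generic formula $z_{\pi_*}(\stay(j))=0.75-2^{j+1}$ actually does hold at $j=n-1$ (giving $0.75-2^n$), so the ``boundary'' case you single out is not exceptional, and the quantity ``$-2^n+0.75$'' you attribute to the loss is the reduced cost itself, not $\Valsolo_{\pi_*}(b_n)-\Valsolo_{\pi_*}(b_{n-1})=-2^n$. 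Third, your alternative justification via Theorem~\ref{thm: PI visits only wu} is circular as stated, since that theorem presupposes the existence of a weak unichain optimal policy, which is exactly what you are in the middle of exhibiting; the direct Bellman-optimality argument (no improving switch, all cycles in $\mbl$ have negative total reward) is the one to rely on.
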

\begin{proof}
Since~$\pi_*$ visits every vertex, besides the sink, only once, it is weak unichain.
For optimality, note that~$t$ travels to~$a_1$ and that, when starting in a vertex~$a_{\ell}$, \mbox{policy~$\pi_*$} enters level $\ell$ and all levels above and collects the reward of~$\stay(n)$.
The policy is thus clearly optimal among the set of policies that do not use boarding edges.

Further, we have~$r_{\mbl}(\board(\ell))=-2^{\ell}+1.25=-(\sum_{i=1}^{\ell-1}2^i+0.75)$. That is, the costs of~$\board(\ell)$ equal the maximum reward that can be collected in the first~$\ell-1$ levels.
Thus, we cannot increase vertex values by using boarding edges, which yields that~$\pi_*$ is optimal.
\end{proof}

The following technical result will be helpful in the upcoming proofs.

\begin{restatable}{lemma}{lemnopointimpr}\label{no pointer improving}
    Let~$x\in[2^n-1]_0$ and~$i\in[n]$. Then,~$\travel(i)$ is not improving for~$\pi_x$.
\end{restatable}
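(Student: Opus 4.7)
Since $r_{\mbl}(\travel(i)) = 0$ for every $i$ and $\travel(\lone(x))$ is the active edge at $t$ under $\pi_x$ by condition~1 of Definition~\ref{canonical policy} (using the convention $\lone(0) := 1$ when $x = 0$), we have $\Valsolo_{\pi_x}(t) = \Valsolo_{\pi_x}(a_{\lone(x)})$ and hence $z_{\pi_x}(\travel(i)) = \Valsolo_{\pi_x}(a_i) - \Valsolo_{\pi_x}(a_{\lone(x)})$. The plan is therefore to show $\Valsolo_{\pi_x}(a_i) \le \Valsolo_{\pi_x}(a_{\lone(x)})$ for every $i \in [n]$. When $x = 0$, condition~2 makes every $a_j$ skip upward to $s$, so $\Valsolo_{\pi_x}(a_j) = 0$ uniformly and the bound is trivial.

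For $x \ne 0$, let $\ell = \lone(x)$. I would first unroll Definition~\ref{canonical policy} to describe the trajectory from $a_\ell$ under $\pi_x$, observing that it uses only $\enter$, $\stay$, $\leave$, and $\skipp$ edges, never visits $t$, and therefore collects only non-negative rewards, at least $2^\ell$ from $\enter(\ell)$. The claim then follows from a three-way case analysis on $i$. If $i > \m(x)$, condition~2 makes $a_i$ skip to $s$, so $\Valsolo_{\pi_x}(a_i) = 0 \le 2^\ell \le \Valsolo_{\pi_x}(a_\ell)$. If $i < \ell$, conditions 3(b) and 3(d) applied to the lowest set bit $\ell$ force $a_i$ either to skip (giving $\Valsolo_{\pi_x}(a_i) = \Valsolo_{\pi_x}(a_\ell)$) or to board $t$ (giving $\Valsolo_{\pi_x}(a_i) = \Valsolo_{\pi_x}(t) - (2^i - 1.25) < \Valsolo_{\pi_x}(t)$, using $2^i \ge 2 > 1.25$). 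Finally, if $\ell < i \le \m(x)$, either $x_i = 1$ and condition 3(a) makes $\enter(i)$ active, so $a_i$ lies on the trajectory from $a_\ell$, or $x_i = 0$ and condition 3(d) applied to the smallest set bit $k > i$ either puts $a_i$ on the trajectory (when $i = k - 1$ and $\skipp(i)$ is active) or makes $a_i$ board $t$ (when $i \le k - 2$). On-trajectory vertices satisfy $\Valsolo_{\pi_x}(a_i) \le \Valsolo_{\pi_x}(a_\ell)$ because $\Valsolo_{\pi_x}(a_i)$ is a suffix sum of the trajectory's non-negative reward sequence, while boarding vertices satisfy $\Valsolo_{\pi_x}(a_i) < \Valsolo_{\pi_x}(t)$ as above.

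The main technical hurdle is verifying the on-trajectory / off-trajectory dichotomy in the last subcase. This amounts to matching condition 3(d) to the geometry of the trajectory from $a_\ell$: for consecutive set bits $i_j < i_{j+1}$ of $x$, the identity $\Lx{i_{j+1}}{x} = i_j$ implies that $\board(\cdot)$ is active precisely at the interior unset indices $i_j + 1, \ldots, i_{j+1} - 2$ and $\skipp(\cdot)$ at the boundary index $i_{j+1} - 1$, which cleanly separates the two cases. With this book-keeping in place, the only remaining observation is that the prefix of the trajectory from $a_\ell$ up to any on-trajectory $a_i$ uses only non-negative-reward edges, so passing from $\Valsolo_{\pi_x}(a_\ell)$ to $\Valsolo_{\pi_x}(a_i)$ drops a non-negative amount and the desired inequality follows.
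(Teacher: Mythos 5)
Your proof is correct and follows essentially the same strategy as the paper: both reduce the claim to showing that $\Valsolo_{\pi_x}(a_i) \le \Valsolo_{\pi_x}(t)$, observe that the trajectory emanating from $a_{\lone(x)}$ uses only non-negative-reward edges, and classify the remaining $a_i$ as either boarding $t$ (strictly losing value) or skipping onto that trajectory. The paper organizes the case split by whether $x_i$ is set; you split by the position of $i$ relative to $\lone(x)$ and $\m(x)$ and spell out the $\Lx{\cdot}{\cdot}$ bookkeeping a bit more explicitly, but the underlying argument is the same.
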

\begin{proof}
    All vertex values with respect to~$\pi_0$ are zero, and~$r_{\mbl}(\travel(i))=0$. 
    Thus, the claim holds for~$x=0$, so we assume~$x\in[2^n-1]$ in the following.
    
    Let the vertices~$a_k$ and~$a_\ell$ either correspond to successive set bits, i.e.,~$x_k=x_\ell=1$ and~$x_j=0$ for all~$k<j<\ell$, or let~$k=\m(x)$ and~$\ell= n+1$.
    Either way, \mbox{Definition \ref{canonical policy}} implies that~$\pi_x$ includes a path from~$a_k$ to~$a_\ell$, which does not contain any boarding edge.
    Hence, we have~$\Valsolo_{\pi_x}(a_\alpha)\geq\Valsolo_{\pi_x}(a_\beta)\geq0$ for all set bits~$x_\alpha= x_\beta=1$ with~$\alpha\leq\beta$.
    Since the transportation vertex chooses the least significant set bit in~$\pi_x$, this yields that~$\travel(i)$ is not improving if~$x_i=1$.

    Further, Definition \ref{canonical policy} yields that~$x_j=1$ if and only if~$\enter(j)$ is active in~$\pi_x$.
    Thus, when starting in some vertex~$a_i$ with~$x_i=0$, policy~$\pi_x$ either boards~$t$ from~$a_i$ or it skips levels until reaching a node that boards~$t$, a level corresponding to a set bit, or the sink.
    In all four cases,~$\travel(i)$ is not improving.
    This completes the proof.    
\end{proof}

We will show in two steps that, when using the initial policy~$\pi_0$, \textsc{Bland} visits all of the other canonical policies.
Firstly, given the canonical policy for an arbitrary even integer~$x$, we see that the algorithm applies improving switches until reaching the canonical policy~$\pi_{x+1}$.

\begin{restatable}{lemma}{evencanphases}\label{Bland applies even can phases}
Let~$x\in[2^n-2]_0$ be even.
Then,~$\textsc{Bland}(\mbl,\pi_x,\order)$ visits~$\pi_{x+1}$.
\end{restatable}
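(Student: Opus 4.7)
The plan is to trace Bland's pivot choices starting from $\pi_x$ and identify $\pi_{x+1}$ in the resulting sequence. The guiding observation is that, since $x$ is even, the canonical policies $\pi_x$ and $\pi_{x+1}$ agree at every agent vertex except possibly $a_1$ (where $\pi_{x+1}$ must use $\enter(1)$ since $(x+1)_1 = 1$, whereas $\pi_x$ uses $\skipp(1)$ or $\board(1)$ since $x_1 = 0$) and $t$ (where $\pi_{x+1}$ uses $\travel(1)$ while $\pi_x$ uses $\travel(\lone(x))$); these two vertices already coincide in $\pi_x$ and $\pi_{x+1}$ precisely when $x=0$.

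The first step is to show that $\enter(1)$ is an improving switch at $\pi_x$. Using the explicit path through the graph prescribed by Definition~\ref{canonical policy}, I would compute $\Valsolo_{\pi_x}(a_1)$ and $\Valsolo_{\pi_x}(b_1)$ and verify $z_{\pi_x}(\enter(1))>0$. This should reduce to the fact that the reward $r_{\mbl}(\enter(1))=2$ strictly dominates the contribution of the current action at $a_1$ (zero for $\skipp(1)$, strictly negative for $\board(1)$), while the continuation from $b_1$ contributes non-negatively. Combined with Lemma~\ref{no pointer improving}, which rules out improvement along every travel edge (Bland numbers $1,\dots,n$), this makes $\enter(1)$ (Bland number $n+1$) the smallest-numbered improving switch, so Bland applies it. Let $\pi'$ denote the resulting policy.

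The second step distinguishes two cases. If $x=0$, the transportation pointer already travels to $a_1$ in $\pi_0$, and $\pi'$ therefore agrees with $\pi_1=\pi_{x+1}$ everywhere. If $x\geq 2$, applying $\enter(1)$ strictly increases $\Valsolo(a_1)$; I would verify that the new value exceeds $\Valsolo_{\pi'}(a_{\lone(x)})$, making $\travel(1)$ an improving switch at $\pi'$ with reduced cost $z_{\pi'}(\travel(1))=\Valsolo_{\pi'}(a_1)-\Valsolo_{\pi'}(a_{\lone(x)})>0$. Since $\order(\travel(1))=1$ is the smallest Bland number overall, Bland applies $\travel(1)$ next, producing a policy that differs from $\pi_x$ only at $a_1$ and $t$ in exactly the two ways identified in the first paragraph. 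That policy is therefore $\pi_{x+1}$, establishing the visit.

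The main obstacle is the inequality $\Valsolo_{\pi'}(a_1)>\Valsolo_{\pi'}(a_{\lone(x)})$ in the case $x\geq 2$. The behaviour of $\pi_x$ at $b_1$ and along the path starting from $a_{\lone(x)}$ depends on the bit pattern of $x$, so the argument may branch into subcases (for instance $\lone(x)=2$ versus $\lone(x)\geq 3$, and whether $b_1$ uses $\stay(1)$ or $\leave(1)$). A uniform approach is to trace both paths explicitly and exploit the geometric scaling of the level rewards ($+2^\ell$ for entering, $+0.75$ for staying, $-2^\ell+1.25$ for boarding), so that the $+2$ bonus collected along the new path from $a_1$ through $b_1$ dominates whatever is collected along the old path from $a_{\lone(x)}$ through levels $\geq 2$.
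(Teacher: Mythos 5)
Your proposal is correct and follows essentially the same route as the paper: both establish that travel edges are never improving (via Lemma~\ref{no pointer improving}), show that $\enter(1)$ is improving at $\pi_x$ and hence chosen by Bland, then (for $x\neq 0$) show $\travel(1)$ becomes improving and is chosen next, and finally identify the resulting policy with $\pi_{x+1}$. The only superficial difference is that you lead with the observation that $\pi_x$ and $\pi_{x+1}$ can differ only at $a_1$ and $t$, whereas the paper verifies the conditions of Definition~\ref{canonical policy} for $x+1$ at the end; both require the same check against the definition, and the ``main obstacle'' you flag is in fact immediate since $\Valsolo_{\pi'}(a_1)=2+\Valsolo_{\pi'}(b_1)\geq 2+\Valsolo_{\pi'}(a_{\lone(x)})$.
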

\begin{proof}
According to Lemma~\ref{no pointer improving}, no travel edges are improving for~$\pi_x$, so the Bland numbering~$\order$ yields that the algorithm applies the switch~$\enter(1)$ to~$\pi_x$ if it is improving.
This edge is improving for~$\pi_0$, and one can easily check that its application results in the canonical policy~$\pi_1$.
Hence, it suffices to consider~$x\neq 0$ in the following.
This yields~${\ell_1}\coloneqq\ell_1(x)>1$ as~$x$ is even. 
Influenced by condition <3> from Definition \ref{canonical policy}, we consider two cases.

Firstly, if~${\ell_1}=2$, conditions <1> and <b> state that~$\travel(2)$,~$\skipp(1)$ and~$\leave(1)$ are active in~$\pi_x$. 
Hence,~$\enter(1)$ is improving for~$\pi_x$ and gets applied by \textsc{Bland}.
The edge~$\travel(1)$ becomes improving and gets applied next as it minimizes~$\order$.

Secondly, if~${\ell_1}\geq3$, conditions <1> and <d> yield that~$\pi_x$ includes the paths~$(a_1,t,a_{{\ell_1}})$ and~$(b_1, b_2, \ldots,b_{{\ell_1}-2},a_{{\ell_1}-1},a_{{\ell_1}})\eqqcolon P$. 
Hence,~$\Valsolo_{\pi_x}(a_1)\leq\Valsolo_{\pi_x}(a_{{\ell_1}})\leq\Valsolo_{\pi_x}(b_1)$.
Therefore, as~$\enter(1)$ has a positive reward, it is improving and gets applied to~$\pi_x$.
The new policy walks from~$a_1$ to~$b_1$ and then follows the path~$P$, so~$a_1$ has a higher value than~$a_{{\ell_1}}$. Since~$\travel({\ell_1})$ is active in~$\pi_x$, the edge~$\travel(1)$ is improving and gets applied next.

Let~$\pi$ denote the policy resulting from the application of~$\enter(1)$ and~$\travel(1)$ to~$\pi_x$.
It now suffices to show that~$\pi$ satisfies the conditions of Definition~\ref{canonical policy} for~$x+1$.

As~$x+1$ is odd, we have~$\ell_1(x+1)=1$, so~$\pi$ satisfies the first condition.
Further, the second condition remains satisfied as both applied switches are below the most significant set bit.
Finally, the third condition now requires that~$\enter(1)$ is active~-- instead of~$\skipp(1)$ if~${\ell_1}=2$, or~$\board(1)$ if~${\ell_1}\geq3$~-- and otherwise contains the same requirements.
Hence,~$\pi$ is the canonical policy for~$x+1$.
\end{proof}

Secondly, we need to show that the algorithm also transforms the canoncial policy~$\pi_x$ for an arbitrary odd number~$x$ into the next canonical policy~$\pi_{x+1}$.
We will see that the algorithm does this by applying the following sequence of improving switches.

\begin{definition2}\label{canonical phases odd}
Let~$x\in[2^n-3]$ be odd and write~$\ell\coloneqq \ell_0(x)>1$. 
Then, the \emph{canonical phases} with respect to~$x$ are:
    \begin{enumerate}[1.]
    \setlength\itemsep{.1em}
        \item If~$x_{\ell+1}=1$, activate~$\leave(\ell)$.
        \item If~$x_{\ell+1}=1$ or~$\ell>\m(x)$, activate~$\stay(\ell-1)$.
        \item Activate~$\enter(\ell)$ and~$\travel(\ell)$.
        \item If~$\ell\geq 3$, activate~$\board(j)$ for all~$j\in[\ell-2]$ in increasing order.
        \item Activate~$\skipp(\ell-1)$.
        \item If~$\ell\geq 4$, activate~$\stay(j)$ for all~$j\in[\ell-3]$ in decreasing order.
        \item If~$\ell=2$, activate~$\leave(1)$.\lipicsEnd
    \end{enumerate}
\end{definition2}

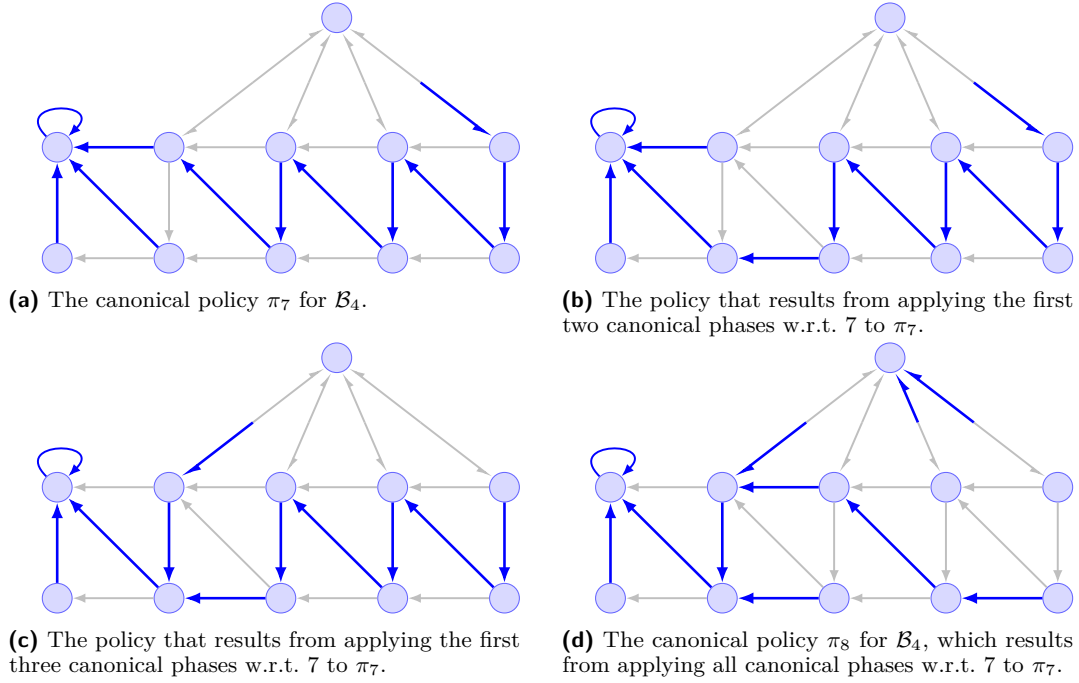
\begin{figure}
\centering
  \begin{subfigure}{0.48\textwidth}
    \centering
    \resizebox{\textwidth}{!}{%
    \begin{tikzpicture}[node distance = 3cm,
      p0/.style={circle, draw=blue!60, fill=blue!15, thick, minimum size=8mm},
      bl/.style={fill=white,draw=gray!30,text=gray},
      every path/.style={->,draw=blue,line width=1.5pt,{>=Latex}}]
    
      \node[p0] (A) {};
      \node[p0,below of=A] (B) {};
      \draw[line width=2pt] (A) -- (B);
      
      \node[p0,left of=A] (C) {};
      \node[p0,left of=B] (D) {};
      \draw[lightgray] (A) -- (C);
      \draw[lightgray] (B) -- (D);
      \draw[line width=2pt] (B) -- (C);
      \draw[line width=2pt] (C) -- (D);
      
      \node[p0,left of=C] (E) {};
      \node[p0,left of=D] (F) {};
      \draw[lightgray] (C) -- (E);
      \draw[line width=2pt] (D) -- (E);
      \draw[lightgray] (D) -- (F);
      \draw[line width=2pt] (E) -- (F);

      \node[p0,left of=E] (G) {};
      \node[p0,left of=F] (H) {};
      \draw[lightgray] (E) -- (G);      
      \draw[line width=2pt] (F) -- (G);
      \draw[lightgray] (F) -- (H);
      \draw[lightgray] (G) -- (H);

      \node[p0,left of=G] (I) {};
      \node[p0, left of =H] (J) {};
      \draw[line width=2pt] (G) -- (I);  
      \draw[line width=2pt] (H) -- (I);
      \draw[lightgray] (H) -- (J);      
      \draw[line width=2pt] (J) -- (I);
      \draw[out=135,in=45, loop, looseness=6,line width=2pt] (I) edge  (I) {};

      \node[p0] at (-4.5,3.5) (K) {};
      \draw[arrows = {->[harpoon,swap]},lightgray] ($(A)!0.5!(K)$) -- (K);
      \draw[arrows = {->[harpoon,swap]},blue,line width=2pt] ($(K)!0.5!(A)$) -- node [] {} (A);

      \draw[arrows = {->[harpoon,swap]},lightgray] ($(C)!0.5!(K)$) -- (K);
      \draw[arrows = {->[harpoon,swap]},lightgray] ($(K)!0.5!(C)$) -- node [] {} (C);

      \draw[arrows = {->[harpoon]},lightgray] ($(E)!0.5!(K)$) -- (K);
      \draw[arrows = {->[harpoon]},lightgray] ($(K)!0.5!(E)$) -- node [] {} (E);

      \draw[arrows = {->[harpoon]},lightgray] ($(G)!0.5!(K)$) -- (K);
      \draw[arrows = {->[harpoon]},lightgray] ($(K)!0.5!(G)$) -- node [] {} (G);

    \end{tikzpicture}%
    }
    \caption{The canonical policy~$\pi_7$ for~$\mathcal{B}_4$.\\~}
    \label{fig:c7}
  \end{subfigure}
  \hfill
  \begin{subfigure}{0.48\textwidth}
    \centering
    \resizebox{\textwidth}{!}{%
\begin{tikzpicture}[node distance = 3cm,
      p0/.style={circle, draw=blue!60, fill=blue!15, thick, minimum size=8mm},
      bl/.style={fill=white,draw=gray!30,text=gray},
      every path/.style={->,draw=blue,line width=1.5pt,{>=Latex}}]
    
      \node[p0] (A) {};
      \node[p0,below of=A] (B) {};
      \draw[line width=2pt] (A) -- (B);
      
      \node[p0,left of=A] (C) {};
      \node[p0,left of=B] (D) {};
      \draw[lightgray] (A) -- (C);
      \draw[lightgray] (B) -- (D);
      \draw[line width=2pt] (B) -- (C);
      \draw[line width=2pt] (C) -- (D);
      
      \node[p0,left of=C] (E) {};
      \node[p0,left of=D] (F) {};
      \draw[lightgray] (C) -- (E);
      \draw[line width=2pt] (D) -- (E);
      \draw[lightgray] (D) -- (F);
      \draw[line width=2pt] (E) -- (F);

      \node[p0,left of=E] (G) {};
      \node[p0,left of=F] (H) {};
      \draw[lightgray] (E) -- (G);      
      \draw[lightgray] (F) -- (G);
      \draw[line width=2pt] (F) -- (H);
      \draw[lightgray] (G) -- (H);

      \node[p0,left of=G] (I) {};
      \node[p0, left of =H] (J) {};
      \draw[line width=2pt] (G) -- (I);  
      \draw[line width=2pt] (H) -- (I);
      \draw[lightgray] (H) -- (J);      
      \draw[line width=2pt] (J) -- (I);
      \draw[out=135,in=45, loop, looseness=6,line width=2pt] (I) edge  (I) {};

      \node[p0] at (-4.5,3.5) (K) {};
      \draw[arrows = {->[harpoon,swap]},lightgray] ($(A)!0.5!(K)$) -- (K);
      \draw[arrows = {->[harpoon,swap]},blue,line width=2pt] ($(K)!0.5!(A)$) -- node [] {} (A);

      \draw[arrows = {->[harpoon,swap]},lightgray] ($(C)!0.5!(K)$) -- (K);
      \draw[arrows = {->[harpoon,swap]},lightgray] ($(K)!0.5!(C)$) -- node [] {} (C);

      \draw[arrows = {->[harpoon]},lightgray] ($(E)!0.5!(K)$) -- (K);
      \draw[arrows = {->[harpoon]},lightgray] ($(K)!0.5!(E)$) -- node [] {} (E);

      \draw[arrows = {->[harpoon]},lightgray] ($(G)!0.5!(K)$) -- (K);
      \draw[arrows = {->[harpoon]},lightgray] ($(K)!0.5!(G)$) -- node [] {} (G);

    \end{tikzpicture}%
    }
    \caption{The policy that results from applying the first two canonical phases w.r.t.\ 7 to~$\pi_7$.}
    \label{fig:c71}
  \end{subfigure}
  \hfill
  \begin{subfigure}{0.48\textwidth}
    \centering
    \resizebox{\textwidth}{!}{%
\begin{tikzpicture}[node distance = 3cm,
      p0/.style={circle, draw=blue!60, fill=blue!15, thick, minimum size=8mm},
      bl/.style={fill=white,draw=gray!30,text=gray},
      every path/.style={->,draw=blue,line width=1.5pt,{>=Latex}}]
    
      \node[p0] (A) {};
      \node[p0,below of=A] (B) {};
      \draw[line width=2pt] (A) -- (B);
      
      \node[p0,left of=A] (C) {};
      \node[p0,left of=B] (D) {};
      \draw[lightgray] (A) -- (C);
      \draw[lightgray] (B) -- (D);
      \draw[line width=2pt] (B) -- (C);
      \draw[line width=2pt] (C) -- (D);
      
      \node[p0,left of=C] (E) {};
      \node[p0,left of=D] (F) {};
      \draw[lightgray] (C) -- (E);
      \draw[line width=2pt] (D) -- (E);
      \draw[lightgray] (D) -- (F);
      \draw[line width=2pt] (E) -- (F);

      \node[p0,left of=E] (G) {};
      \node[p0,left of=F] (H) {};
      \draw[lightgray] (E) -- (G);      
      \draw[lightgray] (F) -- (G);
      \draw[line width=2pt] (F) -- (H);
      \draw[line width=2pt] (G) -- (H);

      \node[p0,left of=G] (I) {};
      \node[p0, left of =H] (J) {};
      \draw[lightgray] (G) -- (I);  
      \draw[line width=2pt] (H) -- (I);
      \draw[lightgray] (H) -- (J);      
      \draw[line width=2pt] (J) -- (I);
      \draw[out=135,in=45, loop, looseness=6,line width=2pt] (I) edge  (I) {};

      \node[p0] at (-4.5,3.5) (K) {};
      \draw[arrows = {->[harpoon,swap]},lightgray] ($(A)!0.5!(K)$) -- (K);
      \draw[arrows = {->[harpoon,swap]},lightgray] ($(K)!0.5!(A)$) -- node [] {} (A);

      \draw[arrows = {->[harpoon,swap]},lightgray] ($(C)!0.5!(K)$) -- (K);
      \draw[arrows = {->[harpoon,swap]},lightgray] ($(K)!0.5!(C)$) -- node [] {} (C);

      \draw[arrows = {->[harpoon]},lightgray] ($(E)!0.5!(K)$) -- (K);
      \draw[arrows = {->[harpoon]},lightgray] ($(K)!0.5!(E)$) -- node [] {} (E);

      \draw[arrows = {->[harpoon]},lightgray] ($(G)!0.5!(K)$) -- (K);
      \draw[arrows = {->[harpoon]},blue,line width=2pt] ($(K)!0.5!(G)$) -- node [] {} (G);

    \end{tikzpicture}%
    }
    \caption{The policy that results from applying the first three canonical phases w.r.t.\ 7 to~$\pi_7$.}
    \label{fig:c72}
  \end{subfigure}
  \hfill
  \begin{subfigure}{0.48\textwidth}
    \centering
    \resizebox{\textwidth}{!}{%
\begin{tikzpicture}[node distance = 3cm,
      p0/.style={circle, draw=blue!60, fill=blue!15, thick, minimum size=8mm},
      bl/.style={fill=white,draw=gray!30,text=gray},
      every path/.style={->,draw=blue,line width=1.5pt,{>=Latex}}]
    
      \node[p0] (A) {};
      \node[p0,below of=A] (B) {};
      \draw[lightgray] (A) -- (B);
      
      \node[p0,left of=A] (C) {};
      \node[p0,left of=B] (D) {};
      \draw[lightgray] (A) -- (C);
      \draw[line width=2pt] (B) -- (D);
      \draw[lightgray] (B) -- (C);
      \draw[lightgray] (C) -- (D);
      
      \node[p0,left of=C] (E) {};
      \node[p0,left of=D] (F) {};
      \draw[lightgray] (C) -- (E);
      \draw[line width=2pt] (D) -- (E);
      \draw[lightgray] (D) -- (F);
      \draw[lightgray] (E) -- (F);

      \node[p0,left of=E] (G) {};
      \node[p0,left of=F] (H) {};
      \draw[line width=2pt] (E) -- (G);      
      \draw[lightgray] (F) -- (G);
      \draw[line width=2pt] (F) -- (H);
      \draw[line width=2pt] (G) -- (H);

      \node[p0,left of=G] (I) {};
      \node[p0, left of =H] (J) {};
      \draw[lightgray] (G) -- (I);  
      \draw[line width=2pt] (H) -- (I);
      \draw[lightgray] (H) -- (J);      
      \draw[line width=2pt] (J) -- (I);
      \draw[out=135,in=45, loop, looseness=6,line width=2pt] (I) edge  (I) {};

      \node[p0] at (-4.5,3.5) (K) {};
      \draw[arrows = {->[harpoon,swap]},blue,line width=2pt] ($(A)!0.5!(K)$) -- (K);
      \draw[arrows = {->[harpoon,swap]},lightgray] ($(K)!0.5!(A)$) -- node [] {} (A);

      \draw[arrows = {->[harpoon,swap]},blue,line width=2pt] ($(C)!0.5!(K)$) -- (K);
      \draw[arrows = {->[harpoon,swap]},lightgray] ($(K)!0.5!(C)$) -- node [] {} (C);

      \draw[arrows = {->[harpoon]},lightgray] ($(E)!0.5!(K)$) -- (K);
      \draw[arrows = {->[harpoon]},lightgray] ($(K)!0.5!(E)$) -- node [] {} (E);

      \draw[arrows = {->[harpoon]},lightgray] ($(G)!0.5!(K)$) -- (K);
      \draw[arrows = {->[harpoon]},blue,line width=2pt] ($(K)!0.5!(G)$) -- node [] {} (G);

    \end{tikzpicture}%
    }
    \caption{The canonical policy~$\pi_8$ for~$\mathcal{B}_4$, which results from applying all canonical phases w.r.t.\ 7 to~$\pi_7$.}
    \label{fig:c8}
  \end{subfigure}
 \caption{An example that illustrates how the canonical phases transform one canonical policy into the next one. Active edges are depicted in a bold blue color, while inactive edges are slightly transparent. Note that~$\pi_7$ enters the first three levels, which correspond to the set bits in the binary representation of~$7$; analogously,~$\pi_8$ only enters the fourth level.}
    \label{fig:c7toc8}
\end{figure}

The following Lemma shows that if the algorithm applies the canonical phases to the corresponding canonical policy, it reaches the next canonical policy.
Consider Figure~\ref{fig:c7toc8} for an example.

\begin{restatable}{lemma}{lemcanphases}\label{odd phases result in next canonical}
    Let~$\pi_x$ be the canonical policy for some odd~$x\in[2^n-3]$. 
    Applying the canonical phases with respect to~$x$ to~$\pi_x$ results in the canonical policy~$\pi_{x+1}$.
\end{restatable}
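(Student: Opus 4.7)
The plan is to verify vertex by vertex that applying the seven phases to~$\pi_x$ reproduces the active edge prescribed by Definition~\ref{canonical policy} for~$\pi_{x+1}$.  Writing $\ell\coloneqq\ell_0(x)>1$, odd $x$ has $x_1=\dots=x_{\ell-1}=1$ and $x_\ell=0$; hence $x+1$ flips exactly these first~$\ell$ bits, so $\lone(x+1)=\ell$, the set bits above~$\ell$ coincide for $x$ and $x+1$, and $\m(x+1)=\max(\ell,\m(x))$.

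First I would unpack both canonical policies explicitly.  For~$\pi_x$, applying conditions <a>, <b>, <c> to the set bits $1,\dots,\ell-1$ yields $\travel(1)$ at~$t$, $\enter(j)$ at~$a_j$ for $j\in[\ell-1]$, and $\leave(j)$ at~$b_j$ for $j\in[\ell-2]$; the vertices from~$a_\ell$ onward are configured either by <2> (if $\ell>\m(x)$) or by the two sub-items of <d> applied to the smallest set bit~$i^*$ of~$x$ strictly above~$\ell$, with $\Lx{i^*}{x}=\ell-1$.  For~$\pi_{x+1}$, condition <1> gives $\travel(\ell)$ at~$t$ and condition <a> gives $\enter(\ell)$ at~$a_\ell$; condition <d> (or <b> if $\ell=2$) applied to the new set bit~$\ell$ then produces $\board(j)$ at~$a_j$ for $j\in[\ell-2]$, $\stay(j)$ at~$b_j$ for $j\in[\ell-3]$, and $\stay(\ell-1),\skipp(\ell-1),\leave(\ell-2)$; above level~$\ell$, the only change is that $\Lx{i^*}{x+1}=\ell$, which shifts the board/stay prefix generated by the second sub-item of <d> up by one position and cancels the~$j=\ell$ term that was present for~$\pi_x$.

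Comparing the two descriptions vertex by vertex shows that~$\pi_x$ and~$\pi_{x+1}$ disagree at exactly the targets of the seven phases: phase~3 handles~$t$ and~$a_\ell$; phase~4 handles $a_1,\dots,a_{\ell-2}$ (flipping $\enter$ to $\board$); phase~5 handles~$a_{\ell-1}$ (flipping $\enter$ to $\skipp$); phase~6 handles $b_1,\dots,b_{\ell-3}$ (flipping $\leave$ to $\stay$); phase~7 handles~$b_1$ when $\ell=2$.  The two guarded phases cover precisely the sub-cases where a further flip is required: phase~2 flips~$b_{\ell-1}$ from $\leave(\ell-1)$ to $\stay(\ell-1)$ iff~$\pi_x$ carries $\leave(\ell-1)$ there, which happens exactly when $x_{\ell+1}=1$ (via the first sub-item of <d> for $i=\ell+1$) or $\ell>\m(x)$ (via <2>); phase~1 flips~$b_\ell$ from $\stay(\ell)$ to $\leave(\ell)$ iff~$\pi_x$ carries $\stay(\ell)$ there, which happens exactly when $x_{\ell+1}=1$.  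The vertex~$b_{\ell-2}$ carries $\leave(\ell-2)$ in both policies, and every vertex above level~$i^*$ is unchanged, consistent with no phase targeting it.

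The main obstacle will be the sub-case analysis for~$b_{\ell-1}$ and~$b_\ell$, whose edges in~$\pi_x$ depend sensitively on whether~$i^*$ equals $\ell+1$, $\ell+2$, or is at least~$\ell+3$, and on whether~$i^*$ exists at all, since different branches of the two sub-items of <d> fire at different values of~$\Lx{i^*}{x}$.  The guards on phases~1 and~2 are tuned precisely to catch the flip-inducing sub-cases.  A minor subtlety is that when $\ell=2$, phase~2 first activates $\stay(1)$ at~$b_1$ and phase~7 subsequently overwrites it with the required $\leave(1)$; this is harmless since the lemma only asserts equality of the final policy with~$\pi_{x+1}$.
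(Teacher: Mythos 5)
Your approach---unpack $\pi_x$ and $\pi_{x+1}$ explicitly and check vertex-by-vertex that the phases flip exactly the vertices where the two policies disagree---is the same verification the paper carries out, merely organized by vertex rather than by clause of Definition~\ref{canonical policy}. Most of the matching is sound, and your observation about the harmless double-flip at~$b_1$ when~$\ell=2$ is correct and worth making explicit. However, there is a genuine error in your justification for phase~1.

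You claim that phase~1 flips~$b_\ell$ from~$\stay(\ell)$ to~$\leave(\ell)$ iff~$\pi_x$ carries~$\stay(\ell)$ there, and that this ``happens exactly when~$x_{\ell+1}=1$.'' Neither half is right. First, $\pi_x$ can carry~$\stay(\ell)$ with~$x_{\ell+1}=0$: take~$n=5$ and~$x=17$, so~$\ell=2$, $x_3=0$, and the smallest set bit above~$\ell$ is~$i^*=5$. Then <$\text{d}_2$> for~$i^*=5$ with~$\Lx 5x=1<3$ makes~$\stay(2)=\stay(\ell)$ active in~$\pi_x$, yet~$x_{\ell+1}=0$. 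Second, ``$\pi_x$ carries~$\stay(\ell)$'' is not the right criterion for when a flip is needed---in this very counterexample, $\pi_{x+1}=\pi_{18}$ also carries~$\stay(2)$, via <$\text{d}_2$> for~$i^*=5$ with~$\Lx 5{x+1}=2$, so no flip is required and phase~1 correctly stays silent. The correct statement, which your argument actually needs, is that a flip at~$b_\ell$ is required precisely when~$x_{\ell+1}=1$: in that case~$\pi_x$ has~$\stay(\ell)$ via <$\text{d}_1$> for~$i=\ell+1$ while~$\pi_{x+1}$ has~$\leave(\ell)$ via <c> for~$i=\ell+1$; and when~$x_{\ell+1}=0$ the two policies agree at~$b_\ell$ (both~$\stay(\ell)$ if~$i^*\geq\ell+3$, both~$\leave(\ell)$ if~$i^*=\ell+2$ or~$\ell>\m(x)$). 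So phase~1's guard is correct, but you justify it by an equivalence that is false. Note that the analogous reasoning for phase~2 \emph{does} go through, because~$\pi_{x+1}$ carries~$\stay(\ell-1)$ unconditionally via <$\text{d}_1$> for~$i=\ell$; the two phases share the same guard but for structurally different reasons, and your write-up papers over that asymmetry.
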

\begin{proof}
    Let~$x\in[2^n-3]$ be odd, and let~$\pi_x$ be the canonical policy for~$x$. 
    For convenience, we write~$\ell\coloneqq\ell_0(x)>1$.
    Let~$\pi$ denote the policy resulting from applying the canonical phases w.r.t.\ $x$ to~$\pi_x$.
    We need to show that~$\pi$ satisfies the properties from Definition~\ref{canonical policy} with respect to~$x+1$.
    
    Due to the third canonical phase,~$\travel(\ell)$ is active in~$\pi$.
    Since~$\ell_1(x+1)=\ell$, this implies that~$\pi$ satisfies condition <1> w.r.t~$x+1$.
    
    We have~$\m(x+1)\geq \m(x)$, and the canonical phases solely include switches in the first~$\ell$ levels.
    Therefore, if~$\ell<\m(x)$, level~$\m(x+1)$ and the levels above remain unchanged.     
    If~$\ell\geq \m(x)$, we know that~$\ell=\m(x)+1=\m(x+1)$ and~$x_{\ell+1}=0$. Hence, the only switches in level~$\m(x+1)$ or above are~$\enter(\ell)$ and~$\travel(\ell)$.
    In both cases, we obtain that, as~$\pi_x$ satisfies property <2> w.r.t.\ $x$, policy~$\pi$ satisfies property <2> w.r.t.\ $x+1$.
    
    Note that the bit configurations of~$x+1$ and~$x$ differ precisely in the first~$\ell>1$ bits, where we have~$(x+1)_{\ell}=1$ and~$(x+1)_j=0$ for all~$j\in[\ell-1]$. 
    Due to the third canonical phase,~$\enter(\ell)$ is active in~$\pi$; and due to the fourth and fifth phases,~$\enter(i)$ is inactive for all~$i\in[\ell-1]$. 
    Further, the canonical phases do not contain switches in levels above level~$\ell$.
    Since~$\pi_x$ satisfies <a> w.r.t.\ $x$, we can conclude that policy~$\pi$ satisfies <a> w.r.t.\ $x+1$.
    
    If~$(x+1)_2=1$, then~$x$ being odd yields~$\ell=2$.
    Thus, the fifth and seventh phases ensure that~$\skipp(1)$ and~$\leave(1)$ are active in~$\pi$.
    This yields that~$\pi$ has property~<b>.
    
    Now consider some~$i\geq 3$ with~$(x+1)_i=(x+1)_{i-1}=1$. Since~$(x+1)_j=0$ for all~$j\in[\ell-1]$, this yields~$i-1\geq\ell$. 
    
    If~$i-1=\ell$, then~$x_{\ell+1}=(x+1)_{\ell+1}=(x+1)_i=1$, so the first phase ensures that~$\leave(i-1)$ is active in~$\pi$.    
    Otherwise, none of the phases include a switch in level~$i-1$ or above, and we have~$x_i=(x+1)_i=1$ as well as~$x_{i-1}=(x+1)_{i-1}=1$. 
    Thus, property <c> w.r.t.\ $x$ yields that~$\leave(i-1)$ is active in~$\pi_x$. 
    Therefore,~$\leave(i-1)$ is still active in~$\pi$.
    We conclude that~$\pi$ satisfies property <c>.
    
    Now consider some~$i\geq3$ with~$(x+1)_i=1$ and~$(x+1)_{i-1}=0$.
    To show that~<$\text{d}_1$> and~<$\text{d}_2$> are satisfied, we consider the cases~$\ell<i$ and~$\ell\geq i$.
    
    First, assume that~$\ell<i$. 
    Then,~$(x+1)_\ell=1$ yields that~$\ell\neq i-1$, so~$\ell\leq i-2$. 
    Therefore, we have~$x_i=(x+1)_i=1$ and~$x_{i-1}=(x+1)_{i-1}=0$. 
    As~$\pi_x$ satisfies property~<$\text{d}_1$> w.r.t.\ $x$ and as the phases do not contain switches above level~$\ell$, we obtain that~$\stay(i-1)$ and~$\skipp(i-1)$ are still active in~$\pi$. 
    Further, the only switches in level~$\ell$ are~$\leave(\ell)$,~$\enter(\ell)$, and~$\travel(\ell)$. 
    Therefore,~$\leave(i-2)$ is also still active in~$\pi$. 
    Hence, in case~$\ell<i$, property~<$\text{d}_1$> is satisfied.
    
    Due to~$(x+1)_{\ell}=1$ and~$\ell\leq i-2$, we have~$\Lx i {x+1}\geq\ell>1$. 
    As condition~<$\text{d}_2$> \mbox{w.r.t.\ $x+1$} becomes trivial otherwise, assume~$\Lx i {x+1}<i-2$ in the following.

    Since the bit configurations of~$x$ and~$x+1$ do not differ from each other above bit~$\ell$ and~$\Lx i {x+1}\geq\ell$, we have~$\Lx i {x+1}\geq \Lx i {x}$.
    policy~$\pi_x$ satisfies~<$\text{d}_2$> \mbox{w.r.t.\ $x$} and the canonical phases do not apply switches above level~$\ell$.
    Therefore, under the condition that~$\stay(\ell)$ is active in~$\pi$ if~$\Lx i {x+1}=\ell$, we obtain that~$\pi$ satisfies~<$\text{d}_2$> w.r.t.\ $x+1$.
    
    To verify this condition, assume~$\Lx i {x+1}=\ell$. 
    Then,~$\Lx ix\leq\ell<i-2$ and property~<$\text{d}_2$> yields that~$\stay(\ell)$ is active in~$\pi_x$.
    The switch~$\leave(\ell)$ from the first phase is the only switch which can cause that~$\stay(\ell)$ is not active in~$\pi$ anymore. 
    However, if~$x_{\ell+1}=1$, we also have~$(x+1)_{\ell+1}=1$.
    This yields~$\Lx i {x+1}\geq \ell+1$, which contradicts~$\Lx i {x+1}=\ell$. 
    Hence,~$\stay(\ell)$ is still active in~$\pi$, which verifies our condition. 
    We conclude that, in case~$\ell<i$, policy~$\pi$ satisfies property~<$\text{d}_2$>.
    
    Second, we assume~$\ell\geq i$. 
    As~$(x+1)_i=1$ and~$(x+1)_j=0$ for all~$j\in[\ell-1]$, we then have~$\ell=i$ and~$\Lx i {x+1}=1$.  
    
    Due to the fifth phase,~$\skipp(\ell-1)$ is active in~$\pi$. 
    Further,~$x_{\ell-1}=x_{\ell-2}=1$ and properties~<b> and <c> yield that~$\leave(\ell-2)$ is active in~$\pi_x$.
    Since none of the phases includes the switch~$\stay(\ell-2)$, we can conclude that~$\leave(\ell-2)$ is still active in~$\pi$.
    For property~<$\text{d}_1$>, it remains to show that~$\stay(\ell-1)$ is active in~$\pi$. 
    
    We can assume that~$x_{\ell+1}=0$ and~$\ell\leq\m(x)$ as otherwise the second phase yields that~$\stay(\ell-1)$ is active in~$\pi$. 
    Then, there is some~$j>\ell+1$ with~$\Lx jx=\ell-1<j-2$ and~$x_j=1$.
    Hence, property~<$\text{d}_2$> w.r.t.\ $x$ yields that~$\stay(\ell-1)$ is active \mbox{in~$\pi_x$}. 
    As we have~$\ell=i\geq3$, none of the phases includes the switch~$\leave(\ell-1)$, so~$\stay(\ell-1)$ is still active in~$\pi$.   
    Hence, policy~$\pi$ satisfies property~<$\text{d}_1$>. 
    
    We have~$\ell=i\geq3$, so~$\board(1)$ is active in~$\pi$ due to the fourth phase.
    Since it holds that~$\Lx i {x+1}=1$, we obtain that~$\pi$ satisfies property~<$\text{d}_2$> w.r.t.\ $x+1$ if~$i=3$. 
    We thus assume~$\ell\geq4$.
    Then, the fourth phase yields that~$\board(j)$ is active for all~$j\in[i-2]$, and the sixth phase yields that~$\stay(j)$ is active for all~$j\in[i-3]$.
    Therefore, policy~$\pi$ satisfies condition~<$\text{d}_2$> w.r.t.\ $x+1$. 
    This concludes the proof.
\end{proof}

Finally, we show that \textsc{Bland} actually applies the canonical phases when given the corresponding canonical policy. 

\begin{restatable}{lemma}{lemoddcanphases}\label{Bland applies odd can phases}
    Let~$x\in[2^n-3]$ be odd.
    Then,~$\textsc{Bland}(\mbl,\pi_x,\order)$ visits~$\pi_{x+1}$. 
\end{restatable}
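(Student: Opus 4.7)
The plan is to prove the lemma by showing that $\textsc{Bland}(\mbl,\pi_x,\order)$ applies precisely the canonical phases with respect to $x$ in the order specified by Definition \ref{canonical phases odd}; Lemma \ref{odd phases result in next canonical} then immediately yields that $\pi_{x+1}$ is visited. I proceed by induction on the number of canonical phases already executed, carrying along an explicit description of the current intermediate policy (determined by $\pi_x$ together with the prefix of phases applied so far). At each induction step I must verify that (i) the next switch prescribed by the canonical phases is improving and (ii) no other improving switch has a strictly smaller Bland number.

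To check (i) and (ii), I would first compute $\Valsolo_{\pi_x}$ at every vertex by tracing the acyclic walk that $\pi_x$ follows to the sink, using the explicit rewards $2^i$, $0.75$, and $-2^i+1.25$. After each intermediate switch only one vertex value changes, so the new reduced costs can be read off directly. A case split on whether $x_{\ell+1}=1$ and on whether $\ell>\m(x)$, which is exactly the split that determines which of Phases 1 and 2 fire, handles the base of the induction; each subsequent step then becomes an essentially mechanical comparison using that, within each level, the Bland numbering orders edges as $\enter,\skipp,\board,\stay,\leave$ and across levels it orders by level index.

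The main obstacle is controlling the travel edges: they carry the smallest Bland numbers $1,\dots,n$ and therefore preempt every other switch as soon as they become improving. Lemma \ref{no pointer improving} rules them out for $\pi_x$. Phases 1 and 2 only touch $b$-vertices below level $\ell$ and hence do not change any $\Valsolo(a_j)$, so no travel edge becomes improving during those phases; within-level Bland ordering then selects $\leave(\ell)$ first and $\stay(\ell-1)$ next, matching the prescription. Applying $\enter(\ell)$ in Phase 3 raises $\Valsolo(a_\ell)$ by at least $2^\ell$, which makes $\travel(\ell)$ improving, and $\travel(j)$ for $j<\ell$ remains non-improving because the leave/stay path prescribed by Definition \ref{canonical policy} shows $\Valsolo(a_j)\le \Valsolo(a_\ell)$ (the same mechanism as in the proof of Lemma \ref{no pointer improving}). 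During Phases 4--7 every switch lies strictly below level $\ell$ and leaves $\Valsolo(a_\ell)$ unchanged while keeping the bound $\Valsolo(a_j)\le\Valsolo(a_\ell)$ for $j<\ell$ intact, so no further travel edge is ever enabled.

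With the travel edges handled, the verification of (ii) during Phases 4--7 reduces to the within-level ordering $\enter,\skipp,\board,\stay,\leave$, which is consistent with activating boards bottom-up, then $\skipp(\ell-1)$, then stays top-down, and finally $\leave(1)$ when $\ell=2$. Once all canonical phases have been applied, Lemma \ref{odd phases result in next canonical} gives that the resulting policy is $\pi_{x+1}$, completing the argument.
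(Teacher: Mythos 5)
Your overall plan matches the paper's proof: reduce the claim to showing that \textsc{Bland} applies the canonical phases in order, then invoke Lemma~\ref{odd phases result in next canonical}. However, the verification step---showing at each point that the prescribed switch is the one with smallest Bland number among the improving switches---is where your argument has concrete gaps.

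First, the claim that ``Phases 1 and 2 only touch $b$-vertices below level $\ell$ and hence do not change any $\Valsolo(a_j)$'' is false. Phase 1 touches $b_\ell$ (level $\ell$, not below), and after Phase~2 ($\stay(\ell-1)$) the value of every $a_j$ with $j\le \ell-1$ changes: in $\pi_x$ (and after Phase~1) the policy reaches $b_{\ell-1}$ from each such $a_j$, so changing the choice at $b_{\ell-1}$ propagates. The conclusion that no travel edge becomes improving still happens to hold, but only because $\Valsolo(t)=\Valsolo(a_1)$ moves by the same amount; that is a different argument than the one you give.

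Second, and more seriously, your claims about the Bland ordering ``matching the prescription'' do not hold by ordering alone. For example, $\order(\stay(\ell-1))=n+5\ell-6$ is strictly smaller than $\order(\leave(\ell))=n+5\ell$, so if both were improving, \textsc{Bland} would apply $\stay(\ell-1)$ first---the opposite of the prescribed order. The reason $\leave(\ell)$ is applied first is that $\stay(\ell-1)$ is \emph{not improving} for $\pi_x$ (one computes $z_{\pi_x}(\stay(\ell-1))=1.5-2^{\ell+1}<0$ in the case $x_{\ell+1}=1$); you never argue this. The same problem pervades your treatment of Phases~4--7: you assert that the within-level ordering $\enter<\skipp<\board<\stay<\leave$ ``is consistent with'' the phase order, but e.g.\ $\order(\stay(1))=n+4<n+8=\order(\board(2))$, and yet all boards are applied before all stays, while stays are applied in \emph{decreasing} Bland order. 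None of this follows from the numbering; it follows only because the relevant switches become improving one at a time in exactly the prescribed sequence. That is precisely the content you need to prove, and it requires the careful step-by-step accounting of which edges are improving after each switch that the paper carries out (e.g., arguing that after $\skipp(\ell-1)$ only $\stay(\ell-3)$ is newly improving, that applying it makes only $\enter(\ell-3)$ and $\stay(\ell-4)$ improving, and so on). As written, your proposal outsources this entire verification to ``mechanical comparison'' and the Bland ordering, which is where the actual difficulty lies.
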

\begin{proof}
    Let~$x\in[2^n-3]$ be odd, let~$\pi_x$ be the canonical policy for~$x$, and write~$\ell\coloneqq\ell_0(x)\geq2$.
    By Lemma~\ref{odd phases result in next canonical}, it suffices to show that \textsc{Bland} applies the canonical phases w.r.t.\ $x$ to~$\pi_x$.
    
    By properties~<a>,~<b>, and~<c> from Definition~\ref{canonical policy}, the edges~$\enter(\ell-1)$,~$\enter(i)$ and~$\leave(i)$ are active in~$\pi_x$ for all~$i\in[\ell-2]$.
    Further,~$\travel(1)$ is active, so there are clearly no improving switches for~$\pi_x$ in the first~$\ell-2$ levels. 
    By Lemma \ref{no pointer improving}, there are also no improving switches for~$\pi_x$ in~$t$.

    The first two canonical phases depend on the structure of~$x$, so we consider all possible cases here.
    First, assume~$x_{\ell+1}=1$. 
    Then, by properties~<a> and~<$\text{d}_1$>,~$\enter(\ell+1)$, $\stay(\ell)$,~$\skipp(\ell)$, and~$\leave(\ell-1)$ are active in~$\pi_x$, cf. Figure~\ref{fig:for Bland proof a}.
    Hence, there is no improving switch in level~$\ell-1$, and~$\leave(\ell)$ is the only improving switch in level~$\ell$.
    Since the Bland numbering~$\order$ prefers low levels, the algorithm applies~$\leave(\ell)$ to~$\pi_x$.

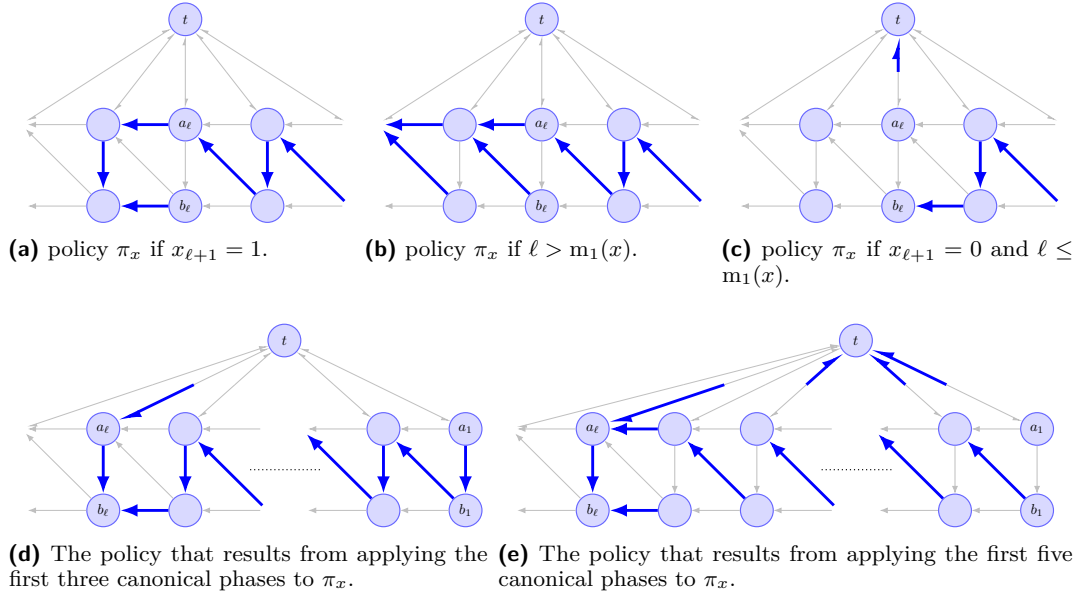
\begin{figure}
\centering
\begin{subfigure}{0.33\textwidth}
\centering
        \scalebox{0.54}{
        \begin{tikzpicture}[node distance = 2cm,
      p0/.style={circle, draw=blue!60, fill=blue!15, thick, minimum size=8mm},
      inv/.style={circle, draw=white, fill=white},
      bl/.style={fill=white,draw=gray!30,text=gray,font=\footnotesize},
      every path/.style={->,draw=blue,line width=.5pt,{>=Latex}}]
    
      \node[inv] (A) {};
      \node[inv,below of=A] (B) {};
      
      \node[p0,left of=A] (C) {};
      \node[p0,left of=B] (D) {};
      \draw[lightgray] (A) -- (C);
      \draw[lightgray] (B) -- (D);
      \draw[line width=2pt] (B) -- (C);
      \draw[line width=2pt] (C) -- (D);
      
      \node[p0,left of=C] (E) {$a_{\ell}$};
      \node[p0,left of=D] (F) {$b_{\ell}$};
      \draw[lightgray] (C) -- (E);
      \draw[line width=2pt] (D) -- (E);
      \draw[lightgray] (D) -- (F);
      \draw[lightgray] (E) -- (F);

      \node[p0,left of=E] (G) {};
      \node[p0,left of=F] (H) {};
      \draw[line width=2pt] (E) -- (G);      
      \draw[lightgray] (F) -- (G);
      \draw[line width=2pt] (F) -- (H);
      \draw[line width=2pt] (G) -- (H);

      \node[inv,left of=G] (I) {};
      \node[inv,left of=H] (J) {};
      \draw[lightgray] (G) -- (I);      
      \draw[lightgray] (H) -- (I);
      \draw[lightgray] (H) -- (J);
      
      \node[p0, above = 5em of E] (O) {$t$};
      \draw[arrows = {->[harpoon,swap]},lightgray] ($(A)!0.5!(O)$) -- (O);
      \draw[arrows = {->[harpoon,swap]},lightgray] ($(O)!0.5!(A)$) -- node [] {} (A);

      \draw[arrows = {->[harpoon,swap]},lightgray] ($(C)!0.5!(O)$) -- (O);
      \draw[arrows = {->[harpoon,swap]},lightgray] ($(O)!0.5!(C)$) -- node [] {} (C);

      \draw[arrows = {->[harpoon]},lightgray] ($(E)!0.5!(O)$) -- (O);
      \draw[arrows = {->[harpoon]},lightgray] ($(O)!0.5!(E)$) -- node [] {} (E);

      \draw[arrows = {->[harpoon]},lightgray] ($(G)!0.5!(O)$) -- (O);
      \draw[arrows = {->[harpoon]},lightgray] ($(O)!0.5!(G)$) -- node [] {} (G);
      
      \draw[arrows = {->[harpoon]},lightgray] ($(I)!0.5!(O)$) -- (O);
      \draw[arrows = {->[harpoon]},lightgray] ($(O)!0.5!(I)$) -- node [] {} (I);
    
    \end{tikzpicture}
    }
    
    \caption{policy~$\pi_x$ if~$x_{\ell+1}=1$.\\~}\label{fig:for Bland proof a}
    \end{subfigure}\hfill
    \begin{subfigure}{0.33\textwidth}
        \scalebox{0.54}{
        \begin{tikzpicture}[node distance = 2cm,
      p0/.style={circle, draw=blue!60, fill=blue!15, thick, minimum size=8mm},
      inv/.style={circle, draw=white, fill=white},
      bl/.style={fill=white,draw=gray!30,text=gray,font=\footnotesize},
      every path/.style={->,draw=blue,line width=.5pt,{>=Latex}}]
    
      \node[inv] (A) {};
      \node[inv,below of=A] (B) {};
      
      \node[p0,left of=A] (C) {};
      \node[p0,left of=B] (D) {};
      \draw[lightgray] (A) -- (C);
      \draw[lightgray] (B) -- (D);
      \draw[line width=2pt] (B) -- (C);
      \draw[line width=2pt] (C) -- (D);
      
      \node[p0,left of=C] (E) {$a_{\ell}$};
      \node[p0,left of=D] (F) {$b_{\ell}$};
      \draw[lightgray] (C) -- (E);
      \draw[line width=2pt] (D) -- (E);
      \draw[lightgray] (D) -- (F);
      \draw[lightgray] (E) -- (F);

      \node[p0,left of=E] (G) {};
      \node[p0,left of=F] (H) {};
      \draw[line width=2pt] (E) -- (G);      
      \draw[line width=2pt] (F) -- (G);
      \draw[lightgray] (F) -- (H);
      \draw[lightgray] (G) -- (H);

      \node[inv,left of=G] (I) {};
      \node[inv,left of=H] (J) {};
      \draw[line width=2pt] (G) -- (I);      
      \draw[line width=2pt] (H) -- (I);
      \draw[lightgray] (H) -- (J);
      
      \node[p0, above = 5em of E] (O) {$t$};
      \draw[arrows = {->[harpoon,swap]},lightgray] ($(A)!0.5!(O)$) -- (O);
      \draw[arrows = {->[harpoon,swap]},lightgray] ($(O)!0.5!(A)$) -- node [] {} (A);

      \draw[arrows = {->[harpoon,swap]},lightgray] ($(C)!0.5!(O)$) -- (O);
      \draw[arrows = {->[harpoon,swap]},lightgray] ($(O)!0.5!(C)$) -- node [] {} (C);

      \draw[arrows = {->[harpoon]},lightgray] ($(E)!0.5!(O)$) -- (O);
      \draw[arrows = {->[harpoon]},lightgray] ($(O)!0.5!(E)$) -- node [] {} (E);

      \draw[arrows = {->[harpoon]},lightgray] ($(G)!0.5!(O)$) -- (O);
      \draw[arrows = {->[harpoon]},lightgray] ($(O)!0.5!(G)$) -- node [] {} (G);
      
      \draw[arrows = {->[harpoon]},lightgray] ($(I)!0.5!(O)$) -- (O);
      \draw[arrows = {->[harpoon]},lightgray] ($(O)!0.5!(I)$) -- node [] {} (I);
    
    \end{tikzpicture}
    }
    
    \caption{policy~$\pi_x$ if~$\ell>\m(x)$.\\~}\label{fig:for Bland proof b}
    \end{subfigure}\hfill
    \begin{subfigure}{0.33\textwidth}
        \scalebox{0.54}{
        \begin{tikzpicture}[node distance = 2cm,
      p0/.style={circle, draw=blue!60, fill=blue!15, thick, minimum size=8mm},
      inv/.style={circle, draw=white, fill=white},
      bl/.style={fill=white,draw=gray!30,text=gray,font=\footnotesize},
      every path/.style={->,draw=blue,line width=.5pt,{>=Latex}}]
    
      \node[inv] (A) {};
      \node[inv,below of=A] (B) {};
      
      \node[p0,left of=A] (C) {};
      \node[p0,left of=B] (D) {};
      \draw[lightgray] (A) -- (C);
      \draw[lightgray] (B) -- (D);
      \draw[line width=2pt] (B) -- (C);
      \draw[line width=2pt] (C) -- (D);
      
      \node[p0,left of=C] (E) {$a_{\ell}$};
      \node[p0,left of=D] (F) {$b_{\ell}$};
      \draw[lightgray] (C) -- (E);
      \draw[lightgray] (D) -- (E);
      \draw[line width=2pt] (D) -- (F);
      \draw[lightgray] (E) -- (F);

      \node[p0,left of=E] (G) {};
      \node[p0,left of=F] (H) {};
      \draw[lightgray] (E) -- (G);      
      \draw[lightgray] (F) -- (G);
      \draw[lightgray] (F) -- (H);
      \draw[lightgray] (G) -- (H);

      \node[inv,left of=G] (I) {};
      \node[inv,left of=H] (J) {};
      \draw[lightgray] (G) -- (I);      
      \draw[lightgray] (H) -- (I);
      \draw[lightgray] (H) -- (J);

      \node[p0, above = 5em of E] (O) {$t$};
      \draw[arrows = {->[harpoon,swap]},lightgray] ($(A)!0.5!(O)$) -- (O);
      \draw[arrows = {->[harpoon,swap]},lightgray] ($(O)!0.5!(A)$) -- node [] {} (A);

      \draw[arrows = {->[harpoon,swap]},lightgray] ($(C)!0.5!(O)$) -- (O);
      \draw[arrows = {->[harpoon,swap]},lightgray] ($(O)!0.5!(C)$) -- node [] {} (C);

      \draw[arrows = {->[harpoon]},blue,line width=2pt] ($(E)!0.5!(O)$) -- (O);
      \draw[arrows = {->[harpoon]},lightgray] ($(O)!0.5!(E)$) -- node [] {} (E);

      \draw[arrows = {->[harpoon]},lightgray] ($(G)!0.5!(O)$) -- (O);
      \draw[arrows = {->[harpoon]},lightgray] ($(O)!0.5!(G)$) -- node [] {} (G);
      
      \draw[arrows = {->[harpoon]},lightgray] ($(I)!0.5!(O)$) -- (O);
      \draw[arrows = {->[harpoon]},lightgray] ($(O)!0.5!(I)$) -- node [] {} (I);
    
    \end{tikzpicture}
        }
    
    \caption{policy~$\pi_x$ if~$x_{\ell+1}=0$ and~$\ell\leq\m(x)$.}\label{fig:for Bland proof c}
    \end{subfigure}
        
        \bigskip
        
        \begin{subfigure}{0.45\textwidth}
        \scalebox{0.54}{
        \begin{tikzpicture}[node distance = 2cm,
      p0/.style={circle, draw=blue!60, fill=blue!15, thick, minimum size=8mm},
      inv/.style={circle, draw=white, fill=white},
      bl/.style={fill=white,draw=gray!30,text=gray,font=\footnotesize},
      every path/.style={->,draw=blue,line width=.5pt,{>=Latex}}]
    
      \node[p0] (A) {$a_1$};
      \node[p0,below of=A] (B) {$b_1$};
      \draw[line width=2pt] (A) -- (B);
      
      \node[p0,left of=A] (C) {};
      \node[p0,left of=B] (D) {};
      \draw[lightgray] (A) -- (C);
      \draw[lightgray] (B) -- (D);
      \draw[line width=2pt] (B) -- (C);
      \draw[line width=2pt] (C) -- (D);

      \node[inv,left of=C] (C') {};
      \node[inv,left of=D] (D') {};
      \draw[lightgray] (C) -- (C');
      \draw[lightgray] (D) -- (D');
      \draw[line width=2pt] (D) -- (C');
      
      \node[p0,left = 4cm of C] (E) {};
      \node[p0,left = 4cm of D] (F) {};
      \node[inv, left = 1.2cm of $(C)!0.5!(D)$] (help1) {};
      \node[inv, right = 1.2cm of $(E)!0.5!(F)$] (help2) {};
      \draw[thick, dotted,black,-] (help1) -- (help2);
      \draw[line width=2pt] (E) -- (F);

      \node[inv,right of=E] (E') {};
      \node[inv,right of=F] (F') {};
      \draw[lightgray] (E') -- (E);
      \draw[lightgray] (F') -- (F);
      \draw[line width=2pt] (F') -- (E);

      \node[p0,left of=E] (G) {$a_{\ell}$};
      \node[p0,left of=F] (H) {$b_{\ell}$};
      \draw[lightgray] (E) -- (G);      
      \draw[lightgray] (F) -- (G);
      \draw[line width=2pt] (F) -- (H);
      \draw[line width=2pt] (G) -- (H);

      \node[inv,left of=G] (I) {};
      \node[inv,left of=H] (J) {};
      \draw[lightgray] (G) -- (I);      
      \draw[lightgray] (H) -- (I);
      \draw[lightgray] (H) -- (J);

      \node[p0, above = 5em of $(C)!0.5!(E)$] (O) {$t$};
      \draw[arrows = {->[harpoon,swap]},lightgray] ($(A)!0.5!(O)$) -- (O);
      \draw[arrows = {->[harpoon,swap]},lightgray] ($(O)!0.5!(A)$) -- node [] {} (A);

      \draw[arrows = {->[harpoon,swap]},lightgray] ($(C)!0.5!(O)$) -- (O);
      \draw[arrows = {->[harpoon,swap]},lightgray] ($(O)!0.5!(C)$) -- node [] {} (C);

      \draw[arrows = {->[harpoon]},lightgray] ($(E)!0.5!(O)$) -- (O);
      \draw[arrows = {->[harpoon]},lightgray] ($(O)!0.5!(E)$) -- node [] {} (E);

      \draw[arrows = {->[harpoon]},lightgray] ($(G)!0.5!(O)$) -- (O);
      \draw[arrows = {->[harpoon]},blue, line width=2pt] ($(O)!0.5!(G)$) -- node [] {} (G);
      
      \draw[arrows = {->[harpoon]},lightgray] ($(I)!0.5!(O)$) -- (O);
      \draw[arrows = {->[harpoon]},lightgray] ($(O)!0.5!(I)$) -- node [] {} (I);
    
    \end{tikzpicture}
    }
    
    \caption{The policy that results from applying the first three canonical phases to~$\pi_x$.}\label{fig:for Bland proof d}
    \end{subfigure}\hfill
    \begin{subfigure}{0.54\textwidth}
        \scalebox{0.54}{
        \begin{tikzpicture}[node distance = 2cm,
      p0/.style={circle, draw=blue!60, fill=blue!15, thick, minimum size=8mm},
      inv/.style={circle, draw=white, fill=white},
      bl/.style={fill=white,draw=gray!30,text=gray,font=\footnotesize},
      every path/.style={->,draw=blue,line width=.5pt,{>=Latex}}]
    
      \node[p0] (A) {$a_1$};
      \node[p0,below of=A] (B) {$b_1$};
      \draw[lightgray] (A) -- (B);
      
      \node[p0,left of=A] (C) {};
      \node[p0,left of=B] (D) {};
      \draw[lightgray] (A) -- (C);
      \draw[lightgray] (B) -- (D);
      \draw[line width=2pt] (B) -- (C);
      \draw[lightgray] (C) -- (D);

      \node[inv,left of=C] (C') {};
      \node[inv,left of=D] (D') {};
      \draw[lightgray] (C) -- (C');
      \draw[lightgray] (D) -- (D');
      \draw[line width=2pt] (D) -- (C');
      
      \node[p0,left = 4cm of C] (E) {};
      \node[p0,left = 4cm of D] (F) {};
      \node[inv, left = 1.2cm of $(C)!0.5!(D)$] (help1) {};
      \node[inv, right = 1.2cm of $(E)!0.5!(F)$] (help2) {};
      \draw[thick, dotted,black,-] (help1) -- (help2);
      \draw[lightgray] (E) -- (F);

      \node[inv,right of=E] (E') {};
      \node[inv,right of=F] (F') {};
      \draw[lightgray] (E') -- (E);
      \draw[lightgray] (F') -- (F);
      \draw[line width=2pt] (F') -- (E);

      \node[p0,left of=E] (E'') {};
      \node[p0,left of=F] (F'') {};
      \draw[lightgray] (E) -- (E'');
      \draw[lightgray] (F) -- (F'');
      \draw[line width=2pt] (F) -- (E'');
      \draw[lightgray] (E'') -- (F'');

      \node[p0,left of=E''] (G) {$a_{\ell}$};
      \node[p0,left of=F''] (H) {$b_{\ell}$};
      \draw[line width=2pt] (E'') -- (G);      
      \draw[lightgray] (F'') -- (G);
      \draw[line width=2pt] (F'') -- (H);
      \draw[line width=2pt] (G) -- (H);

      \node[inv,left of=G] (I) {};
      \node[inv,left of=H] (J) {};
      \draw[lightgray] (G) -- (I);      
      \draw[lightgray] (H) -- (I);
      \draw[lightgray] (H) -- (J);

      \node[p0, above = 5em of $(C)!0.5!(E)$] (O) {$t$};
      \draw[arrows = {->[harpoon,swap]},blue, line width=2pt] ($(A)!0.5!(O)$) -- (O);
      \draw[arrows = {->[harpoon,swap]},lightgray] ($(O)!0.5!(A)$) -- node [] {} (A);

      \draw[arrows = {->[harpoon,swap]},blue, line width=2pt] ($(C)!0.5!(O)$) -- (O);
      \draw[arrows = {->[harpoon,swap]},lightgray] ($(O)!0.5!(C)$) -- node [] {} (C);

      \draw[arrows = {->[harpoon]},blue, line width=2pt] ($(E)!0.5!(O)$) -- (O);
      \draw[arrows = {->[harpoon]},lightgray] ($(O)!0.5!(E)$) -- node [] {} (E);
      
      \draw[arrows = {->[harpoon]},lightgray] ($(E'')!0.5!(O)$) -- (O);
      \draw[arrows = {->[harpoon]},lightgray] ($(O)!0.5!(E'')$) -- node [] {} (E'');

      \draw[arrows = {->[harpoon]},lightgray] ($(G)!0.5!(O)$) -- (O);
      \draw[arrows = {->[harpoon]},blue, line width=2pt] ($(O)!0.5!(G)$) -- node [] {} (G);
      
      \draw[arrows = {->[harpoon]},lightgray] ($(I)!0.5!(O)$) -- (O);
      \draw[arrows = {->[harpoon]},lightgray] ($(O)!0.5!(I)$) -- node [] {} (I);
    
    \end{tikzpicture}
        }
    
    \caption{The policy that results from applying the first five canonical phases to~$\pi_x$.}\label{fig:for Bland proof e}
    \end{subfigure}
        \caption{A collection of figures supporting the proof of Lemma~\ref{Bland applies odd can phases}, where~$\pi_x$ denotes the canonical policy for some odd~$x\in[2^n-3]$. Bold edges denote active edges in the respective policy; some of the remaining edges might be active as well.}
        \label{fig:for Bland proof}
    \end{figure}

    After this switch, the edges~$\enter(\ell)$ and~$\stay(\ell-1)$ are the only improving switches in the first~$\ell$ levels.
    Thus,~$\stay(\ell-1)$ gets applied next.

    Second, assume~$\ell>\m(x)$. This yields~$\ell=\m(x)+1$ and~$x_{\ell+1}=0$.
    Due to property~<2>, the edges~$\leave(\ell-1)$,~$\skipp(i)$, and~$\leave(i)$ are active for all~$i\geq\ell$, cf. Figure~\ref{fig:for Bland proof b}.
    Hence, the only improving switch in the first~$\ell-1$ levels is~$\stay(\ell-1)$, which gets applied next.

    Lastly, assume~$x_{\ell+1}=0$ and~$\ell\leq\m(x)$. 
    Then, there is some~$i>\ell+1$ with~$x_i=1$ and~$\Lx ix=\ell-1<i-2$.
    Hence, property~<$\text{d}_2$> yields that~$\board(\ell)$ and~$\stay(\ell-1)$ are active in~$\pi_x$, cf. Figure~\ref{fig:for Bland proof c}.
    
    Let~$\pi$ be the policy resulting from the application of the first and second canonical phases to~$\pi_x$.
    From the case distinction above, we can conclude that~$\pi$ includes the path~$(t,a_1,b_1,a_2,b_2,\ldots,a_{\ell-1},b_{\ell-1},b_\ell)$.
    Further, we either know that~$\leave(\ell)$ and~$\skipp(\ell)$ are active or we know that~$\board(\ell)$ is active.
    In both cases, there are no improving switches in the first~$\ell-1$ levels and~$\enter(\ell)$ is improving.
    Hence, by definition of~$\order$,~$\enter(\ell)$ gets applied next.

    Entering level~$\ell$ yields a higher reward than entering all of the first~$\ell-1$ levels, so the edge~$\travel(\ell)$ is improving now.
    Since the~$\ell-1$ travel edges with a smaller Bland number are not improving, the algorithm applies~$\travel(\ell)$ next.

    Note that, by Lemma \ref{no pointer improving}, travel edges are not improving for~$\pi_x$, and since travel edges above level~$\ell$ have certainly not become improving during the application of the first three phases, these edges are still not improving for the current policy.    
    Consider Figure~\ref{fig:for Bland proof d} for the structure of the current policy on the first~$\ell$ levels.

    The first~$\ell-1$ levels contain the improving switches~$\skipp(\ell-1)$,~$\leave(\ell-1)$, and~$\board(j)$ for~$j\in[\ell-1]$ as these edges allow vertices of lower levels to enter level~$\ell$.
    The Bland numbering~$\order$ yields that \textsc{Bland} applies~$\board(1)$ if~$\ell\geq3$.
    This switch does not create new improving switches.
    Following this argument, we obtain that the algorithm applies~$\board(j)$ for all~$j\in[\ell-2]$ in increasing order.
    Since~$\skipp(\ell-1)$ is still improving after that and since it precedes~$\board(\ell-1)$ and~$\leave(\ell-1)$ in the Bland numbering, it gets applied next.

    Let~$\pi'$ denote the current policy, i.e., the policy resulting from the application of the first five phases to~$\pi_x$. 
    The structure of~$\pi'$ on the first~$\ell$ levels is given in Figure~\ref{fig:for Bland proof e}. 
    Note that the travel edges still have not become improving during the latest switches.

    If~$\ell=2$, then~$\leave(1)$ is the only improving switch for~$\pi'$ in the first level.
    Hence, the algorithm applies this switch and the canonical phases are completed.

    They are also completed if~$\ell=3$, so assume~$\ell\geq4$ in the following.    
    Applying~$\skipp(i)$ or~$\enter(i)$ in some level~$i\in[\ell-3]$ is not improving for~$\pi'$ since the costs of~$\travel(i+1)$ are higher than the costs of~$\travel(i)$.
    Similarly,~$\stay(i)$ is not improving for~$i\in[\ell-4]$ as~$\travel(i+2)$ is more expensive than~$\travel(i+1)$.
    
    However, by switching to~$\stay(\ell-3)$, the agent avoids to use the edge~$\travel(\ell-2)$ and reaches~$a_\ell$ via~$\leave(\ell-2)$ and~$\skipp(\ell-1)$ instead.
    Hence,~$\stay(\ell-3)$ is the only improving switch in the first~$\ell-3$ levels and gets applied next.

    This switch makes~$\enter(\ell-3)$ and~$\stay(\ell-4)$ improving, so the Bland numbering~$\order$ yields that the algorithm applies~$\stay(\ell-4)$ next.
    By iterating this argument, we obtain that \textsc{Bland} applies the improving switch~$\stay(j)$ for all~$j\in[\ell-3]$ in decreasing order, which completes the canonical phases entirely.

\end{proof}

According to Lemma \ref{Bland applies even can phases}, \textsc{Bland} transforms every even canonical policy~$\pi_x$ into~$\pi_{x+1}$, and by Lemma \ref{Bland applies odd can phases}, the same holds for odd canonical policies.
Since the initial policy is canonical for zero, this yields that~$\textsc{Bland}(\mbl,\pi_0,\order)$ visits all canonical \mbox{policies}~$\pi_i$ with~$i\in[2^n-1]$.
Since these are pairwise different, this proves the main result of this section.

\begin{theorem}\label{exp auf B}
   There is an initial policy such that the policy iteration algorithm with Bland's pivot rule performs~$\Omega(2^n)$ improving switches when applied to~$\mbl$.
\end{theorem}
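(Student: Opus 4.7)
The plan is to simply combine the heavy machinery developed in Lemmas~\ref{Bland applies even can phases} and~\ref{Bland applies odd can phases}. I would take $\pi_0$ as the initial policy and argue by induction on $x \in [2^n-1]_0$ that $\textsc{Bland}(\mbl, \pi_0, \order)$ visits the canonical policy $\pi_x$. The base case $x=0$ is immediate. For the inductive step, given that the algorithm has reached $\pi_x$, I would split on the parity of $x$: Lemma~\ref{Bland applies even can phases} handles the even case, and Lemma~\ref{Bland applies odd can phases} handles the odd case, both asserting that the next policy visited is $\pi_{x+1}$.

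Once all $2^n$ canonical policies are shown to appear in the trajectory, the lower bound follows from distinctness: as already observed right after Definition~\ref{canonical policy}, the edge $\enter(i)$ is active in $\pi_x$ if and only if $x_i=1$, so the policies $\pi_0,\pi_1,\dots,\pi_{2^n-1}$ are pairwise distinct. Since policy iteration visits every policy at most once and each iteration applies exactly one improving switch, visiting $2^n$ distinct policies requires at least $2^n-1 = \Omega(2^n)$ improving switches.

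To ensure this application of the two key lemmas is legitimate, I would also briefly note that the preconditions of Algorithm~\ref{alg:Bland} are met: Observation~\ref{pi0 is wu for B} guarantees that $\pi_0$ is weak unichain, and Lemma~\ref{opt for B} together with Theorem~\ref{thm: PI visits only wu} implies that every intermediate policy encountered by the algorithm remains weak unichain, so applying the pivot rule step by step stays within the regime where the lemmas are stated.

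At this stage there is essentially no obstacle left: all the real work has been absorbed into the preceding lemmas (in particular the delicate case analysis in Lemma~\ref{Bland applies odd can phases}). The theorem itself is a one-line induction and a counting argument.
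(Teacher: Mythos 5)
Your proof is correct and follows exactly the same route as the paper: combine Lemma~\ref{Bland applies even can phases} and Lemma~\ref{Bland applies odd can phases} by parity to show that $\textsc{Bland}(\mbl,\pi_0,\order)$ visits all $2^n$ canonical policies, then invoke their pairwise distinctness to obtain the $\Omega(2^n)$ bound. The extra remark about weak unichain preconditions is fine but unnecessary, since the two lemmas are stated directly for canonical policies.
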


We close this section with two technical observations that help us later.

\begin{restatable}{observation}{obsenterhighestrc}\label{red costs in ai}
    For \mbox{every~$i\in[n]$}, whenever~$\textsc{Bland}(\mbl,\pi_0,\order)$ applies the improving switch~$\skipp(i)$, this edge has higher reduced costs than~$\board(i)$; whenever it applies~$\enter(i)$, this edge has higher reduced costs than~$\skipp(i)$ and~$\board(i)$.
\end{restatable}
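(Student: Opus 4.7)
The plan is to expand the reduced-cost differences as
\begin{align*}
    z_\pi(\enter(i)) - z_\pi(\skipp(i)) &= 2^i + \Val(b_i) - \Val(a_{i+1}), \\
    z_\pi(\enter(i)) - z_\pi(\board(i)) &= 2^{i+1} - 1.25 + \Val(b_i) - \Val(t), \\
    z_\pi(\skipp(i)) - z_\pi(\board(i)) &= 2^i - 1.25 + \Val(a_{i+1}) - \Val(t),
\end{align*}
and to verify that each right-hand side is positive at the appropriate moment of application. Since every visited policy is deterministic and weak unichain, each value $\Val(v)$ is a finite sum of rewards along an explicit path from $v$ to the sink, readable from the active edges via Definitions~\ref{canonical policy} and~\ref{canonical phases odd}.

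First I would catalogue every application from the proofs of Lemmas~\ref{Bland applies even can phases} and~\ref{Bland applies odd can phases}: $\enter(1)$ is applied in each even transition; $\enter(i)$ with $i \geq 2$ is applied as the first edge of phase~3 of the odd transition at~$\pi_x$ with $\ell_0(x) = i$; and $\skipp(i)$ is applied only in phase~5 of the odd transition at $\pi_x$ with $\ell_0(x) = i+1$ (in particular $\skipp(n)$ is never applied and the statement is vacuous there).

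The $\skipp(i)$ case is immediate: by the time phase~5 is invoked, phase~3 has activated $\travel(\ell)$ with $\ell = i+1$, and the phase~4 switches (all of the form $\board(j)$ with $j \in [\ell-2]$) leave both $\Val(t)$ and $\Val(a_\ell)$ unchanged because the trajectory from $a_\ell$ to the sink does not revisit any $a_j$ with $j < \ell$. Hence $\Val(t) = \Val(a_\ell) = \Val(a_{i+1})$, and the third identity collapses to $z_\pi(\skipp(i)) - z_\pi(\board(i)) = 2^i - 1.25 > 0$.

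For each application of $\enter(i)$ with $i \geq 2$ in the odd-transition case, a short induction along the $\enter$-$\leave$ chain connecting $a_1$ to $b_{i-1}$, combined with the active $\stay(i-1)$ edge (either from phase~2 or already present via property~<$d_2$> in sub-case~(c)), shows that $\Val(t) - \Val(b_i) = 2^i - 1.25$; a direct inspection of the three sub-cases of Lemma~\ref{Bland applies odd can phases} yields $\Val(a_{i+1}) - \Val(b_i) \leq 0$. Substituting these into the first two identities gives a positive slack of exactly $2^i$ in each. The case of $\enter(1)$ in an even transition is handled analogously: either $\Val(b_1) = \Val(a_2) = \Val(t)$ (which holds for $\lone(x) \leq 3$ by direct inspection), or $a_1$ has $\board(1)$ active by property~<$d_2$>, in which case combining $\Val(a_1) = \Val(t) - 0.75$ with $z_\pi(\enter(1)) > 0$ immediately yields $\Val(t) - \Val(b_1) < 2^2 - 1.25$ and $\Val(a_2) - \Val(b_1) < 2$. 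The main obstacle is the bookkeeping across the sub-cases of Lemma~\ref{Bland applies odd can phases}, but in each one the active edges are fully prescribed by Definition~\ref{canonical policy} and the computations are mechanical.
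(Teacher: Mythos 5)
Your proposal is correct and follows essentially the same route as the paper: both argue by cataloguing exactly when $\enter(i)$ and $\skipp(i)$ get applied (even transition vs.\ phase~3 or phase~5 of an odd transition) and then use the explicit policy structure at that moment. The main difference is stylistic: the paper compares paths qualitatively (e.g., ``entering level~$\ell$ yields a higher reward than following path~$P$''), whereas you substitute explicit values into the reduced-cost identities. Your observation that the common path~$(t,a_1,b_1,\dots,b_{\ell-1},b_\ell)$ yields $\Val(t)-\Val(b_\ell)=2^\ell-1.25$ in \emph{all} three sub-cases of Lemma~\ref{Bland applies odd can phases} actually collapses the paper's sub-case distinction for the $\enter(\ell)$ vs.\ $\board(\ell)$ comparison into one calculation, which is a mild cleanup. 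Everything checks: the algebraic identities are right, the catalogue of applications is complete, $\skipp(n)$ is indeed vacuous, and the value comparisons in each case hold as claimed.
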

\begin{proof}
    Assume that~$\skipp(i)$ gets applied by~$\textsc{Bland}(\mbl,\pi_0,\order)$ and denote the resulting policy by~$\pi$.
    We know that, at this point, the algorithm completed the fifth canonical phase w.r.t.\ some odd~$x\in[2^n-3]$, and~$i=\ell_0(x)-1\eqqcolon\ell-1$.
    In the proof of Lemma~\ref{Bland applies odd can phases}, we argued that Figure~\ref{fig:for Bland proof e} shows the structure of~$\pi$ on the first~$\ell$ levels.
    We see that~$\travel(\ell)$ is active in~$\pi$, which immediately yields that the reduced costs of the applied switch~$\skipp(\ell-1)$ were higher than the ones of~$\board(\ell-1)$.

    Now assume that the algorithm applies the improving switch~$\enter(i)$ to the current policy~$\pi$.
    First, assume that this happens during the third canonical phase w.r.t.\ some odd~$x\in[2^n-3]$.
    Then~$i=\ell_0(x)\eqqcolon\ell$.
    Further, we argue in the proof of Lemma~\ref{Bland applies odd can phases} that~$\pi$ includes the path~$P=(t,a_1,b_1,a_2,b_2,\ldots,a_{\ell-1},b_{\ell-1},b_\ell)$ and that either~$\skipp(\ell)$ or~$\board(\ell)$ is active in~$\pi$.
    
    If~$\skipp(\ell)$ is active, the reduced costs of~$\enter(\ell)$ exceed the ones of~$\board(\ell)$ as entering level~$\ell$ yields a higher reward than following path~$P$.
    Otherwise,~$\board(\ell)$ is active and condition~<d> from Definition~\ref{canonical policy} yields that either~$\board(\ell+1)$ or~$\leave(\ell)$ is still active in~$\pi$ as it was active in~$\pi_x$.
    In the first case,~$\skipp(\ell)$ is not improving for~$\pi$ since boarding~$t$ from a higher level is more expensive.
    In the other case, entering (and leaving) level~$\ell$ is more improving than skipping this level.

    Now assume that the switch~$\enter(i)$ is not part of the third canonical phase.
    Then, according to the proof of Lemma~\ref{Bland applies even can phases}, it gets applied to the canonical policy~$\pi_x$ for some even~$x\in[2^n-2]_0$ and we have~$i=1$.
    Obviously,~$\skipp(1)$ and~$\board(1)$ are not improving for~$\pi_0$, so assume~$x\neq 0$ in the following.
    In Lemma~\ref{Bland applies even can phases}, we argue that~$\travel(2)$,~$\skipp(1)$ and~$\leave(1)$ are active in~$\pi_x$ if~$\ell_1(x)=2$. 
    Then,~$\enter(1)$ is clearly more improving than~$\board(1)$.
    Finally, if~$\ell_1(x)\geq3$, we argue that~$\board(1)$ is active in~$\pi_x$.
    Further, property~<d> yields that either~$\board(2)$ or~$\leave(1)$ is active in~$\pi_x$.
    In the first case,~$\skipp(1)$ is not improving at all, and in the second case, it is not more beneficial than~$\enter(1)$.
    This concludes the proof.    
\end{proof}

\begin{restatable}{observation}{obsonetravel}\label{only one point improving}
    At any point during the execution of~$\textsc{Bland}(\mbl,\pi_0,\order)$, at most one of the edges~$\travel(i)$ with~$i\in[n]$ is improving.    
\end{restatable}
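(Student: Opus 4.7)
The plan is to track which travel edges are improving throughout the execution of $\textsc{Bland}(\mbl, \pi_0, \order)$. By Lemma~\ref{no pointer improving}, no travel edge is improving at any canonical policy, so it suffices to inspect the intermediate policies arising along each transition $\pi_x \to \pi_{x+1}$ described in Lemmas~\ref{Bland applies even can phases} and~\ref{Bland applies odd can phases}.

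For even $x$, the transition involves at most the two switches $\enter(1)$ and $\travel(1)$. Since $t$ points to $a_{\ell_1(x)} \neq a_1$ in $\pi_x$ and no $a_i$ with $i \geq 2$ reaches $a_1$ or $b_1$ otherwise, applying $\enter(1)$ only alters $\Valsolo(a_1)$. Hence $\travel(1)$ is the sole travel edge whose improvingness can change at the intermediate step.

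For odd $x$, write $\ell \coloneqq \ell_0(x) \geq 2$ and track the seven canonical phases from Definition~\ref{canonical phases odd}. In phases 1 and 2 the only altered values are $\Valsolo(b_\ell)$ and $\Valsolo(b_{\ell-1})$. Since every $a_j$ with $j \in [\ell-1]$ lies on the canonical chain $a_1 \to b_1 \to a_2 \to \cdots \to b_{\ell-1}$ through $\enter$ and $\leave$ edges in $\pi_x$, any shift in $\Valsolo(b_{\ell-1})$ propagates uniformly to each $\Valsolo(a_j)$ with $j < \ell$, whereas $\Valsolo(a_j)$ for $j \geq \ell$ is unchanged. Hence $a_1$ retains the maximum value among the $a_j$'s, and no travel edge becomes improving. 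In phase 3 the switch $\enter(\ell)$ modifies only $\Valsolo(a_\ell)$, and a direct computation using $r_{\mbl}(\enter(\ell)) = 2^\ell$ and $r_{\mbl}(\stay(\cdot)) = 0.75$ shows that the new value exceeds every other $\Valsolo(a_j)$, making $\travel(\ell)$ the unique improving travel edge until it is applied immediately afterward.

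After $\travel(\ell)$ has been applied, $\pi(t) = \ell$ and $\Valsolo(a_\ell)$ stays fixed because no later switch affects a vertex from which $a_\ell$ is reachable. Each switch in phases 4 through 7 then either yields $\Valsolo(a_j) = -2^j + 1.25 + \Valsolo(a_\ell) < \Valsolo(a_\ell)$ via some $\board(j)$, sets $\Valsolo(a_{\ell-1}) = \Valsolo(a_\ell)$ via $\skipp(\ell-1)$, or leaves every $\Valsolo(a_j)$ unchanged (via $\stay(j)$ for $j \leq \ell-3$, since $a_j \to t$ already; or via $\leave(1)$ in the case $\ell = 2$, since $a_1 \to a_2$ already). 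Consequently no travel edge becomes improving during phases 4 through 7. The most delicate step is the case analysis in phases 1 and 2, which requires distinguishing whether $x_{\ell+1} = 1$, whether $\ell > \m(x)$, or neither, and verifying the shift invariant in each sub-case.
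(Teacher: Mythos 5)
Your proposal is correct in substance but takes a noticeably different and considerably more laborious route than the paper. The paper's proof rests on a single structural observation: the travel edges occupy the first $n$ positions in the Bland numbering $\order$, so the moment any of them becomes improving, the very next iteration of $\textsc{Bland}$ applies one of them (the lowest-numbered). Consequently, the only moments one needs to examine are the two places where a travel edge is actually applied — the third canonical phase of an odd transition, and the step immediately after $\enter(1)$ is applied to an even canonical policy — and in both of these the earlier analysis already shows that exactly one travel edge became improving. Your approach instead tracks the vertex values $\Valsolo(a_j)$ and $\Valsolo(t)$ explicitly through all seven canonical phases (and the even case), verifying case by case that the reduced cost $\Valsolo(a_j) - \Valsolo(t)$ never crosses zero for more than one index $j$ at once. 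This is correct — and the key steps, such as the uniform shift in phase~2 and the fact that after $\travel(\ell)$ one has $\Valsolo(t) = \Valsolo(a_\ell) > \Valsolo(a_j)$ for all $j \neq \ell$, check out — but it re-derives by hand a lot of what the Bland-ordering observation gives for free, and it requires the reader to trust a delicate case analysis that you yourself flag as the hard part.

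One small imprecision worth noting: your argument for even $x$ asserts that ``$t$ points to $a_{\ell_1(x)} \neq a_1$ in $\pi_x$,'' which is false for $x = 0$ where $\travel(1)$ is active. For $x = 0$, applying $\enter(1)$ changes both $\Valsolo(a_1)$ and $\Valsolo(t)$ by the same amount, so in fact no travel edge becomes improving at all; the conclusion of the observation is unaffected, but the stated justification doesn't cover this case as written. (The paper sidesteps this entirely, since for $x=0$ the algorithm applies no travel edge after $\enter(1)$, so this moment never enters its case analysis.)

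Also, the phrase ``a direct computation using $r_{\mbl}(\enter(\ell)) = 2^\ell$ and $r_{\mbl}(\stay(\cdot)) = 0.75$'' in your phase~3 discussion is a bit misleading: the cleanest justification for $\Valsolo(a_\ell)$ exceeding every other $\Valsolo(a_j)$ is simply that $\enter(\ell)$ leaves $\Valsolo(a_j)$ and $\Valsolo(t) = \Valsolo(a_1)$ unchanged for $j \neq \ell$, while making $\travel(\ell)$ improving forces $\Valsolo(a_\ell) > \Valsolo(a_1) \geq \Valsolo(a_j)$; no reward arithmetic is needed.
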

\begin{proof}
    If there are one or more improving travel edges at some point, the Bland numbering~$\order$ yields that the algorithm immediately applies one of them (the one traveling to the lowest level).
    This only occurs during the third canonical phase and after the switch~$\enter(1)$ got applied to an even canonical policy.
    
    For the first case, we already argued in the proof of Lemma~\ref{Bland applies odd can phases} that the applied travel edge is the only improving travel edge.
    For the second case, recall that, by Lemma~\ref{no pointer improving}, travel edges are not improving for canonical policies.
    As~$\travel(1)$ is inactive in even canonical policies, the switch~$\enter(1)$ does not affect the reduced costs of any travel edge besides~$\travel(1)$.
    Thus, this is the only travel edge becoming improving, which completes the proof.
\end{proof}

\section{A Combined Exponential Bound}
\label{sec:Dantzig}

In this section, we consider a family~$(\mdz=(V^A_n,V^R_n,E^A_n,E^R_n,r_{\mdz},p_{\mdz}))_{n\in\N}$ of Markov decision processes such that each process~$\mdz$ results from the process~$\mbl=(V_{\mbl},E_{\mbl},r_{\mbl})$ of the previous section by replacing every edge, besides the sink-loop, with the construction given in Figure~\ref{fig:Dantzig replaced};
note that the construction uses a probability~$p_v\in(0,1]$ for every~$v\in V_{\mbl}\setminus\{s\}$, which we will choose later.

\begin{figure}[h]
    \centering
    \resizebox{.8\textwidth}{!}{%
\begin{tikzpicture}[node distance = 3.2cm,
      p0/.style={circle, draw=blue!60, fill=blue!15, thick, minimum size=12mm},
      bl/.style={fill=white,draw=gray!30,text=gray,font=\footnotesize},
      every path/.style={->,draw=blue,line width=1pt,{>=Latex}},
      p1/.style={rectangle, draw=red!60, fill=red!15, thick, minimum size=12mm}]
      
    \node[p0] (v) {$v$};
    
    \node[p0,right of=v] (x) {$x_{v,w}$};
    \draw[bend left] (v) edge (x);
    \draw[bend left] (x) edge (v);

    \node[p1, right of=x] (y) {$y_{v,w}$};
    \draw[out=135,in=45] (y) edge[red] node[above,red] {$1-p_v$} (v);
    \draw (x) -- (y);

    \node[p0,right of=y] (z) {$z_{v,w}$};
    \draw[red] (y) edge[red] node[above,red] {$p_v$} (z);

    \node[p0,right of=z] (w) {$w$};
    \draw (z) edge node[above,blue] {$r_{\mathcal{B}_n}((v,w))$} (w);
\end{tikzpicture}%
}
      \caption{The construction that replaces every edge~$(v,w)\in E_{\mbl}\setminus\{(s,s)\}$ in~$\mdz$. Circular vertices are agent vertices, square ones are randomization vertices. Edge labels denote rewards and probabilities, where~$p_v\in(0,1]$. Note that, since~$v$ and~$w$ remain in the process, we have~$V_{\mbl}\subseteq V^A_n$.}
      \label{fig:Dantzig replaced}
\end{figure}
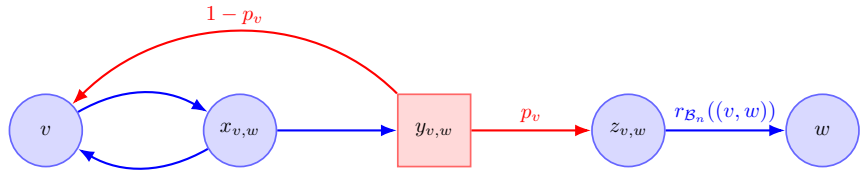

In the following, consider~$\mdz$ for some arbitrary but fixed~$n\in\N$.
The aim of this section is to show that policy iteration with Bland's rule, with Dantzig's rule, and with the Largest Increase rule performs~$\Omega(2^n)$ improving switches to a suitable initial policy for~$\mdz$.

Before we can analyze the behavior of \textsc{Bland} on~$\mdz$, we need to specify the Bland numbering~$\dorder\colon E^A_n\to|E^A_n|$ for~$\mdz$.
It is constructed as follows:
starting from the numbering~$\mathcal{N}_{\mbl}$, replace every edge~$(v,w)\in E_{\mbl}\setminus\{(s,s)\}$ by the edges~$(x_{v,w},y_{v,w})$ and~$(v,x_{v,w})$. 
Then, insert all edges of the form~$(x_{u,\cdot},u)$ with~$u\in V_{\mbl}\setminus\{s\}$ at the beginning of the numbering (the internal order of these edges can be chosen arbitrarily).
We do not need to specify the Bland numbers of edges that are the unique outgoing edge of a vertex.
    
Now that we have a Bland numbering, we want to transfer our results from the previous section to the new Markov decision process~$\mdz$. 
The following definition extends policies for~$\mbl$ to policies for~$\mdz$.

\begin{definition2}
    Let~$\pi$ and~$\pi'$ be policies for~$\mbl$ and~$\mdz$, respectively, and let~$v\in V_{\mbl}\setminus\{s\}$.
    Assume there is a~$w\in\Gamma_{\mbl}^+(v)$ such that~$(v,x_{v,w})$,~$(x_{v,w},y_{v,w})$, and~$(x_{v,u},v)$ are active in~$\pi'$ for all~$u\in\Gamma_{\mbl}^+(v)\setminus\{w\}$.
    Then, we say that~$v$ is \emph{($w$-)oriented w.r.t.\ $\pi'$}.
    We call~$\pi'$ the \emph{twin policy} of~$\pi$ if every vertex~$v\in V_{\mbl}\setminus\{s\}$ is~$\pi(v)$-oriented w.r.t.\ $\pi'$. \lipicsEnd
\end{definition2}

Let~$\pi_0'$ denote the twin policy of the canonical policy~$\pi_0$ for~$\mbl$. 
We could start by showing that~$\textsc{Bland}(\mdz,\pi_0',\dorder)$ visits the twin policy of every policy that~$\textsc{Bland}(\mbl,\pi_0,\order)$ visits.
Note that this would immediately imply the desired exponential number of improving switches.
However, we prefer to gather some general results first, which then allows for a more unified treatment of the three pivot rules.

Starting in a~$w$-oriented vertex~$v\in V_{\mbl}\setminus\{s\}$, the agent reaches vertex~$w$ with probability one (due to~$p_v>0$), while collecting a reward of~$r_{\mbl}((v,w))$. 
This immediately yields the following result.

\begin{observation}\label{same values}
    Let~$\pi$ be a policy for~$\mbl$ with twin policy~$\pi'$ for~$\mdz$.
    Then, for every vertex~$v\in V_{\mbl}$, we have~$\Valsolo_{\pi,\mbl}(v)=\Valsolo_{\pi',\mdz}(v)$.
\end{observation}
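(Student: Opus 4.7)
The plan is to verify that the restriction of $\Valsolo_{\pi',\mdz}$ to the vertex set $V_{\mbl}$ satisfies exactly the Bellman equations that uniquely determine $\Valsolo_{\pi,\mbl}$, so the two must coincide. The boundary case is immediate: the sink $s$ is unchanged by the gadget construction, so $\Valsolo_{\pi',\mdz}(s) = 0 = \Valsolo_{\pi,\mbl}(s)$.

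For an arbitrary vertex $v \in V_{\mbl}\setminus\{s\}$, set $w \coloneqq \pi(v)$, so that $v$ is $w$-oriented in $\pi'$ by the twin-policy assumption. I would then chase the Bellman equations through the gadget of Figure~\ref{fig:Dantzig replaced}. Using that the edges $(v,x_{v,w})$ and $(x_{v,w},y_{v,w})$ carry zero reward and are the active choices in $\pi'$, one gets
\[
 \Valsolo_{\pi',\mdz}(v) = \Valsolo_{\pi',\mdz}(x_{v,w}) = \Valsolo_{\pi',\mdz}(y_{v,w}).
\]
Applying the Bellman equation at the randomization vertex $y_{v,w}$, together with the value of the agent vertex $z_{v,w}$ whose unique outgoing edge leads to $w$ with reward $r_{\mbl}((v,w))$, yields
\[
 \Valsolo_{\pi',\mdz}(v) = (1-p_v)\Valsolo_{\pi',\mdz}(v) + p_v\bigl(r_{\mbl}((v,w)) + \Valsolo_{\pi',\mdz}(w)\bigr).
\]
Since $p_v > 0$, this simplifies to $\Valsolo_{\pi',\mdz}(v) = r_{\mbl}((v,w)) + \Valsolo_{\pi',\mdz}(w)$, which is precisely the Bellman equation defining $\Valsolo_{\pi,\mbl}(v)$.

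Thus $v \mapsto \Valsolo_{\pi',\mdz}(v)$ restricted to $V_{\mbl}$ solves the same linear system (with the same sink boundary value) as $\Valsolo_{\pi,\mbl}$, and uniqueness of the solution completes the proof. There is no real obstacle here beyond a small algebraic manipulation; the only point to be careful about is that the potentially repeated traversals through $x_{v,w}$ and $y_{v,w}$ collect zero reward, so the geometric tail caused by the backward transition of probability $1-p_v$ cancels cleanly and does not perturb the effective reward of the simulated edge $(v,w)$.
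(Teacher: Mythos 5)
Your proof is correct and matches the paper's approach: the paper dispatches this observation with the one-line remark that from a $w$-oriented vertex $v$ the agent reaches $w$ with probability one (since $p_v>0$) while collecting reward $r_{\mbl}((v,w))$, and your Bellman-equation chase through the gadget (cancelling the factor $p_v$) is precisely the explicit calculation underlying that remark.
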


By the same argument, twin policies of weak unichain policies are weak unichain, and the proof idea of Lemma~\ref{opt for B} carries over.

\begin{observation}\label{opt for D}
    The twin policy of every weak unichain policy for~$\mbl$ is a weak unichain policy for~$\mdz$.
    The twin policy of the optimal policy for~$\mbl$ is optimal for~$\mdz$.
\end{observation}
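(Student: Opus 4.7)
The plan is to handle the two claims in turn.

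For the weak unichain claim, the argument is almost immediate from the gadget construction. Given a weak unichain policy $\pi$ for $\mbl$, its twin $\pi'$ routes every $v\in V_{\mbl}\setminus\{s\}$ through the gadget via $v\to x_{v,\pi(v)}\to y_{v,\pi(v)}$, from where the randomization vertex reaches $\pi(v)$ with probability $p_v>0$ and returns to $v$ with probability $1-p_v$; the remaining gadget vertices $x_{v,u}$ with $u\neq\pi(v)$ simply loop back to $v$. Consequently, starting from any vertex of $\mdz$, the process almost surely enters $V_{\mbl}$ and thereafter simulates, step by step, $\pi$'s walk in $\mbl$, each step completing after a geometric number of gadget revisits. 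Since $\pi$ reaches $s$ with probability one in $\mbl$, so does $\pi'$ in $\mdz$.

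For the optimality claim, let $\pi_*'$ be the twin of $\pi_*$; by Observation~\ref{same values}, $\Valsolo_{\pi_*',\mdz}(v)=\Valsolo_{\pi_*,\mbl}(v)$ for all $v\in V_{\mbl}$. Given any weak unichain competitor $\tilde\pi$ for $\mdz$, I would write $w_v$ for the unique vertex with $\tilde\pi(v)=x_{v,w_v}$ at each $v\in V_{\mbl}\setminus\{s\}$. Weak unichainness forces $\tilde\pi(x_{v,w_v})=y_{v,w_v}$, since otherwise the process cycles in $\{v,x_{v,w_v}\}$ and never reaches $s$. The remaining choices $\tilde\pi(x_{v,u})$ with $u\neq w_v$ are irrelevant for $\Valsolo_{\tilde\pi,\mdz}$ restricted to $V_{\mbl}$, because those gadget vertices are unreachable from $V_{\mbl}$ under $\tilde\pi$. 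The map $\pi\colon v\mapsto w_v$ then defines a weak unichain policy for $\mbl$, and the argument behind Observation~\ref{same values} yields $\Valsolo_{\tilde\pi,\mdz}(v)=\Valsolo_{\pi,\mbl}(v)$ for all $v\in V_{\mbl}$. Lemma~\ref{opt for B} then provides $\Valsolo_{\pi,\mbl}(v)\leq\Valsolo_{\pi_*,\mbl}(v)=\Valsolo_{\pi_*',\mdz}(v)$. The values at the remaining gadget vertices are controlled by the values on $V_{\mbl}$ through a one-step Bellman expansion, and they inherit the domination because no edge $(v,w)$ of $\mbl$ is improving for the optimal policy $\pi_*$, so $r_{\mbl}((v,w))+\Valsolo_{\pi_*,\mbl}(w)\leq\Valsolo_{\pi_*,\mbl}(v)$.

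The delicate step I anticipate is ruling out competitors $\tilde\pi$ that are not weak unichain, for which the Bellman equations need not have a unique solution. The plan here is to observe that every directed cycle in $\mdz$ either lies entirely inside one gadget (and thus collects zero reward) or projects to a cycle through the transportation vertex $t$ in $\mbl$, whose total reward is non-positive thanks to the calibration of the boarding costs already exploited in the proof of Lemma~\ref{opt for B}. Hence no infinite trajectory that misses $s$ can beat a weak unichain competitor, so the reduction above suffices and $\pi_*'$ is optimal for $\mdz$.
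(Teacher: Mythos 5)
Your proof is correct, and it takes a cleaner and more formal route than the one the paper sketches. The paper disposes of Observation~\ref{opt for D} with a one-sentence remark that ``the proof idea of Lemma~\ref{opt for B} carries over,'' meaning: redo the boarding-cost calibration argument directly in~$\mdz$. You instead prove optimality by a \emph{projection} argument: any weak unichain competitor~$\tilde\pi$ for~$\mdz$ is forced (to avoid the zero-reward two-cycle at~$\{v,x_{v,w_v}\}$) to commit to a single~$w_v\in\Gamma^+_{\mbl}(v)$ at each~$v$, the unused gadget branches are unreachable from~$V_{\mbl}$ and hence value-irrelevant there, so the restriction of~$\Valsolo_{\tilde\pi,\mdz}$ to~$V_{\mbl}$ equals~$\Valsolo_{\pi,\mbl}$ for the projected policy~$\pi\colon v\mapsto w_v$; Lemma~\ref{opt for B} then bounds this by~$\Valsolo_{\pi_*,\mbl}=\Valsolo_{\pi_*',\mdz}$ on~$V_{\mbl}$, and a one-step Bellman expansion plus the fact that~$r_{\mbl}((v,w))+\Valsolo_{\pi_*,\mbl}(w)\le\Valsolo_{\pi_*,\mbl}(v)$ (no switch improving for an optimal policy) propagates the domination to the gadget vertices~$y_{v,w}$ and~$x_{v,w}$. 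This reduction reuses Lemma~\ref{opt for B} as a black box instead of re-deriving the boarding calibration, which is tidier. You also explicitly flag and handle the case of non--weak-unichain competitors by noting that every cycle in~$\mdz$ is either a zero-reward gadget cycle or projects to a cycle through~$t$ in~$\mbl$ whose reward is non-positive by the very boarding-cost calibration used in Lemma~\ref{opt for B}; the paper elides this point entirely. Your weak unichain argument for the first claim is the same as the paper's: the gadget reaches~$\pi(v)$ from~$v$ after a geometric number of revisits since~$p_v>0$.
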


By Theorem~\ref{thm: PI visits only wu}, this guarantees the correctness of~$\textsc{PolicyIteration}(\mdz,\pi_0')$.
Further, Theorem~\ref{thm:the connection} will allow us to carry our results over to the simplex method.

Since twin policies are central in our analysis, it comes in handy that only a certain type of edges might be improving for them.

\begin{observation}\label{only xy improving for twin}
	Let~$\pi'$ be the twin policy of some policy for~$\mbl$.
	Then, all improving switches for~$\pi'$ are of the form~$(x_{v,w},y_{v,w})\in E^A_n$ for some~$(v,w)\in E_{\mbl}$.
\end{observation}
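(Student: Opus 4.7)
The plan is to enumerate the agent vertices of $\mdz$ that admit any choice, and then verify that for each potential switch other than those of the form $(x_{v,w},y_{v,w})$ the reduced cost equals zero. Inspecting the gadget in Figure~\ref{fig:Dantzig replaced}, the only agent vertices of $\mdz$ with more than one outgoing agent edge are the vertices $v\in V_{\mbl}\setminus\{s\}$ (via the edges $(v,x_{v,w})$ for $w\in\Gamma^+_{\mbl}(v)$) and the vertices $x_{v,w}$ (with the two choices $(x_{v,w},v)$ and $(x_{v,w},y_{v,w})$). The edges~$(z_{v,w},w)$ and the sink loop~$(s,s)$ are unique outgoing edges and cannot host a switch. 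Hence it suffices to rule out improving switches of the two types $(v,x_{v,w})$ and $(x_{v,w},v)$.

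Next I would exploit the structure of a twin policy: by definition, $\pi'(v)=x_{v,\pi(v)}$, $\pi'(x_{v,\pi(v)})=y_{v,\pi(v)}$, and $\pi'(x_{v,u})=v$ for every $u\in\Gamma^+_{\mbl}(v)\setminus\{\pi(v)\}$. Since all edges within the gadget (other than the one leaving $z_{v,w}$) carry reward zero, the Bellman equations immediately yield two identities that drive the argument: first, $\Valsolo_{\pi'}(x_{v,w})=\Valsolo_{\pi'}(v)$ whenever $w\neq\pi(v)$, because $(x_{v,w},v)$ is active with reward $0$; second, $\Valsolo_{\pi'}(v)=\Valsolo_{\pi'}(x_{v,\pi(v)})$, because $(v,x_{v,\pi(v)})$ is active with reward $0$.

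With these two identities in hand, the reduced cost computations are immediate. For an edge $(v,x_{v,w})$ with $w\neq\pi(v)$, the reduced cost is $0+\Valsolo_{\pi'}(x_{v,w})-\Valsolo_{\pi'}(v)=0$ by the first identity, so it is not improving; the edge $(v,x_{v,\pi(v)})$ is already active and therefore not a candidate. For an edge $(x_{v,w},v)$ with $w=\pi(v)$, the reduced cost is $0+\Valsolo_{\pi'}(v)-\Valsolo_{\pi'}(x_{v,w})=0$ by the second identity, so it is not improving; for $w\neq\pi(v)$ the edge is active, hence not a candidate either. This exhausts all switches outside the family $\{(x_{v,w},y_{v,w})\}$, proving the observation.

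I do not anticipate a real obstacle: the main subtlety is just to correctly split by whether $w=\pi(v)$ or not (which determines which of the two edges inside the gadget is active) and to invoke the correct one of the two zero-reward identities. There is no need to analyze what happens inside the random branch through $y_{v,w}$ and $z_{v,w}$, since the two Bellman identities above suffice without unfolding $\Valsolo_{\pi'}(y_{v,w})$.
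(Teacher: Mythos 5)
Your proof is correct and takes essentially the same approach as the paper's: both enumerate the agent vertices with nontrivial choices and rule out switches of the forms $(v,x_{v,w})$ and $(x_{v,w},v)$. The paper condenses the reduced-cost computation into the remark that each such inactive edge ``creates a zero-reward cycle of length two,'' whereas you spell out the underlying Bellman identities $\Valsolo_{\pi'}(x_{v,w})=\Valsolo_{\pi'}(v)$ (for $w\neq\pi(v)$) and $\Valsolo_{\pi'}(v)=\Valsolo_{\pi'}(x_{v,\pi(v)})$ explicitly, which is the same argument made more precise.
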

\begin{proof}
	Since every vertex~$u\in V_{\mbl}\setminus\{s\}$ is oriented w.r.t\ $\pi'$, edges of the form~$(x_{u,\cdot},u)$ or~$(u,x_{u,\cdot})$ are either active or their application creates a zero-reward cycle of length two.
    Hence, none of these edges is improving \mbox{for~$\pi'$}.
\end{proof}

The following Lemma shows how the probabilities~$(p_v)_{v\in V_{\mbl}\setminus\{s\}}$ affect the reduced costs of these potentially improving edges.
Further, it yields a connection between the improving switches for a policy for~$\mbl$ and those for its twin policy.

\begin{restatable}{lemma}{lemimprBD}\label{reduced costs Bl Dz}
    Let~$\pi$ be a policy for~$\mbl$ with twin policy~$\pi'$, and let~$(v,w)\in E_{\mbl}\setminus\{(s,s)\}$.
    Then,~$z_{\pi',\mdz}(x_{v,w},y_{v,w})=p_v\cdot z_{\pi,\mbl}(v,w)$.
    In particular,~$(x_{v,w},y_{v,w})$ is improving for~$\pi'$ if and only if~$(v,w)$ is improving \mbox{for~$\pi$}.
\end{restatable}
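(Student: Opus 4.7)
The proof proceeds by direct computation of the two sides, using the structure of the gadget from Figure~\ref{fig:Dantzig replaced} and the previously established Observation~\ref{same values}. The plan is to first express $\Valsolo_{\pi',\mdz}(y_{v,w})$ and $\Valsolo_{\pi',\mdz}(x_{v,w})$ in terms of values of $\pi$ on vertices of $\mbl$, and then combine them so that the expression $p_v\cdot z_{\pi,\mbl}(v,w)$ drops out cleanly.

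First I would analyse $\Valsolo_{\pi',\mdz}(y_{v,w})$. Since $y_{v,w}$ is a randomization vertex with outgoing edges to $v$ (probability $1-p_v$) and to $z_{v,w}$ (probability $p_v$), and since $z_{v,w}$ has a unique outgoing edge to $w$ with reward $r_{\mbl}((v,w))$, the Bellman equations give
\begin{equation*}
\Valsolo_{\pi',\mdz}(y_{v,w}) = (1-p_v)\Valsolo_{\pi',\mdz}(v) + p_v\bigl(r_{\mbl}((v,w))+\Valsolo_{\pi',\mdz}(w)\bigr).
\end{equation*}
By Observation~\ref{same values}, the values at $v$ and $w$ coincide with the corresponding values under $\pi$ in $\mbl$.

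Next I would handle $\Valsolo_{\pi',\mdz}(x_{v,w})$ by a case distinction on whether $(v,w)$ is active in $\pi$. If $\pi(v)=w$, the twin-policy definition forces $\pi'(x_{v,w})=y_{v,w}$, so the edge $(x_{v,w},y_{v,w})$ is itself active and its reduced cost is $0$; simultaneously, $z_{\pi,\mbl}(v,w)=0$ since $(v,w)$ is active, and the identity holds trivially. Otherwise $\pi'(x_{v,w})=v$ by the twin-policy definition, and because $r((x_{v,w},v))=0$ we get $\Valsolo_{\pi',\mdz}(x_{v,w})=\Valsolo_{\pi,\mbl}(v)$. Substituting this together with the expression for $\Valsolo_{\pi',\mdz}(y_{v,w})$ into the definition of $z_{\pi',\mdz}(x_{v,w},y_{v,w})$, and using $r((x_{v,w},y_{v,w}))=0$, the terms $(1-p_v)\Valsolo_{\pi,\mbl}(v)-\Valsolo_{\pi,\mbl}(v)=-p_v\Valsolo_{\pi,\mbl}(v)$ combine with the remaining $p_v(\cdots)$ to yield exactly $p_v\bigl(r_{\mbl}((v,w))+\Valsolo_{\pi,\mbl}(w)-\Valsolo_{\pi,\mbl}(v)\bigr)=p_v\cdot z_{\pi,\mbl}(v,w)$.

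Finally, the ``improving iff'' part is an immediate consequence: since $p_v\in(0,1]$ is strictly positive, the sign of $z_{\pi',\mdz}(x_{v,w},y_{v,w})$ agrees with the sign of $z_{\pi,\mbl}(v,w)$, so one is positive exactly when the other is.

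There is no genuine obstacle here; the only thing that needs care is being explicit about the twin-policy convention at $x_{v,w}$ (active edge vs.\ back-edge to $v$), because otherwise one is tempted to write a single formula for $\Valsolo_{\pi',\mdz}(x_{v,w})$ without noticing that the active case makes both sides vanish. Writing the case split cleanly avoids this pitfall.
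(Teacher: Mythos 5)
Your proof is correct and follows essentially the same route as the paper: the same case split on whether $(v,w)$ is active, the same Bellman-equation expansion at $y_{v,w}$ (and through $z_{v,w}$) combined with Observation~\ref{same values}, and the same observation that $p_v>0$ gives the ``improving iff'' equivalence. The only cosmetic difference is the order in which you assemble the pieces (computing $\Valsolo_{\pi'}(y_{v,w})$ first before the case split on $x_{v,w}$), which does not change the substance.
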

\begin{proof}
    For convenience, we write~$x$,~$y$, and~$z$ instead of~$x_{v,w}$,~$y_{v,w}$, and~$z_{v,w}$.
    If~$(v,w)$ is active in~$\pi$, vertex~$v$ is~$w$-oriented w.r.t.\ $\pi'$. 
    Thus,~$(x,y)$ is active in~$\pi'$ as well.
    Hence, both edges are not improving as they have reduced costs of~$z_{\pi'}(x,y)=z_{\pi}(v,w)=0$. 
    
    Now assume that~$(v,w)$ is inactive \mbox{in~$\pi$}, which yields that~$(x,v)$ is active in~$\pi'$.
    We obtain
    \begin{gather}
    \begin{aligned}\label{rc of xy}
    z_{\pi',\mdz}(x,y)&=\Valsolo_{\pi'}(y)-\Valsolo_{\pi'}(x)=\Valsolo_{\pi'}(y)-\Valsolo_{\pi'}(v) \\
    &=p_v\Valsolo_{\pi'}(z)+(1-p_v)\Valsolo_{\pi'}(v)-\Valsolo_{\pi'}(v) \\
    &=p_v(\Valsolo_{\pi'}(w)+r_{\mbl}((v,w))-\Valsolo_{\pi'}(v))=p_v z_{\pi,\mbl}(v,w),  
    \end{aligned}
    \end{gather}
    where we used Observation~\ref{same values} for the last equality. The equivalence holds since~$p_v>0$.
\end{proof}

Note that we can transform a given twin policy with three switches into a different one by changing the orientation of an agent vertex~$v\in V_{\mbl}\setminus\{s\}$.
The following Lemma shows that, if applied consecutively, these switches all have the same reduced costs.

\begin{restatable}{lemma}{lemsamerc}\label{red costs erben}
    Let~$(v,w)\in E_{\mbl}\setminus\{(s,s)\}$ and let the policy~$\pi$ for~$\mdz$ be the twin policy of some weak unichain policy for~$\mbl$.
    If the edge~$(x_{v,w},y_{v,w})$ is improving for~$\pi$, we have
    \[
    z_\pi(x_{v,w},y_{v,w})=z_{\pi'}(v,x_{v,w})=z_{\pi''}(\pi(v),v),
    \]
where~$\pi'$ denotes the policy that results from applying~$(x_{v,w},y_{v,w})$ to~$\pi$ and~$\pi''$ denotes the policy that results from applying~$(v,x_{v,w})$ to~$\pi'$.
\end{restatable}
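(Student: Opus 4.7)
The plan is to establish the two equalities separately, in each case combining a Bellman computation with a short reachability observation.

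For the first equality, I would observe that applying $(x_{v,w},y_{v,w})$ to $\pi$ only changes the value of $x_{v,w}$. Let $\hat\pi$ denote the $\mbl$-policy whose twin is $\pi$. Since $(x_{v,w},y_{v,w})$ is improving and hence inactive in $\pi$, we have $w\neq\hat\pi(v)$, so in $\pi'$ the vertex $v$ still points to $x_{v,\hat\pi(v)}\neq x_{v,w}$. Because the only incoming edge of $x_{v,w}$ comes from $v$, no vertex other than $x_{v,w}$ itself can reach $x_{v,w}$ under $\pi'$, and the Bellman equations therefore force $\Valsolo_{\pi'}(u)=\Valsolo_\pi(u)$ for every $u\neq x_{v,w}$. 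Combining this with $\Valsolo_{\pi'}(x_{v,w})=\Valsolo_{\pi'}(y_{v,w})=\Valsolo_\pi(y_{v,w})$ and $\Valsolo_\pi(x_{v,w})=\Valsolo_\pi(v)$ immediately yields $z_{\pi'}(v,x_{v,w})=\Valsolo_\pi(y_{v,w})-\Valsolo_\pi(v)=z_\pi(x_{v,w},y_{v,w})$.

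For the second equality, set $V:=\Valsolo_{\pi''}$ and $\bar v:=\hat\pi(v)$. The Bellman equations at $y_{v,\bar v}$ and at $y_{v,w}$ (using $(x_{v,\bar v},y_{v,\bar v})$ and $(v,x_{v,w})$ active in $\pi''$) give
\[
z_{\pi''}(x_{v,\bar v},v)=p_v\bigl(V(v)-V(\bar v)-r_{\mbl}(v,\bar v)\bigr)\quad\text{and}\quad V(v)=V(w)+r_{\mbl}(v,w).
\]
Together with Lemma~\ref{reduced costs Bl Dz}, which yields $z_\pi(x_{v,w},y_{v,w})=p_v\bigl(r_{\mbl}(v,w)+\Valsolo_{\hat\pi,\mbl}(w)-\Valsolo_{\hat\pi,\mbl}(v)\bigr)$, and the Bellman identity $\Valsolo_{\hat\pi,\mbl}(v)=\Valsolo_{\hat\pi,\mbl}(\bar v)+r_{\mbl}(v,\bar v)$, the second equality reduces to
\[
V(w)-V(\bar v)=\Valsolo_{\hat\pi,\mbl}(w)-\Valsolo_{\hat\pi,\mbl}(\bar v).
\]

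Establishing this last identity is the main obstacle, and I would resolve it by passing through the twin policy $\pi'''$ of $\hat\pi^{(v,w)}$, which is obtained from $\pi''$ by the switch $(x_{v,\bar v},v)$. The same incoming-edge argument as in the first part shows that, since $\pi''(v)=x_{v,w}$, no vertex other than $x_{v,\bar v}$ itself can reach $x_{v,\bar v}$ under $\pi''$, so the only value that differs between $\pi''$ and $\pi'''$ is $V(x_{v,\bar v})$; in particular $V(u)=\Valsolo_{\pi'''}(u)$ for all $u\in V_{\mbl}$, which by Observation~\ref{same values} equals $\Valsolo_{\hat\pi^{(v,w)},\mbl}(u)$. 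It then suffices to show that neither $w$ nor $\bar v$ reaches $v$ under $\hat\pi$: for $\bar v$, such a path combined with the edge $v\to\bar v$ of $\hat\pi$ would produce a cycle avoiding the sink, contradicting weak unichainness of $\hat\pi$; for $w$, Theorem~\ref{thm: PI visits only wu} together with Lemma~\ref{opt for B} guarantees that $\hat\pi^{(v,w)}$ is also weak unichain, so the analogous cycle $v\to w\to\cdots\to v$ is forbidden (and the paths from $w$ to $v$ are identical under $\hat\pi$ and $\hat\pi^{(v,w)}$, as these policies agree off $v$). Hence $\Valsolo_{\hat\pi^{(v,w)},\mbl}(w)=\Valsolo_{\hat\pi,\mbl}(w)$ and $\Valsolo_{\hat\pi^{(v,w)},\mbl}(\bar v)=\Valsolo_{\hat\pi,\mbl}(\bar v)$, which yields the desired identity and completes the proof.
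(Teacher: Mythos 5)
Your proof is correct and takes essentially the same approach as the paper's: the first equality is handled identically, and for the second equality both arguments combine Bellman computations at the $y$-vertices with reachability observations that rely on the weak-unichain property (of $\hat\pi$ and of $\hat\pi^{(v,w)}$ via Theorem~\ref{thm: PI visits only wu}). Your detour through the twin policy $\pi'''$ of $\hat\pi^{(v,w)}$ and Observation~\ref{same values} is just slightly more roundabout bookkeeping than the paper's direct identities $\Valsolo_{\pi''}(\hat\pi(v))=\Valsolo_{\pi}(\hat\pi(v))$ and $\Valsolo_{\pi''}(w)=\Valsolo_{\pi}(w)$, but the underlying observations are the same.
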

\begin{proof}
    We write~$x$,~$y$, and~$z$ instead of~$x_{v,w}$,~$y_{v,w}$, and~$z_{v,w}$.
    Assume that~$(x,y)$ is improving for~$\pi$.
    Then,~$(x,v)$ is active and~$z_\pi(x,y)=p_v(\Val(w)+r_{\mbl}((v,w))-\Val(v))$ by equation~\eqref{rc of xy}.
    
    Since~$(x,v)$ is active in~$\pi$, we know that~$(v,x)$ is inactive.    
    Hence, the value of~$v$ does not change during the switch to~$(x,y)$, i.e.,~$\Valsolo_{\pi'}(v)=\Val(v)$.
    As the value of~$w$ does not change either, we obtain
    \begin{align*}
        z_{\pi'}(v,x) &=\Valsolo_{\pi'}(x)-\Valsolo_{\pi'}(v)= \Valsolo_{\pi'}(y)-\Valsolo_{\pi'}(v) \\
        &=p_v(\Valsolo_{\pi'}(w)+r_{\mbl}((v,w))-\Valsolo_{\pi'}(v))=z_{\pi}(x,y).
    \end{align*}
    
    The vertex~$v$ is~$u$-oriented w.r.t.\ $\pi$ for some~$u\in\Gamma^+_{\mbl}(v)\setminus\{w\}$.
    Thus, we have~$\pi(v)=x_{v,u}$ and~$\Val(v)=\Val(u)+r_{\mbl}((v,u))$.
    Policy~$\pi$ is the twin policy of a weak unichain policy~$\bar\pi$, in which~$(v,u)$ is active.
    Therefore,~$\bar\pi$ does not contain a path from~$u$ to~$v$ in~$\mbl$, so~$\pi$ also does not reach~$v$ from~$u$ in~$\mdz$.
    This yields~$\Valsolo_{\pi''}(u)=\Val(u)$ and thus
    \begin{equation}\label{eq:help1}
    		\Val(v)=\Valsolo_{\pi''}(u)+r_{\mbl}((v,u)).
    \end{equation}
    The policy~$\pi$ is weak unichain, so Theorem~\ref{thm: PI visits only wu} yields that~$\pi''$ is weak unichain as well. 
    Since all vertices except~$v$ are oriented in~$\pi''$ and since~$\pi''$ walks from~$v$ to~$w$ with probability one, this yields that~$\pi''$ does not reach~$v$ from~$w$.
    The same thus holds for~$\pi$, so
    \begin{equation}\label{eq:help2}
    		\Valsolo_{\pi''}(w)=\Val(w).
    \end{equation}
    With this, we obtain
    \begin{align*}
        z_{\pi''}(\pi(v),v) &=\Valsolo_{\pi''}(v)-\Valsolo_{\pi''}(\pi(v))= \Valsolo_{\pi''}(v)-\Valsolo_{\pi''}(y_{v,u}) \\
        &=\Valsolo_{\pi''}(v)-p_v(\Valsolo_{\pi''}(u)+r_{\mbl}((v,u)))-(1-p_v)\Valsolo_{\pi''}(v)\\
        &\overset{\eqref{eq:help1}}{=}p_v(\Valsolo_{\pi''}(v)-\Val(v))\\
        &=p_v(\Valsolo_{\pi''}(w)+r_{\mbl}((v,w))-\Val(v))\\
        &\overset{\eqref{eq:help2}}{=}p_v(\Val(w)+r_{\mbl}((v,w))-\Val(v))\\
        &=z_{\pi}(x,y),
    \end{align*}
    which completes the proof.
\end{proof}

It is essential for the proofs of Lemma~\ref{Bland applies even can phases} and Lemma~\ref{Bland applies odd can phases} that~$\textsc{Bland}(\mbl,\order)$ prefers switches in vertices appearing early in the \emph{vertex numbering}~$\mathcal{N}_V\colon V_{\mbl}\to|V_{\mbl}|$ given by~$(t,a_1,b_1,a_2,b_2,\dots,a_n,b_n,d,s)$, i.e., let~$\mathcal{N}_V(t)=1$,~$\mathcal{N}_V(a_1)=2$, and so on.
Using the following definition, we can observe a similar behavior of policy iteration with Dantzig's rule, cf.\ Algorithm~\ref{alg:Dantzig}, on~$\mdz$.

\begin{definition2}\label{Def belong}
    The edge~$e\in E^A_n$ \emph{belongs} to vertex~$v\in V_{\mbl}\setminus\{s\}$ if 
    \begin{equation*} 	e\in\B(v)\coloneqq\bigcup_{w\in\Gamma^+_{\mbl}(v)}\{(x_{v,w},y_{v,w}),(v,x_{v,w}),(x_{v,w},v)\}.\lipicsEnd
	\end{equation*}     
\end{definition2}

\begin{algorithm}[b]
    \caption{$\textsc{Dantzig}(G,\pi)$}\label{alg:Dantzig}
    \vgap
    \textbf{input: }Markov decision process $G$, weak unichain policy $\pi$ for~$G$
    \vgap
    \hrule  
    \vgap
    \While{$\pi$ admits an improving switch}
    {
       ~$\bar s\gets \text{an improving switch~$s$ for~$\pi$ maximizing~$z_\pi(s)$}$\;
        \vgap
       ~$\pi\gets \nwstr{\bar s}$
    }
    \textbf{return}~$\pi$
\end{algorithm}

We obtain the following bounds on the reduced costs.

\begin{restatable}{lemma}{lemboundedrc}\label{red costs bounded in D}
    Let~$v\in V_{\mbl}\setminus\{s\}$ and~$e\in B(v)$ be arbitrary.
    Let~$\pi$ be a weak unichain policy for~$\mdz$ such that all vertex values w.r.t.\ $\pi$ are non-negative.
    If~$e$ is improving for~$\pi$, then its reduced costs are bounded by~$p_v\cdot0.25\leq z_\pi(e)\leq p_v\cdot 2^{n+2}$.
\end{restatable}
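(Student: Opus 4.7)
The plan is to handle the three types of edges in $\B(v)$—namely $(x_{v,w}, y_{v,w})$, $(v, x_{v,w})$, and $(x_{v,w}, v)$ for some $w \in \Gamma^+_{\mbl}(v)$—uniformly by reducing each case to bounding the single quantity
\[
\Delta := r_{\mbl}((v,w)) + \Val(w) - \Val(v),
\]
where $w$ depends on the edge. The key identity is the Bellman equation at $y_{v,w}$, which reads $\Val(y_{v,w}) = p_v\bigl(r_{\mbl}((v,w)) + \Val(w)\bigr) + (1-p_v)\Val(v)$. A short case analysis based on where $\pi$ sends $v$ and $x_{v,w}$ shows that if $e$ is improving, then $\pi$ must take the complementary orientation (otherwise $z_\pi(e)=0$), and in each case $|z_\pi(e)| = p_v |\Delta|$ with $\Delta \neq 0$. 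This reduces both inequalities to bounds on $|\Delta|$.

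For the upper bound, I would use that $\Val(u) \geq 0$ for every $u$ and $|r_{\mbl}(e)| \leq 2^n$, so it suffices to bound the maximum value of any vertex under $\pi$. The rewards collected along any simple path in $\mbl$ sum to at most $\sum_{i=1}^{n} 2^i + 0.75 < 2^{n+1}$ (one can enter each level for reward $2^\ell$, but advancing forces leaving rather than staying, so the $0.75$-reward is collected at most once), and the $\mbl$-path induced by $\pi$ is acyclic by weak unichain. Therefore $\Val(u) < 2^{n+1}$, which gives $|\Delta| \leq 2^n + 2^{n+1} < 2^{n+2}$ and thus $z_\pi(e) \leq p_v \cdot 2^{n+2}$.

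The lower bound is the crux of the argument and relies on the structural claim that $\Val(u) \in 0.25 \cdot \Z$ for every $u \in V_{\mbl}$. For each $v \in V_{\mbl} \setminus \{s\}$ with $\pi(v) = x_{v,w_v}$, the weak unichain property rules out $\pi(x_{v,w_v}) = v$ (which would create the deterministic cycle $v \to x_{v,w_v} \to v$ and prevent $\pi$ from reaching the sink), so $\pi(x_{v,w_v}) = y_{v,w_v}$. Substituting into the Bellman equation at $y_{v,w_v}$ and solving yields $\Val(v) = r_{\mbl}((v,w_v)) + \Val(w_v)$. Since the induced map $v \mapsto w_v$ is a weak unichain policy for $\mbl$, induction from $\Val(s)=0$ along the resulting finite acyclic path expresses every $\Val(v)$ as a sum of $\mbl$-rewards, which are all multiples of $0.25$. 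Together with $r_{\mbl}((v,w)) \in 0.25 \cdot \Z$, this forces $\Delta$ to be a nonzero multiple of $0.25$, giving $|\Delta| \geq 0.25$ and $z_\pi(e) \geq p_v \cdot 0.25$. The main obstacle is this structural claim; once it is established, both inequalities follow immediately from the reduction to $|\Delta|$.
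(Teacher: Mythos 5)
Your proof is correct and follows essentially the same approach as the paper's: reduce all three edge types to the quantity $\delta(\pi,v,w) = r_{\mbl}((v,w))+\Val(w)-\Val(v)$, bound it above using non-negativity and the maximum attainable value, and bound it below by showing every $\Val(u)$ with $u\in V_{\mbl}$ is a multiple of $0.25$. The only cosmetic difference is that the paper obtains the upper bound on $\Val(u)$ by citing the already-computed optimal value $\Valsolo_{\pi_*'}(t)=2^{n+1}-1.25$, whereas you re-derive the bound from a (slightly hand-waved but correct) simple-path reward estimate; and the paper states the integrality claim tersely (``$\pi$ visits every non-zero-reward edge exactly once or never'') while you spell out the underlying reduction to an effective weak unichain $\mbl$-policy, which is the same idea made explicit.
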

\begin{proof}
    Since~$e\in B(v)$, we have~$e\in\{(x_{v,w},y_{v,w}),(v,x_{v,w}),(x_{v,w},v)\}$ for some~$w\in\Gamma^+_{\mbl}(v)$.
    For convenience, we write~$x$,~$y$, and~$z$ instead of~$x_{v,w}$,~$y_{v,w}$, and~$z_{v,w}$.
    
    Firstly, assume that~$e=(x,y)$. 
    Then, since~$e$ is improving,~$(x,v)$ is active in~$\pi$.
    As in equation~\eqref{rc of xy}, we obtain~$z_{\pi}(x,y)=p_v(\Val(w)+r_{\mbl}((v,w))-\Val(v))\eqqcolon p_v\cdot\delta(\pi,v,w)$.

    Secondly, assume that~$e=(v,x)$.
    Then,~$(x,y)$ is active in~$\pi$ as otherwise~$e$ would not be improving due to~$z_\pi(v,x)=\Val(x)-\Val(v)=0$.
    This yields
    \begin{equation}\label{rc of vx}
    z_{\pi}(v,x)=\Val(x)-\Val(v)=\Val(y)-\Val(v)=p_v\cdot\delta(\pi,v,w).
    \end{equation}

    Lastly, assume~$e=(x,v)$. 
    Then, as before,~$(x,y)$ is active in~$\pi$.
    We can thus conclude from~\eqref{rc of vx} that~$z_{\pi}(x,v)=\Val(v)-\Val(x)=-p_v\cdot\delta(\pi,v,w)$.

    By assumption, every vertex has a non-negative value with respect to~$\pi$.
    Further, all vertex values are bounded from above by the maximum vertex value w.r.t.\ the optimal policy for~$\mdz$. 
    By Lemma~\ref{opt for B}, Observation~\ref{same values}, and Observation~\ref{opt for D}, this is~$\Valsolo_{\pi_*'}(t)=2^{n+1}-1.25$.
    
    Since the absolute value of any edge reward is at most~$2^n$, we obtain 
    \[
    |\delta(\pi,v,w)| \leq \left(2^{n+1}-1.25+2^n\right)\leq 2^{n+2}.
    \]
    Hence, we have an upper bound of~$z_\pi(e)\leq p_v\cdot 2^{n+2}$.
    
    As all edge rewards are integer multiples of~0.25, also~$\Val(u)$ is an integer multiple of~0.25 for every~$u\in V_{\mbl}$ (starting in~$u$, policy~$\pi$ visits every edge that has a non-zero reward either exactly once or never).
    This yields that~$|\delta(\pi,v,w)|$ is a multiple of 0.25 as well, which concludes the proof.
\end{proof}

Note that, by Theorem~\ref{thm: PI visits only wu} and as all vertex values w.r.t.\ $\pi_0'$ are zero, $\textsc{Dantzig}(\mdz,\pi_0')$ only visits weak unichain policies with non-negative vertex values.
Therefore, if we choose the probabilities~$(p_v)_{v\in V_{\mbl}\setminus\{s\}}$ for increasing vertex numbers~$\mathcal{N}_V(v)$ fast enough decreasing, then the previous lemma yields that~\textsc{Dantzig} prefers improving switches that belong to vertices appearing early in the vertex numbering~$\mathcal{N}_V$.
We use this in the proof of Theorem~\ref{Dantzig exp on D}.

The following technical result holds independently of the chosen pivot rule.

\begin{restatable}{lemma}{xvbeforesuc}\label{xv before succeeding}
	Let~$(v,w)\in E_{\mbl}$ be an improving switch that gets applied to a policy~$\pi$ during the execution of~$\textsc{Bland}(\mbl,\pi_0,\order)$.
	Let~$\bar\pi'$ denote the policy for~$\mdz$ that results from applying the switches~$(x_{v,w},y_{v,w})$ and~$(v,x_{v,w})$ to the twin policy~$\pi'$ of~$\pi$.
	Then, the edge~$(x_{v,\pi(v)},v)$ is improving for~$\bar\pi'$ and it remains improving during the execution of~$\textsc{PolicyIteration}(\mdz,\bar\pi')$ until it gets applied or until an improving switch of the form~$(x_{u,\cdot},y_{u,\cdot})$ with~$\mathcal{N}_V(u)>\mathcal{N}_V(v)$ gets applied.
\end{restatable}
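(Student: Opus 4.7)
The first claim follows directly from Lemma~\ref{red costs erben}. Since $\pi'$ is the twin of $\pi$, vertex $v$ is $\pi(v)$-oriented in $\pi'$, so $\pi'(v) = x_{v,\pi(v)}$. By Lemma~\ref{reduced costs Bl Dz}, $(x_{v,w}, y_{v,w})$ is improving for $\pi'$ with reduced cost $p_v \cdot z_{\pi,\mbl}(v,w) > 0$. Applying the switches $(x_{v,w}, y_{v,w})$ and then $(v, x_{v,w})$ to $\pi'$ yields $\bar\pi'$, so Lemma~\ref{red costs erben} gives
\[
z_{\bar\pi'}\bigl(x_{v,\pi(v)}, v\bigr) \;=\; z_{\pi'}\bigl(x_{v,w}, y_{v,w}\bigr) \;>\; 0,
\]
so $e := (x_{v,\pi(v)}, v)$ is improving for $\bar\pi'$.

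For the second claim, fix any policy $\tilde\tau$ reached during $\textsc{PolicyIteration}(\mdz, \bar\pi')$ before $e$ or a forbidden switch $(x_{u,\cdot}, y_{u,\cdot})$ with $\mathcal{N}_V(u) > \mathcal{N}_V(v)$ is applied. The only other outgoing edge at $x_{v,\pi(v)}$ is $(x_{v,\pi(v)}, y_{v,\pi(v)})$, and it can only be deactivated by applying $e$ itself. Hence $(x_{v,\pi(v)}, y_{v,\pi(v)})$ remains active in $\tilde\tau$, and a Bellman computation at $y_{v,\pi(v)}$ yields
\[
z_{\tilde\tau}(e) \;=\; p_v\bigl[\Valsolo_{\tilde\tau}(v) - \Valsolo_{\tilde\tau}(\pi(v)) - r_{\mbl}((v,\pi(v)))\bigr],
\]
so the remaining task is to show that this expression stays positive.

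I would proceed by induction on the number of switches applied since $\bar\pi'$; the base case is handled above. For the step, consider a non-$e$, non-forbidden switch $\sigma$ applied to $\tilde\tau$ to obtain $\tilde\tau^\sigma$. By assumption, $\sigma$ has one of the forms $(u, x_{u,\cdot})$ or $(x_{u,\cdot}, u)$ for some $u \in V_{\mbl} \setminus \{s\}$, or $(x_{u,\cdot}, y_{u,\cdot})$ with $\mathcal{N}_V(u) \leq \mathcal{N}_V(v)$. For each case I would track the induced changes in $\Valsolo(v)$ and $\Valsolo(\pi(v))$ via the Bellman equations at the affected vertex together with the monotonicity of vertex values guaranteed by Theorem~\ref{thm: PI visits only wu}; the latter already implies $\Valsolo_{\tilde\tau^\sigma}(v) \geq \Valsolo_{\tilde\tau}(v)$, so the only risk is that $\Valsolo(\pi(v))$ grows faster than $\Valsolo(v)$.

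The main obstacle is precisely to rule out this faster growth. The key structural observation is that $\bar\pi'$ routes $v$ through $w$, so any vertex $u$ that lies on $\pi(v)$'s walk but not on $v$'s walk must sit strictly ``above'' the current branching point in the $\mbl$-hierarchy, i.e.\ $\mathcal{N}_V(u) > \mathcal{N}_V(v)$. The only switches able to increase $\Valsolo(u)$ for such $u$ by a non-trivial amount are the $(x_{u,\cdot}, y_{u,\cdot})$-type switches at $u$, which are exactly the forbidden ones. Thus every allowed switch either affects both $\Valsolo(v)$ and $\Valsolo(\pi(v))$ by the same amount (when $u$ lies on a shared suffix of the walks) or only increases $\Valsolo(v)$, so the bracketed quantity is non-decreasing along the execution, and $z_{\tilde\tau}(e) > 0$ is preserved until either $e$ or a forbidden switch is applied.
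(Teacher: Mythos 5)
Your first claim and the reduced-cost formula $z_{\tilde\tau}(e)=p_v[\Valsolo_{\tilde\tau}(v)-\Valsolo_{\tilde\tau}(\pi(v))-r_{\mbl}((v,\pi(v)))]$ are correct and match the paper's approach, but your treatment of the ``remains improving'' part has a genuine gap at exactly the step you flag as the main obstacle.

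The ``key structural observation'' is false. Take $(v,w)=\enter(\ell)$ at a point where $\board(\ell)$ was previously active (this occurs, e.g., in the third canonical phase whenever $x_{\ell+1}=0$ and $\ell\leq\m(x)$, as spelled out in Lemma~\ref{Bland applies odd can phases}). Then $\pi(v)=t$, and the walk from $\pi(v)$ runs through $t,a_1,b_1,\dots$, all of which have strictly smaller $\mathcal{N}_V$-number than $v=a_\ell$ and none of which lie on $v$'s walk, which runs only upward through $b_\ell,b_{\ell+1},\dots$. The same problem arises whenever $\pi(v)$'s walk boards $t$ at some point (e.g.\ $(v,w)=\stay(\ell-1)$ with $\pi(v)=a_\ell$ and $\board(\ell)$ active). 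In addition, the claim that ``the only switches able to increase $\Valsolo(u)$ for such $u$'' are $(x_{u,\cdot},y_{u,\cdot})$-switches at $u$ itself is not correct: $\Valsolo(u)$ also increases whenever a downstream vertex on $u$'s walk has its value increased, and if that walk boards $t$, switches at $t$ (which are always allowed since $\mathcal{N}_V(t)=1$) will propagate into $\Valsolo(u)$.

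As a consequence, the conclusion ``the bracketed quantity is non-decreasing'' is false. Continuing the example above: after $\enter(\ell)$ enters level $\ell$ the algorithm may legitimately apply the counterpart of $\travel(\ell)$ (a switch at $t$, which is allowed), and this increases $\Valsolo(t)=\Valsolo(\pi(v))$ without changing $\Valsolo(v)=\Valsolo(a_\ell)$ (whose walk never visits $t$). The bracketed quantity does decrease here --- it just happens to stay positive because $r_{\mbl}(\board(\ell))=-2^\ell+1.25$ dominates everything obtainable from the first $\ell-1$ levels. That is the kind of quantitative, case-specific argument the paper actually makes: it performs a case analysis on which switch $(v,w)$ is (one of $\enter(1)$, $\travel(i)$, $\leave(i)$, $\stay(\ell-1)$, $\enter(\ell)$, $\board(j)$, $\skipp(\ell-1)$, $\stay(j)$), and in each case uses the concrete structure of the visited policies (drawing on Observations~\ref{red costs in ai}, \ref{only one point improving}, and the structural facts established inside the proof of Lemma~\ref{Bland applies odd can phases}) to argue the reduced cost of $e$ stays positive. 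Your proof also omits the preliminary observation, needed even to restrict attention to the allowed switch set, that every vertex $u$ succeeding $v$ in $\mathcal{N}_V$ remains oriented in $\bar\pi'$, so the first improving switch at such a $u$ is forced to be of the $(x_{u,\cdot},y_{u,\cdot})$-form (cf.\ Observation~\ref{only xy improving for twin}). Without the case analysis, your semantic argument does not close.
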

\begin{proof}
	The policy~$\pi$ is weak unichain due to Theorem~\ref{thm: PI visits only wu}.
	Further, according to Lemma~\ref{reduced costs Bl Dz}, the edge~$(x_{v,w},y_{v,w})$ is improving for~$\pi'$.
	Thus, Lemma~\ref{red costs erben} yields that~$(x_{v,\pi(v)},v)$ is improving for~$\bar\pi'$, which concludes the first part of the proof.
	
	Note that every vertex~$u$ succeeding~$v$ in~$\mathcal{N}_V$ is~$\pi(u)$-oriented	w.r.t.\ $\bar\pi'$.
	Therefore, as in Observation~\ref{only xy improving for twin}, the first improving switch getting applied by~$\textsc{PolicyIteration}(\mdz,\bar\pi')$ and belonging to a vertex~$u$ succeeding~$v$ in~$\mathcal{N}_V$ must be of the form~$(x_{u,\cdot},y_{u,\cdot})$.
	
	 We claim that the edge~$(x_{v,\pi(v)},v)$ remains improving until it gets applied or until~$\textsc{PolicyIteration}(\mdz,\bar\pi')$ applies an improving switch belonging to some vertex~$u$ with~$\mathcal{N}_V(u)>\mathcal{N}_V(v)$, which concludes the proof.
	 To verify this claim, we perform a case analysis on the specific improving switch $(v,w)\in E_{\mbl}$ that gets applied to $\pi$.
	
	Assume that~$(v,w)=\enter(1)$ is the improving switch that \textsc{Bland} applies, by Lemma~\ref{Bland applies even can phases}, to the canonical policy~$\pi$ for any even~$x\in[2^n-2]_0$.
    If~$x=0$, then~$(x_{a_1,\pi(a_1)},a_1)$ is the only improving switch for~$\bar{\pi}'$ that belongs to~$a_1$ or~$t$.
    Therefore, either~$(x_{a_1,\pi(a_1)},a_1)$ or a switch belonging to a vertex succeeding~$a_1$ in~$\mathcal{N}_V$ gets applied to~$\bar{\pi}'$.
    If~$x\neq0$, then, by Lemma~\ref{Bland applies even can phases} and Lemma~\ref{only one point improving}, the edge~$\travel(1)$ is the only improving switch in $t$ after the application of~$\enter(1)$.
    Therefore, it remains more beneficial to enter level 1 than to skip it or to board~$t$ from~$a_1$ as long as no switch in a vertex succeeding~$a_1$ in~$\mathcal{N}_V$ gets applied.
    Hence, the claim holds in this first case that~$(v,w)=\enter(1)$.
    
    Assume now that~$(v,w)=\travel(i)$ for some~$i\in[n]$.
    Then, Observation~\ref{only one point improving} and Lemma~\ref{reduced costs Bl Dz} yield that~$(x_{t,\pi(t)},t)$ is the only improving switch belonging to~$t$.
    Since~$t$ is the first vertex in~$\mathcal{N}_V$, this verifies the claim.
    
    Recall that, by Lemma~\ref{Bland applies even can phases},~$\enter(1)$ and~$\travel(1)$ are the only improving switches that might get applied during the transition from~$\pi_x$ to~$\pi_{x+1}$ for even~$x$.
    Hence, it now suffices to prove the claim for the case that~$(v,w)$ is part of the canonical phases w.r.t.\ some odd~$x\in[2^n-3]$ with least unset bit~$\ell \coloneqq \ell_0(x)>1$.
    The following analysis relies on the structural insights from the proof of Lemma~\ref{Bland applies odd can phases}.

    If~$(v,w)=\leave(i)$ is the switch from the first or the last phase, then~$\enter(i+1)$ and~$\stay(i)$ are active in~$\pi$.
    Provided that level~$i+1$ gets entered, leaving level~$i$ is more beneficial than staying in it, which verifies the claim in this case.

    If~$(v,w)=\stay(\ell-1)$, then~$(x_{v,\pi(v)},v)$ is the only improving switch for$\bar\pi'$ in the first~$\ell-1$ levels, which again validates the claim.

    If~$(v,w)=\enter(\ell)$, then Observation~\ref{red costs in ai} implies that~$\skipp(\ell)$ and~$\board(\ell)$ are not improving for the policy~$\pi^{(v,w)}$, i.e.,~$(x_{a_\ell,\pi(a_\ell)},a_\ell)$ is the only improving switch belonging to~$a_\ell$.
    Since entering level~$\ell$ is more beneficial than entering all of the first~$\ell-1$ levels, this holds true as long as the algorithm only applies switches in levels below~$\ell$.
    Thus, the claim holds in this case.

    If~$(v,w)=\board(j)$ for some~$j\in[\ell-2]$, then~$(x_{v,\pi(v)},v)$ is the only improving switch for$\bar\pi'$ in the first~$j$ levels, verifying the claim.

    If~$(v,w)=\skipp(\ell-1)$, then~$\stay(\ell-1)$ and~$\enter(\ell)$ are active in~$\pi$.
    As long as these edges stay active, skipping level~$\ell-1$ is more beneficial than entering it.
    Further,~$\travel(\ell)$ is active in~$\pi$, so skipping level~$\ell-1$ remains better than boarding~$t$ from~$a_{\ell-1}$ until a switch in some higher level has made a travel edge~$\travel(k)$ with $k>\ell$ improving. This supports the claim once again.

    Finally, assume~$(v,w)=\stay(j)$ for some~$j\in[\ell-3]$. This switch is improving for~$\pi$ as it allows the policy to reach and enter level~$\ell$ without using~$\board(j+1)$ and~$\travel(\ell)$.
    An edge~$\travel(k)$ with~$k>\ell$ can only become improving after an improving switch in some higher level.
    Therefore, staying in level~$j$ is better than leaving it as long as no switches in levels above~$j$ get applied. 
    Hence, our claim holds in all cases, which concludes the proof.	
\end{proof}

With this, we can show that a certain class of pivot rules, including Bland's, Dantzig's, and the Largest Increase rule, yield an exponential number of improving switches on~$\mdz$.

\begin{restatable}{lemma}{prconditions}\label{pr conditions}
	Assume that~$\textsc{PolicyIteration}(\mdz,\pi_0')$, where~$\pi_0'$ denotes the twin policy of~$\pi_0$, gets applied with a pivot rule that satisfies the following conditions:
	\begin{enumerate}[(a)]
	\item For every improving switch~$(v,w)\in E_{\mbl}$ that~$\textsc{Bland}(\mbl,\pi_0,\order)$ applies to some policy~$\pi$,~$\textsc{PolicyIteration}$ applies~$(x_{v,w},y_{v,w})$ and~$(v,x_{v,w})$ to the twin policy of~$\pi$.
	\item While an edge of the form~$(x_{v,\cdot},v)$ is improving for some~$v\in V_{\mbl}$,~$\textsc{PolicyIteration}$ does not apply an improving switch of the form~$(x_{u,\cdot},y_{u,\cdot})$ with~$\mathcal{N}_V(u)>\mathcal{N}_V(v)$.
	\end{enumerate}
	Then,~$\textsc{PolicyIteration}(\mdz,\pi_0')$ performs~$\Omega(2^n)$ improving switches.
\end{restatable}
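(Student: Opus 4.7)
The plan is to leverage Theorem~\ref{exp auf B}, which guarantees that $\textsc{Bland}(\mbl,\pi_0,\order)$ performs $N=\Omega(2^n)$ improving switches. Let $\pi^{(0)}=\pi_0,\pi^{(1)},\ldots,\pi^{(N)}$ denote Bland's sequence of visited policies, let $(v_i,w_i)\in E_{\mbl}$ be the switch applied to $\pi^{(i)}$, and let $\pi_i'$ denote the twin policy of $\pi^{(i)}$. Since distinct Bland policies disagree on at least one agent vertex's target, the twins $\pi_0',\ldots,\pi_N'$ are pairwise distinct.

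The main step is to show by induction on $i\in[N]_0$ that $\textsc{PolicyIteration}(\mdz,\pi_0')$ visits $\pi_i'$. The base case is immediate since $\pi_0'$ is the initial policy. For the inductive step, assuming $\pi_i'$ has been reached, condition~(a) forces the pivot rule to apply $(x_{v_i,w_i},y_{v_i,w_i})$ and $(v_i,x_{v_i,w_i})$ starting from $\pi_i'$; Lemmas~\ref{reduced costs Bl Dz} and~\ref{red costs erben} confirm that both are genuinely improving at their respective policies. Lemma~\ref{xv before succeeding} then identifies $(x_{v_i,\pi^{(i)}(v_i)},v_i)$ as improving after these two switches, and condition~(b) forbids applying any $(x_{u,\cdot},y_{u,\cdot})$ with $\mathcal{N}_V(u)>\mathcal{N}_V(v_i)$ until this third switch is carried out. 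The combined effect of the three switches is precisely to flip $v_i$'s orientation from $\pi^{(i)}(v_i)$ to $w_i=\pi^{(i+1)}(v_i)$, while leaving every other vertex's orientation intact, so the resulting policy is exactly $\pi_{i+1}'$.

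Granted the inductive claim, the counting is straightforward: for each $i\in\{0,1,\ldots,N-1\}$, condition~(a) contributes two distinct improving-switch applications at $\pi_i'$. Since policy iteration visits no policy twice (by monotonicity of vertex values), these $2N$ applications occupy $2N$ distinct iterations of $\textsc{PolicyIteration}(\mdz,\pi_0')$, yielding the claimed $\Omega(2^n)$ lower bound.

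I expect the main obstacle to be making the inductive step rigorous, in particular ruling out that the pivot rule inserts stray improving switches between the Bland-corresponding pair and the orientation-restoring third switch. Condition~(b) blocks advancement to higher-indexed $(x_{u,\cdot},y_{u,\cdot})$ edges while the orientation at $v_i$ is unresolved, and any other switch the pivot rule could legally take must belong to a vertex with $\mathcal{N}_V$-index at most $\mathcal{N}_V(v_i)$; careful bookkeeping should then show that such intermediate switches either do not exist or are themselves the already-enforced Bland-corresponding switches from earlier inductive steps, so the trajectory remains pinned to the intended sequence of twin policies.
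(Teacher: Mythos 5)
Your overall skeleton matches the paper's: both proofs aim to show that $\textsc{PolicyIteration}(\mdz,\pi_0')$ effectively visits the twin policy of every policy in the Bland trajectory on $\mbl$, and then invoke Theorem~\ref{exp auf B} to conclude an exponential count. However, the inductive step as you state it is not correct, and the "careful bookkeeping" you hope will close the gap is not the right repair.

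The issue is that the literal trajectory of $\textsc{PolicyIteration}(\mdz,\pi_0')$ need not visit $\pi_{i+1}'$. After the two condition~(a) switches at $\pi_i'$, the third, orientation-restoring switch $(x_{v_i,\pi^{(i)}(v_i)},v_i)$ is improving, and Lemma~\ref{xv before succeeding} together with condition~(b) guarantees that it is \emph{eventually} applied. But nothing in conditions~(a) or~(b) prevents other improving switches from being interleaved before it: condition~(b) only excludes switches of the form $(x_{u,\cdot},y_{u,\cdot})$ with $\mathcal{N}_V(u)>\mathcal{N}_V(v_i)$, and your suggestion that any remaining candidates ``are themselves the already-enforced Bland-corresponding switches from earlier inductive steps'' does not hold: those earlier steps are already completed, so the candidates would be genuinely new interleaved switches. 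Consequently the trajectory may leave the intended sequence of twin policies, and without $\pi_{i+1}'$ being reached, condition~(a) gives you nothing for step $i+1$ — the induction stalls.

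The paper closes this gap with a rearrangement argument that your proposal lacks. The crucial structural observation is that the vertex $x_{v,\pi(v)}$ has a unique ingoing edge, namely $(v,x_{v,\pi(v)})$, and after the two condition-(a) switches this edge is inactive and never becomes improving (its reduced costs are negative before the third switch and zero afterwards). Hence the value of $x_{v,\pi(v)}$ is invisible to every other reduced cost in the process, so applying $(x_{v,\pi(v)},v)$ \emph{immediately} at $\bar\pi'$, instead of at whatever later point the pivot rule happens to choose, does not alter any subsequent decision of the algorithm — only the order of the switches changes, not their multiset. This lets one pass to an equal-length trajectory that does visit every twin policy, and the count transfers back. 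Without this commutativity argument, the induction you set up cannot be completed, so this is a genuine missing idea rather than a routine bookkeeping detail.
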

\begin{proof}
Let~$\pi$ be a policy for~$\mbl$ occurring during the execution of~$\textsc{Bland}(\mbl,\pi_0,\order)$, where we allow~$\pi=\pi_0$, and let~$(v,w)\in E_{\mbl}$ denote the switch that \textsc{Bland} applies to~$\pi$.
    Let~$\pi'$ be the twin policy \mbox{of~$\pi$}, and let~$\tilde{\pi}=\pi^{(v,w)}$.  
    
    By condition (a),~$\textsc{PolicyIteration}$ applies~$(x_{v,w},y_{v,w})$ and~$(v,x_{v,w})$ to~$\pi'$.
    Denote the resulting policy by~$\bar\pi'$.
    
    According to Lemma~\ref{xv before succeeding}, the edge~$(x_{v,\pi(v)},v)$ now stays improving until it gets applied as an improving switch or until an improving switch of the form~$(x_{u,\cdot},y_{u,\cdot})$ with~$\mathcal{N}_V(u)>\mathcal{N}_V(v)$ gets applied.
    With condition (b), this yields that~$(x_{v,\pi(v)},v)$ gets applied by \textsc{PolicyIteration} at some point, and that it is constantly improving until then.

    Note that, as long as~$(v,x_{v,\pi(v)})$ is inactive, the policy's choice in~$x_{v,\pi(v)}$ only affects the reduced costs of its unique incidental edge~$(v,x_{v,\pi(v)})$.
    This edge is not active in~$\bar\pi'$ and is not improving until the application of~$(x_{v,\pi(v)},v)$.
    Therefore, if we were to force the algorithm to apply~$(x_{v,\pi(v)},v)$ to~$\bar\pi'$, this would not alter the remaining behavior of the algorithm.
    The policy resulting from this forced switch is the twin policy of~$\Tilde{\pi}$.

    Hence, without changing the total number of applied improving switches (we only rearrange them), we can assume that~$\textsc{PolicyIteration}(\mdz,\pi_0')$ visits the twin policy of every policy visited by~$\textsc{Bland}(\mbl,\pi_0,\order)$.
    By Theorem \ref{exp auf B}, this yields that the algorithm needs to perform an exponential number of improving switches, which concludes the proof.
\end{proof}

Now it suffices to check the conditions given in Lemma~\ref{pr conditions} for each pivot rule.

\begin{proposition}\label{Bland exp on D}
    Let~$\pi_0'$ denote the twin policy of~$\pi_0$.
    Then,~$\textsc{Bland}(\mdz,\pi_0',\dorder)$ performs~$\Omega(2^n)$ improving switches.
\end{proposition}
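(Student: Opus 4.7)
The plan is to invoke Lemma~\ref{pr conditions}, so I need to verify its conditions~(a) and~(b) for $\textsc{Bland}$ with the numbering $\dorder$ and for the initial policy $\pi_0'$.

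Condition~(b) should follow directly from the construction of $\dorder$: the block of edges of the form $(x_{u,\cdot},u)$ sits at the very beginning of $\dorder$. Hence, as soon as any such edge is improving, Bland selects an edge from this initial block rather than any edge of the form $(x_{u,\cdot},y_{u,\cdot})$, regardless of the vertex numbering $\mathcal{N}_V$.

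For condition~(a), let $\pi$ be a policy visited by $\textsc{Bland}(\mbl,\pi_0,\order)$, let $(v,w)\in E_{\mbl}$ be the improving switch that Bland applies to $\pi$, and let $\pi'$ be the twin policy of $\pi$. I would argue in two steps. First, by Observation~\ref{only xy improving for twin} only edges of the form $(x_{v',w'},y_{v',w'})$ can be improving for $\pi'$, and by Lemma~\ref{reduced costs Bl Dz} such an edge is improving iff $(v',w')$ is improving for $\pi$; since $\dorder$ preserves the relative $\order$-order between the replacement pairs associated to distinct edges of $E_{\mbl}$, the minimum-$\dorder$ improving edge is $(x_{v,w},y_{v,w})$, so Bland applies it first. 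Second, the resulting policy differs from $\pi'$ only in the value of $x_{v,w}$, so $(v,x_{v,w})$ becomes improving (with the same reduced cost that $(x_{v,w},y_{v,w})$ had for $\pi'$, by the computation in Lemma~\ref{red costs erben}), while $(x_{v,\pi(v)},v)$ remains non-improving because $v$ is still $\pi(v)$-oriented and this edge has reduced cost zero. Among the improving edges of the new policy, no $(x_{\cdot,\cdot},\cdot)$-block edge is present, and $(v,x_{v,w})$ is $\dorder$-minimal: it inherits the position of $(v,w)$ in $\order$, whereas every other improving $(x_{v',w'},y_{v',w'})$ corresponds to an edge $(v',w')$ with larger $\order$-value. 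So Bland next applies $(v,x_{v,w})$.

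The main technical point is this second step: one has to verify that $(x_{v,\pi(v)},v)$ is \emph{not} yet improving immediately after the first switch (otherwise Bland might be redirected into the $(x_{\cdot,\cdot},\cdot)$-block and $(v,x_{v,w})$ could be delayed indefinitely) and that no other improving edge precedes $(v,x_{v,w})$ in $\dorder$. Both facts reduce to the orientation structure of the twin policy $\pi'$ and the order-preserving construction of $\dorder$. Once conditions~(a) and~(b) are established, Lemma~\ref{pr conditions} immediately yields the claimed $\Omega(2^n)$ lower bound.
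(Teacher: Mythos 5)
Your proposal is correct and follows essentially the same route as the paper's proof: verify conditions (a) and (b) of Lemma~\ref{pr conditions} via Observation~\ref{only xy improving for twin}, Lemma~\ref{reduced costs Bl Dz}, Lemma~\ref{red costs erben}, and the order-preserving construction of $\dorder$. One small wording imprecision: after applying $(x_{v,w},y_{v,w})$, the vertex $v$ is no longer $\pi(v)$-oriented in the sense of the paper's definition (since $(x_{v,w},v)$ is no longer active); nonetheless your conclusion that $z((x_{v,\pi(v)},v))=0$ is correct, because $(v,x_{v,\pi(v)})$ and $(x_{v,\pi(v)},y_{v,\pi(v)})$ remain active and the values of $v$ and $x_{v,\pi(v)}$ are unchanged.
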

\begin{proof}
	We check the two conditions from Lemma~\ref{pr conditions}.
    For condition (a), let~$\pi$ be a policy for~$\mbl$ visited by~$\textsc{Bland}(\mbl,\pi_0,\order)$, including the case~$\pi=\pi_0$, and let~$\pi'$ be the twin policy of~$\pi$.
    Assume that \textsc{Bland} applies the improving switch~$(v,w)\in E_{\mbl}$ to~$\pi$.
    
	By Observation~\ref{only xy improving for twin}, all improving switches for~$\pi'$ are of the form~$(x_\cdot,y_\cdot)$.
    According to Lemma~\ref{reduced costs Bl Dz}, the edge~$(x_{v,w},y_{v,w})$ is improving for~$\pi'$. 
    As~$(v,w)$ is the improving switch for~$\pi$ with the smallest Bland number in~$\order$, we know that, by construction of~$\dorder$, the algorithm applies the switch~$(x_{v,w},y_{v,w})$ to~$\pi'$.
   
    Further, since~$\pi$ is weak unichain due to Theorem~\ref{thm: PI visits only wu}, Lemma~\ref{red costs erben} yields that~$(v,x_{v,w})$ is improving after this switch.
    As it is the successor of~$(x_{v,w},y_{v,w})$ in~$\dorder$ and as no other egde became improving due to the first switch, the algorithm applies~$(v,x_{v,w})$ next.
    That is, Bland's rule satisfies condition (a).
    
    Additionally, condition (b) holds since the edge~$(x_{v,\pi(v)},v)$ precedes any switch of the form~$(x_{u,\cdot},y_{u,\cdot})$ with~$\mathcal{N}_V(u)>\mathcal{N}_V(v)$ in the Bland numbering~$\dorder$.    
\end{proof}

As motivated above, the choice of the probabilities~$(p_v)_{v\in V_{\mbl}\setminus\{s\}}$ in the following theorem yields that~\textsc{Dantzig} prefers improving switches that belong to vertices appearing early in the vertex numbering~$\mathcal{N}_V$.

\begin{restatable}{theorem}{thmdantzigexp}\label{Dantzig exp on D}
    Let~$p_v=2^{-\mathcal{N}_V(v)(n+5)}$ for all~$v\in V_{\mbl}\setminus\{s\}$, and let~$\pi_0'$ denote the twin policy of~$\pi_0$. Then,~$\textsc{Dantzig}(\mdz,\pi_0')$ performs~$\Omega(2^n)$ improving switches. 
\end{restatable}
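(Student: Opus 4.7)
The plan is to verify the two hypotheses of Lemma~\ref{pr conditions} for Dantzig's rule on $\mdz$. The probabilities $p_v = 2^{-\mathcal{N}_V(v)(n+5)}$ are engineered so that reduced costs are stratified across vertices: by Lemma~\ref{red costs bounded in D}, every improving switch $e \in \B(v)$ at a weak unichain policy with non-negative vertex values satisfies $p_v/4 \leq z_\pi(e) \leq p_v \cdot 2^{n+2}$. Since $\mathcal{N}_V(u) > \mathcal{N}_V(v)$ implies $p_v / p_u \geq 2^{n+5} > 2^{n+4} = 2^{n+2}/(1/4)$, every improving switch in $\B(v)$ strictly dominates every improving switch in $\B(u)$. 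Moreover, $\textsc{Dantzig}(\mdz, \pi_0')$ only visits such policies, since $\pi_0'$ has all zero values and values cannot decrease under improving switches (Theorem~\ref{thm: PI visits only wu}). Consequently, at every step Dantzig chooses an improving switch belonging to the $\mathcal{N}_V$-earliest vertex admitting one. Condition (b) of Lemma~\ref{pr conditions} is immediate: whenever $(x_{v, \cdot}, v)$ is improving, it belongs to $\B(v)$ and dominates every $(x_{u, \cdot}, y_{u, \cdot})$ with $\mathcal{N}_V(u) > \mathcal{N}_V(v)$.

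For condition (a), consider the switch $(v, w) \in E_{\mbl}$ applied by $\textsc{Bland}(\mbl, \pi_0, \order)$ to some visited policy $\pi$, and let $\pi'$ denote its twin policy. By Observation~\ref{only xy improving for twin} and Lemma~\ref{reduced costs Bl Dz}, the improving switches of $\pi'$ are exactly the twins $(x_{a, b}, y_{a, b})$ of improving $\mbl$-edges $(a, b)$. Since $\order$ lists edges by source vertex in the same order $(t, a_1, b_1, \ldots, a_n, b_n)$ as $\mathcal{N}_V$, the vertex $v$ is the $\mathcal{N}_V$-earliest with an improving out-edge in $\mbl$, so by the stratification Dantzig picks some $(x_{v, w'}, y_{v, w'}) \in \B(v)$. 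Lemma~\ref{reduced costs Bl Dz} reduces this choice to picking an improving $\mbl$-neighbor $w'$ of $v$ maximizing $z_\pi(v, w')$. I would then show $w' = w$ by a case analysis on the current Bland step: Observation~\ref{red costs in ai} handles $a_i$ when Bland picks $\enter$ or $\skipp$; Observation~\ref{only one point improving} handles $t$; in the remaining cases (Bland picks $\board(j)$ at $a_j$ during Phase 4, or $\stay/\leave$ at $b_i$ in any phase) the structural analysis of the canonical phases in Section~\ref{sec:Bland} shows that Bland's pick is in fact the unique improving edge at that vertex, so the maximizer trivially matches.

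After $(x_{v, w}, y_{v, w})$ is applied, Lemma~\ref{red costs erben} gives $(v, x_{v, w})$ reduced cost $p_v \cdot z_\pi(v, w)$, which dominates every improving switch in a later $\B(u)$ by the stratification. The remaining candidates within $\B(v)$ are the edges $(x_{v, u}, y_{v, u})$ with $u \neq w, \pi(v)$ and $(v, u)$ improving for $\pi$; their reduced costs are unchanged (the first switch leaves the values of original vertices unaffected) and equal $p_v \cdot z_\pi(v, u) \leq p_v \cdot z_\pi(v, w)$ because $w$ was the maximizer at $v$. Thus Dantzig applies $(v, x_{v, w})$ next, establishing condition (a); Lemma~\ref{pr conditions} then yields the $\Omega(2^n)$ bound. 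The main obstacle is the within-vertex uniqueness argument at $a_j$ during Phase 4 and at $b_i$, where no existing observation directly applies; but these cases can be settled by revisiting the structural information about the policies visited by Bland derived in the proofs of Lemmas~\ref{Bland applies even can phases} and~\ref{Bland applies odd can phases}.
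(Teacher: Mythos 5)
Your proposal is correct and follows essentially the same route as the paper's proof: verify conditions (a) and (b) of Lemma~\ref{pr conditions} for Dantzig, use the stratification from Lemma~\ref{red costs bounded in D} to reduce the choice to the $\mathcal N_V$-earliest vertex~$v$, then resolve the within-vertex choice by case analysis, and finish with Lemma~\ref{red costs erben} to show~$(v,x_{v,w})$ is applied next. One small comparative remark: you correctly flag that Observation~\ref{red costs in ai} does not cover the case where Bland applies~$\board(j)$, a point the paper glosses over in its ``Now assume~$v=a_i$\dots'' paragraph (which treats only~$\enter$ and~$\skipp$ and then concludes ``Therefore\dots'' without addressing~$\board$). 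Your proposed resolution — that during Phase~4 the edge~$\board(j)$ is the \emph{unique} improving switch belonging to~$a_j$, which follows from the structural account in the proof of Lemma~\ref{Bland applies odd can phases} — is exactly the fact the paper relies on implicitly, so the argument goes through. Your worry about~$b_i$, on the other hand, is unnecessary: a vertex~$b_i$ has exactly two outgoing edges in~$\mbl$, one of which is always active, so at most one edge in~$\B(b_i)$ of the form~$(x_{b_i,\cdot},y_{b_i,\cdot})$ can be improving, and the maximizer is automatic by counting (this is how the paper disposes of that case). With those two points settled, the inequality chain you write for the post-switch comparison (that every other candidate in~$\B(v)$ has reduced cost at most~$p_v\cdot z_\pi(v,w)$, which is in fact strict in every case by the same case analysis) completes condition~(a) exactly as in the paper.
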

\begin{proof}
	We check the two conditions from Lemma~\ref{pr conditions}.
    For condition (b), we compute that the choice of the probabilities~$p_v$ yields that~\textsc{Dantzig} prefers improving switches belonging to vertices with a small vertex number.
	
	Let~$u,v \in V_{\mbl}\setminus\{s\}$ with~$\mathcal{N}_V(u)>\mathcal{N}_V(v)$. 
	Let further~$e_u\in B(u)$ and~$e_v\in B(v)$ be improving switches for some policy~$\pi$ for~$\mdz$, which gets visited by~$\textsc{PolicyIteration}(\mdz,\pi_0')$.
	Then,~$\pi$ is weak unichain and only induces non-negative vertex values, so Lemma~\ref{red costs bounded in D} yields
	\[
	z_{\pi}(e_v)\geq p_v\cdot 0.25=2^{-\mathcal{N}_V(v)(n+5)-2}\geq 2^{-\mathcal{N}_V(u)(n+5)+(n+5)-2}=p_u\cdot 2^{(n+3)}> z_\pi(e_u).
	\]
	Hence, Dantzig's rule prefers switches belonging to~$v$ over those belonging to~$u$, so it satisfies condition (b).
	
	For condition (a), let~$\pi$ be a policy for~$\mbl$ visited by~$\textsc{Bland}(\mbl,\pi_0,\order)$, including the case~$\pi=\pi_0$, and let~$(v,w)\in E_{\mbl}$ denote the switch that \textsc{Bland} applies to~$\pi$.
    Let~$\pi'$ be the twin policy \mbox{of~$\pi$}.    

	By Observation~\ref{only xy improving for twin}, all improving switches for~$\pi'$ are of the form~$(x_{\cdot},y_{\cdot})$.
    By construction of~$\dorder$, we know that~$\textsc{Bland}$ prefers those switches~$(x_{\cdot},y_{\cdot})$ that belong to vertices with a small vertex number.
    In the proof of Proposition~\ref{Bland exp on D}, we see that~$\textsc{Bland}(\mdz,\pi',\dorder)$ applies~$(x_{v,w},y_{v,w})$ to~$\pi'$.
    Since \textsc{Dantzig} also prefers switches belonging to vertices with a small vertex number, we conclude that~$\textsc{Dantzig}(\mdz,\pi_0',\dorder)$ applies an improving switch to~$\pi'$ that belongs to~$v$.
    However, there might be multiple such switches.

    Recall that all improving switches for~$\pi'$ are of the form~$(x_{\cdot},y_{\cdot})$.
    If~$v=b_i$ for some~$i\in[n]$, then only two of these (possibly improving) edges belong to~$v$, one of which is active in~$\pi'$.
    Therefore, in this case, \textsc{Dantzig} applies the improving switch~$(x_{v,w},y_{v,w})$.

    Now assume~$v=a_i$ for some~$i\in[n]$.
    Then, Observation~\ref{red costs in ai} and Lemma~\ref{reduced costs Bl Dz} yield 
    \[
    z_{\pi'}(x_{a_i,b_i},y_{a_i,b_i}) > \max\{z_{\pi'}(x_{a_i,a_{i+1}},y_{a_i,a_{i+1}}), z_{\pi'}(x_{a_i,t},y_{a_i,t})\}
    \]
    if~$(v,w)=\enter(i)$, and
    \[
    z_{\pi'}(x_{a_i,a_{i+1}},y_{a_i,a_{i+1}}) > z_{\pi'}(x_{a_i,t},y_{a_i,t})
    \]
    if~$(v,w)=\skipp(i)$.
    Therefore, the edge~$(x_{v,w},y_{v,w})$ has higher reduced costs than the other edges that belong to~$v$, so \textsc{Dantzig} applies it to~$\pi'$.

    Finally, if~$v=t$, then Observation~\ref{only one point improving} and Lemma~\ref{reduced costs Bl Dz} yield that~$(x_{v,w},y_{v,w})$ is the only improving switch that belongs to~$t$.
    Thus, it gets applied by \textsc{Dantzig}.

    We conclude that, in all cases, \textsc{Dantzig} applies the switch~$(x_{v,w},y_{v,w})$ to~$\pi'$, which is the unique edge with the highest reduced costs.
   According to Lemma~\ref{red costs erben}, the edge~$(v,x_{v,w})$ has now the same reduced costs as~$(x_{v,w},y_{v,w})$ had before its application.
   Since~$(v,x_{v,w})$ is the only edge that became improving during the last switch, \textsc{Dantzig} applies this edge next.
   Therefore, Dantzig's rule also satisfies condition (a), which concludes the proof.
\end{proof}

\begin{algorithm}
    \caption{$\textsc{LargestIncrease}(G,\pi)$}\label{alg:LI}
    \vgap
    \textbf{input: }Markov decision process $G$, weak unichain policy $\pi$ for $G$
    \vgap
    \hrule  
    \vgap
    \While{$\pi$ admits an improving switch}
    {
       ~$\bar s\gets\text{an improving switch $s$ for $\pi$ maximizing }\sum_{v\in V^A_n}\Valsolo_{\pi^s}(v)$\;
        \vgap
       ~$\pi\gets \nwstr{\bar s}$
    }
    \textbf{return}~$\pi$
\end{algorithm}

Finally, we turn to policy iteration with the Largest Increase pivot rule, cf. Algorithm~\ref{alg:LI}.
In the most general sense, consider an arbitrary improving switch~$s=(v,w)$ for some policy~$\pi$.
Assume that no ingoing edges of~$v$ are active in~$\pi$.
Then, the reduced costs of~$s$ coincide with the increase of the sum over all vertex values, that is, we obtain the equality~$z_\pi(s)=\sum_{v\in V^A_n}\Valsolo_{\pi^s}(v)-\sum_{v\in V^A_n}\Valsolo_{\pi}(v)$.
Further, the induced increase of the sum is always at least as large as the reduced costs.

From this, using the structure of~$\mdz$, we can conclude that \textsc{LargestIncrease} mirrors the behavior of \textsc{Dantzig} if we choose the probabilities~$p_v$ as before.

\begin{restatable}{theorem}{LIexp}\label{LI exp on D}
     Let~$p_v=2^{-\mathcal{N}_V(v)(n+5)}$ for all~$v\in V_{\mbl}\setminus\{s\}$, and let~$\pi_0'$ denote the twin policy of~$\pi_0$. Then,~$\textsc{LargestIncrease}(\mdz,\pi_0')$ performs~$\Omega(2^n)$ improving switches.
\end{restatable}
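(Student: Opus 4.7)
By Lemma~\ref{pr conditions}, it suffices to verify conditions~(a) and~(b) for LargestIncrease on $(\mdz, \pi_0')$. The proof will run in close parallel to Theorem~\ref{Dantzig exp on D}, with the key difference being that we have to compare sum-increases $\Delta(\pi, s) := \sum_{v \in V^A_n}(\Valsolo_{\pi^s}(v) - \Valsolo_{\pi}(v))$ rather than reduced costs. The starting point is the general inequality stated just before the theorem: $\Delta(\pi, s) \geq z_\pi(s)$ always holds, with equality precisely when no ingoing edge of the tail of~$s$ is active in $\pi$.

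I would then split the analysis according to the two shapes of improving switches that actually occur. A switch of the form $(x_{v,w}, y_{v,w})$ is only applied at a policy in which $v$ is oriented to some $u \neq w$, so its tail $x_{v,w}$ has its only ingoing edge $(v, x_{v,w})$ inactive; therefore $\Delta = z = p_v \cdot z_{\pi,\mbl}(v,w)$ by Lemma~\ref{reduced costs Bl Dz}, and in particular $\Delta \leq p_v \cdot 2^{n+2}$ via Lemma~\ref{red costs bounded in D}. A switch of the form $(v, x_{v,w})$ applied right after the corresponding $(x_{v,w}, y_{v,w})$ causes $v$'s value to jump; using the Bellman equation at $v$ together with the geometric self-loop through $y_{v,w}$, one computes the new value $\Valsolo_{\pi^s}(v) = \Valsolo_\pi(w) + r_{\mbl}((v,w))$, so the jump at $v$ alone is $\delta := \Valsolo_\pi(w) + r_{\mbl}((v,w)) - \Valsolo_\pi(v) = z_\pi(s)/p_v$. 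Since Lemma~\ref{red costs bounded in D} forces $p_v \cdot \delta \geq p_v \cdot 0.25$, we obtain $\delta \geq 0.25$ and hence $\Delta \geq 0.25$, independently of the exponentially small $p_v$. The closing edge $(x_{v, \pi(v)}, v)$ arising in condition~(b) is again of the first shape after the previous two switches, so its sum-increase equals its reduced costs $p_v \cdot \delta'$ with $\delta' \geq 0.25$.

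For condition~(a), on a twin policy every improving switch is of the first shape, so LargestIncrease maximizes exactly the same quantity as Dantzig, and the argument of Theorem~\ref{Dantzig exp on D} transfers verbatim to show that the intended $(x_{v,w}, y_{v,w})$ is picked. Immediately afterwards, the new switch $(v, x_{v,w})$ has $\Delta \geq 0.25$, while every competing improving switch is of the first shape at $v$ or at a later vertex $u$ and hence has $\Delta \leq \max\{p_v, p_u\} \cdot 2^{n+2} \leq 2^{-3} < 0.25$, using $\mathcal{N}_V(\cdot) \geq 1$. So LargestIncrease applies $(v, x_{v,w})$ next. For condition~(b), the same $2^{n+5}$ gap between $p_v$ and $p_u$ with $\mathcal{N}_V(u) > \mathcal{N}_V(v)$ used in Theorem~\ref{Dantzig exp on D} gives $p_v \cdot 0.25 > p_u \cdot 2^{n+2}$, so the sum-increase of the closing edge $(x_{v, \pi(v)}, v)$ exceeds that of any competing first-shape switch belonging to a later vertex, which is all that is required.

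The main obstacle is the Bellman-equation computation for the $(v, x_{v,w})$ switch: one must recognize that the self-loop at $v$ through $y_{v,w}$ multiplies the value change at $v$ by the factor $1/p_v$, which is precisely what cancels the $p_v$ factor in the reduced costs and yields the constant lower bound $\delta \geq 0.25$. Once this cancellation is identified, the remaining sum-increase comparisons mirror the Dantzig analysis via the same probability gap.
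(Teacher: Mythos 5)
Your proof is correct and takes essentially the same approach as the paper, namely verifying conditions (a) and (b) of Lemma~\ref{pr conditions} by comparing sum-increases to reduced costs. Your argument that $(v,x_{v,w})$ is picked next, via the constant lower bound $\Delta\geq 0.25$ against the bound $\Delta\leq p_{\cdot}\cdot 2^{n+2}<0.25$ for every $(x_{\cdot},y_{\cdot})$-switch, is a slightly more explicit numerical variant of the paper's argument via Lemma~\ref{red costs erben}; both hinge on the same Bellman-equation cancellation of the factor $p_v$.
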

\begin{proof}
    We check the two conditions from Lemma~\ref{pr conditions}. 
    For condition (a), let~$\pi$ be a policy for~$\mbl$ occurring during the execution of~$\textsc{Bland}(\mbl,\pi_0,\order)$, where we allow~$\pi=\pi_0$, and let~$(v,w)\in E_{\mbl}$ denote the switch that \textsc{Bland} applies to~$\pi$.
    Let~$\pi'$ be the twin policy \mbox{of~$\pi$}.
            
    According to the proof of Theorem~\ref{Dantzig exp on D}, \textsc{Dantzig} applies the improving switches~$(x_{v,w},y_{v,w})$ and~$(v,x_{v,w})$ to~$\pi'$.    
    By Observation~\ref{only xy improving for twin}, all improving switches for~$\pi'$ are of the form~$(x_{\cdot},y_{\cdot})$, where~$\pi'$ does not reach~$x_{\cdot}$ when starting in any other vertex.
    Therefore, since the probabilities~$(p_v)_{v\in V_{\mbl}\setminus\{s\}}$ are chosen as in Theorem~\ref{Dantzig exp on D}, the reduced costs of each of these improving switches coincide with the induced increase of the sum over all vertex values.
    Hence,~\textsc{LargestIncrease} also applies the improving switch~$(x_{v,w},y_{v,w})$ to~$\pi'$.

    This switch only increases the reduced costs of the edge~$(v,x_{v,w})$, which, by Lemma~\ref{red costs erben}, coincide with the previous reduced costs.
    Therefore, the induced increase of the sum over all vertex values is for~$(v,x_{v,w})$ now at least as large as it was for~$(x_{v,w},y_{v,w})$ before.
    Hence, \textsc{LargestIncrease} also applies~$(v,x_{v,w})$ next.
    We conclude that the Largest Increase pivot rule satisfies condition (a).
    
    Note that the reduced costs of the edges from condition (b) again coincide with the induced increase of the sum over all vertex values.
    Further, by the proof of Theorem~\ref{Dantzig exp on D}, we know that Dantzig's rule prefers switches belonging to vertices with a small vertex number.
    Therefore, the Largest Increase rule also satisfies condition (b).
\end{proof}

Note that Theorem~\ref{thm:main} is now a direct consequence of Theorems~\ref{Bland exp on D},~\ref{Dantzig exp on D}, and~\ref{LI exp on D}.
Moreover, we have seen in the proofs of these theorems that Bland's rule, Dantzig's rule, and the Largest Increase rule satisfy the conditions from Lemma~\ref{pr conditions}.
Thus, any combination of these rules also satisfies the conditions, which immediately yields Corollary~\ref{cor:comb}. Finally, Corollary~\ref{cor:main} follows from Theorem~\ref{thm:the connection} together with Obervation~\ref{pi0 is wu for B}, Lemma~\ref{opt for B}, and Observation~\ref{opt for D}.

\newpage

\bibliography{references}

\end{document}